\tikzstyle{mundo}=[circle,draw,inner sep=0pt,minimum size=25pt]
\tikzstyle{mundomedioblanco}=[white,circle,draw,draw opacity=0,inner sep=0pt,minimum size=18pt]
\tikzstyle{mundomedio}=[circle,draw,inner sep=0pt,minimum size=18pt]
\tikzstyle{mundomediogris}=[fill=lightgray,circle,draw,inner sep=0pt,minimum size=18pt]
\tikzstyle{mundoaction}=[circle,draw,inner sep=0pt,minimum size=12pt]
\tikzstyle{conjunto}=[inner sep=0pt,minimum size=16pt]
\tikzstyle{texto}=[inner sep=0pt,minimum size=10pt]
\DeclareMathAlphabet{\mathpzc}{OT1}{pzc}{m}{it} 
\definecolor{TODO_COLOR}{rgb}{1,0.5,0.5}
\newcommand{\union}{\cup}
\newcommand{\Union}{\bigcup}
\newcommand{\inter}{\cap}
\newcommand{\limply}{\rightarrow}
\newcommand{\imp}{\limply}
\renewcommand{\phi}{\varphi}
\newtheorem{theorem}{Theorem}
\newtheorem{proposition}[theorem]{Proposition}
\newtheorem{definition}[theorem]{Definition}
\newtheorem{lemma}[theorem]{Lemma}
\newtheorem{corollary}[theorem]{Corollary}
\newcommand{\agents}{A}
\newcommand{\identity}{I}
\newcommand{\Nat}{\mathbb{N}}
\newcommand{\protocol}{\mathsf{P}}
\newcommand{\CO}{\mathsf{CO}}
\newcommand{\wCO}{\mathsf{wCO}}
\newcommand{\SFS}{\mathsf{ANY}}
\newcommand{\LNS}{\mathsf{LNS}}
\newcommand{\TOK}{\mathsf{TOK}}
\newcommand{\SPI}{\mathsf{SPI}}
\newcommand{\degr}{\mathit{deg}}
\newcommand{\outdeg}{\degr_{\mathsf{out}}}
\newcommand{\indeg}{\degr_{\mathsf{in}}}
\newcommand{\iodegr}{\degr_{\mathsf{io}}}
\newcommand{\extension}[2]{{#1_#2}}
\newcommand{\weg}[1]{}
\begin{document}

\title{Dynamic Gossip}
\author{Hans van Ditmarsch\thanks{LORIA, CNRS -- University of Lorraine, France \& ReLaX UMI 2000, Chennai, India} \and Jan van Eijck\thanks{University of Amsterdam \& CWI, Netherlands} \and Pere Pardo\thanks{Ruhr-Universit\"{a}t Bochum, Germany} \and Rahim Ramezanian\thanks{Sharif University of Technology, Iran} \and Fran\c{c}ois Schwarzentruber\thanks{ENS Rennes, IRISA, France}}
\date{\today}

\maketitle

\begin{abstract}
A gossip protocol is a procedure for spreading secrets among a group of agents, using a connection graph. The goal is for all agents to get to know all secrets, in which case we call the execution of the protocol successful. We consider distributed and dynamic gossip protocols. In distributed gossip the agents themselves instead of a global scheduler determine whom to call. In dynamic gossip not only secrets are exchanged but also telephone numbers (agent identities). This results in increased graph connectivity. We define six such distributed dynamic gossip protocols, and we characterize them in terms of the topology of the graphs on which they are successful, wherein we distinguish strong success (the protocol always terminates, possibly assuming fair scheduling) from weak success (the protocol sometimes terminates). For five of these protocols strong (fair) and weak success are characterized by weakly connected graphs. This result is surprising because the protocols are fairly different. In the sixth protocol an agent may only call another agent if it does not know the other agent's secret. Strong success for this protocol is characterized by graphs for which the set of non-terminal nodes is strongly connected. Weak success for this protocol is characterized by weakly connected graphs satisfying further topological constraints that we define in the paper. One direction of this characterization is surprisingly harder to prove than the other results in this contribution.
\end{abstract}


\section{Introduction}

\paragraph*{Gossip}
Gossip protocols are procedures for spreading secrets among a group of agents in a network wherein the agents are represented by the nodes, and the connections represent the ability of the agents to contact each other. There is a vast literature on gossiping in networks \cite{hedetniemietal:1988,hromkovicetal:2005,eugster,doerr,rajaraman,lewin}, and there are relations to the study of the spreading of epidemics \cite{kermarrecetal:2007}. In the original set-up \cite{tijdeman:1971, hajnal:1972} the network is a complete graph, which means that all agents can call each other, i.e., all agents have the telephone numbers of all other agents, and one of the key questions was to find a minimal sequence of calls to achieve a state where all agents know all secrets. On the assumption that during a call between two agents they exchange all their secrets, in a complete graph with $n > 3$ agents, $2n - 4$ calls are needed for this. Assuming four agents $a,b,c,d$, a call sequence for ensuring that all secrets are shared is $ab;cd;ac;bd$ (where $ab$ represents a call from $a$ to $b$, etc.). The secrets can be shared in $4$ calls. If a fifth agent $e$ is also present, precede this sequence with $ae$, and close off with $ae$: $6$ calls. In general, two extra calls are sufficient for each additional agent, and we have that the number of calls for $n$ agents equals $2(n-4) + 4 = 2n - 4$. Less than $2n-4$ calls is not possible \cite{tijdeman:1971,hurkens:2000}. On other than complete graphs the minimum number of calls needed to distribute all secrets may be larger \cite{harary,golumbic}, where we can distinguish undirected graphs (if I know your number, you know my number) from directed graphs (I may know your number, but you may not know my number). We consider undirected graphs. Instead of {\em exchange} of secrets, also known as {\em push-pull}, another setting is where the caller only {\em informs} the agent called ({\em push}), or only {\em receives} information from the person called ({\em pull}). We only consider push-pull. 

\paragraph*{Distributed gossip}
In the case above, the gossip procedure is regulated by a central authority, but in distributed computing we look for procedures that do not need such outside regulation. Distributed gossip protocols are standard in the networks community \cite{eugster,hromkovicetal:2005,kermarrecetal:2007}, and in an epistemic logical setting distributed gossip protocols were proposed in \cite{maduka-ecai,apt-tark}. Two such protocols are:
\begin{itemize}
\item $\SFS$ \quad
Until every agent knows all secrets, choose different agents $x$ and $y$ such that $x$ knows the number of $y$, and let $x$ call $y$.
\item $\LNS$ \quad 
Until every agent knows all secrets, choose different agents $x$ and $y$ such that $x$ knows the number of $y$ but not the secret of $y$, and let $x$ call $y$. 
\end{itemize}
Agent $x$ knows the number of agent $y$ iff node $y$ is a neighbour/successor of node $x$. In Protocol $\SFS$, {\em A}ny call can be made between neighbouring nodes, whereas in protocol $\LNS$, agent $x$ has to {\em L}earn a {\em N}ew {\em S}ecret in the call. Protocols like $\SFS$  are extremely standard in network theory \cite{kermarrecetal:2007}, protocols like $\LNS$ are found in \cite{maduka-ecai,apt-tark,Haeupler15}.

In this formulation as ``{\em Until (termination condition) is satisfied, choose different agents $x$ and $y$ such that (execution condition) is satisfied, and let $x$ call $y$}'' the distributed nature of the gossip protocol is absent. The distributed nature appears if we describe it as follows. \begin{quote} {\em Each agent $x$ runs the following program: choose another agent $y$ such that (execution condition) is satisfied, and call $y$. The environment runs the following program: until (termination condition) is satisfied, choose an agent $x$ (i.e., choose a program for agent $x$).} \end{quote}
Even this informal description leaves two aspects of gossip protocols that are not distributed, namely the selection of an agent $x$ allowed to make a call, and the part of the termination condition that involves checking whether all agents know all secrets. The selection of an agent (program) $x$ can be thought of as randomly unblocking the telephone of a single agent for an outgoing call. So this is a move of nature. The termination condition is not a distributed feature because we may not assume that an agent knows how many agents there are, therefore it cannot know that it knows all secrets (it cannot determine that the set of secrets it has stored is the set of all secrets). Instead of a protocol with a termination condition we may just as well see this as a protocol reaching a stable network configuration, i.e., a stable distribution of numbers and secrets over agents. Termination is then simply the moment that  stabilization is reached.

In gossip works calls are often made in rounds of parallel calls. This is already found in 1970s classics such as \cite{knoedel:1975} and treated in detail in the survey \cite{hedetniemietal:1988}, in the interpretation wherein disjoint pairs of agents are simultaneously selected to make calls: calls $xy$ and $zw$ can only take place in parallel if $\{x,y\}\inter\{z,w\} = \emptyset$: an agent cannot receive multiple calls. In more recent work involving rounds of calls \cite{eugster,doerr,rajaraman,lewin,kermarrecetal:2007}, a round means that all agents simultaneously make calls, but that agents may receive multiple calls from other agents. In that setting, the agent called gives the agent calling the information available prior to receiving the incoming calls in that round \cite{Haeupler15} (and not `on the go', while receiving more and more incoming calls). This requires all agents to know when a round starts and ends, which is not a distributed feature. Note that it leads to secrets distributions that are impossible in our setting. For example, given three agents $a,b,c$, then after the round consisting of the three calls $ac,bc,cb$, agent $a$ knows the secret of $c$, agent $b$ knows the secret of of $c$, and agent $c$ it the single agent to know all three secrets. Whereas after sequences of calls, the largest number of secrets is always known by at least 2 agents.

Questions about protocol execution length get different answers in distributed gossip. For 4 agents, the already mentioned $ab;cd;ac;bd$ is a length 4 execution sequence of $\LNS$, but $\LNS$ also has an execution $ab;ac;ad;bc;bd;cd$ of length 6. For $n$ agents, any sequence between the minimum of $2n-4$ and the maximum of (all possible pairs) $n(n-1)/2$ can be realized \cite{maduka-ecai}. In a distributed protocol we cannot guarantee any of these in advance. But all the executions of $\LNS$ are finite, whereas the $\SFS$ protocol has infinite executions, such as $ab;ab;\dots$. In the long run this does not matter: the expected execution of $\SFS$ is in the order of $n \log n$ \cite{boydSteele1979,haigh81,moon72}, and this is also the case for $\LNS$ and for yet other protocols considered in this work \cite{DitmarschKS17}.

\paragraph*{Dynamic gossip}
In {\em dynamic gossip} we assume that when a call is established not only secrets are exchanged but also numbers. The structure of the gossip graph constrains what calls can be made, so if also numbers are exchanged in calls, the connectivity of the graph may increase. We restrict our scope by assuming that in each call {\em all} secrets and {\em all} numbers held by the callers are exchanged. Such network changing protocols are a standard feature in work on gossip, epidemics, and resource discovery \cite{rajaraman,eugster,lewin,Haeupler15}. It is then common to consider protocols that investigate either exchange of secrets (or of other message content), or exchange of telephone numbers (or of other identifying information, such as node ID, IP adresses). We do not know of approaches that combine dynamic gossip and distributed gossip for protocols other than making random calls (i.e., protocol $\SFS$). Other works also consider {\em reducing} connectivity \cite{kermarrecetal:2007}, unlike us.

Questions about execution length also get different answers in dynamic gossip, and this is already the case in non-distributed dynamic gossip. An old result is that a circle network of $n \geq 5$ nodes requires at least $2n-3$ calls, as any network not containing a $4$-cycle requires more than $2n-4$ calls \cite{bumby:1981}. For example, given nodes $a,b,c,d,e$, mutually connected in that order and with $e$ connected to $a$, the minimum length of 7 is obtained if we go round until all agents know all secrets: $ab; bc; cd; de; ea; ab; bc$. But in dynamic gossip, 6 calls $ab; cd; ea; de; ac; bc$ are sufficient; it can be easily shown that $2n-4$ is the minimum for any circle. We will only incidentally address execution length in this work, to illustrate the difference between protocols. 

\paragraph*{Strong and weak success}
For an example of distributed dynamic gossip, let us consider three agents only, and that agents initially only know their own secret, and initially at least know their own number. Let us consider again the protocol $\LNS$, however with the change that now also numbers are exchanged in a call.

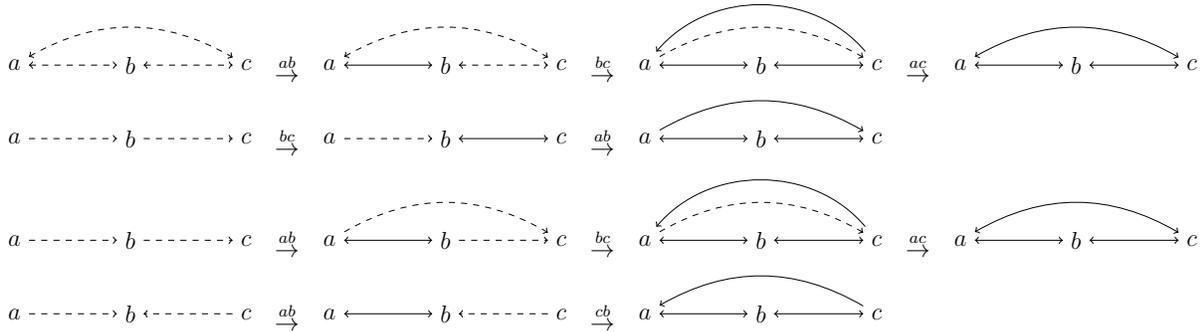
\begin{figure}
\hspace{-.4cm} \scalebox{.77}{
\begin{tabular}{l}
{%
\begin{tikzpicture}
\node (a) at (0,0) {$a$};
\node (b) at (2,0) {$b$};
\node (c) at (4,0) {$c$};
\path[dashed,<->] (a) edge (b);
\path[dashed,<->] (b) edge (c);
\path[dashed,<->,bend left] (a) edge (c);
\end{tikzpicture}
~$\stackrel {ab} \imp$~
\begin{tikzpicture}
\node (a) at (0,0) {$a$};
\node (b) at (2,0) {$b$};
\node (c) at (4,0) {$c$};
\path[<->] (a) edge (b);
\path[dashed,<->] (b) edge (c);
\path[dashed,<->,bend left] (a) edge (c);
\end{tikzpicture}
~$\stackrel {bc} \imp$~
\begin{tikzpicture}
\node (a) at (0,0) {$a$};
\node (b) at (2,0) {$b$};
\node (c) at (4,0) {$c$};
\path[<->] (a) edge (b);
\path[<->] (b) edge (c);
\path[dashed,->,bend left] (a) edge (c);
\path[->,bend right=50] (c) edge (a);
\end{tikzpicture}
~$\stackrel {ac} \imp$~
\begin{tikzpicture}
\node (a) at (0,0) {$a$};
\node (b) at (2,0) {$b$};
\node (c) at (4,0) {$c$};
\path[<->] (a) edge (b);
\path[<->] (b) edge (c);
\path[<->,bend left] (a) edge (c);
\end{tikzpicture}
}%
\\
{%
\begin{tikzpicture}
\node (a) at (0,0) {$a$};
\node (b) at (2,0) {$b$};
\node (c) at (4,0) {$c$};
\path[dashed,->] (a) edge (b);
\path[dashed,->] (b) edge (c);
\end{tikzpicture}
~$\stackrel {bc} \imp$~
\begin{tikzpicture}
\node (a) at (0,0) {$a$};
\node (b) at (2,0) {$b$};
\node (c) at (4,0) {$c$};
\path[dashed,->] (a) edge (b);
\path[<->] (b) edge (c);
\end{tikzpicture}
~$\stackrel {ab} \imp$~
\begin{tikzpicture}
\node (a) at (0,0) {$a$};
\node (b) at (2,0) {$b$};
\node (c) at (4,0) {$c$};
\path[<->] (a) edge (b);
\path[<->] (b) edge (c);
\path[->,bend left] (a) edge (c);
\end{tikzpicture}
}%
\\
{%
\begin{tikzpicture}
\node (a) at (0,0) {$a$};
\node (b) at (2,0) {$b$};
\node (c) at (4,0) {$c$};
\path[dashed,->] (a) edge (b);
\path[dashed,->] (b) edge (c);
\end{tikzpicture}
~$\stackrel {ab} \imp$~
\begin{tikzpicture}
\node (a) at (0,0) {$a$};
\node (b) at (2,0) {$b$};
\node (c) at (4,0) {$c$};
\path[<->] (a) edge (b);
\path[dashed,->] (b) edge (c);
\path[dashed,->,bend left] (a) edge (c);
\end{tikzpicture}
~$\stackrel {bc} \imp$~
\begin{tikzpicture}
\node (a) at (0,0) {$a$};
\node (b) at (2,0) {$b$};
\node (c) at (4,0) {$c$};
\path[<->] (a) edge (b);
\path[<->] (b) edge (c);
\path[dashed,->,bend left] (a) edge (c);
\path[->,bend right=50] (c) edge (a);
\end{tikzpicture}
~$\stackrel {ac} \imp$~
\begin{tikzpicture}
\node (a) at (0,0) {$a$};
\node (b) at (2,0) {$b$};
\node (c) at (4,0) {$c$};
\path[<->] (a) edge (b);
\path[<->] (b) edge (c);
\path[<->,bend left] (a) edge (c);
\end{tikzpicture}
}%
\\
{%
\begin{tikzpicture}
\node (a) at (0,0) {$a$};
\node (b) at (2,0) {$b$};
\node (c) at (4,0) {$c$};
\path[dashed,->] (a) edge (b);
\path[dashed,->] (c) edge (b);
\end{tikzpicture}
~$\stackrel {ab} \imp$~
\begin{tikzpicture}
\node (a) at (0,0) {$a$};
\node (b) at (2,0) {$b$};
\node (c) at (4,0) {$c$};
\path[<->] (a) edge (b);
\path[dashed,->] (c) edge (b);
\end{tikzpicture}
~$\stackrel {cb} \imp$~
\begin{tikzpicture}
\node (a) at (0,0) {$a$};
\node (b) at (2,0) {$b$};
\node (c) at (4,0) {$c$};
\path[<->] (a) edge (b);
\path[<->] (c) edge (b);
\path[->,bend right] (c) edge (a);
\end{tikzpicture}
}
\end{tabular}
}
\caption{Different executions of the protocol $\LNS$ in distributed dynamic gossip}
\label{fig.intro}
\end{figure}

Usually, the connections in a gossip graph represent what numbers are known by the agents. The connections are pairs in the number relation. We will also model the secrets held by the agents as a binary relation in the gossip graph. On the assumption that you cannot learn someone's secret without knowing her number, the secret relation is contained in the number relation. In the figures in this contribution we visualize the number relation as dashed arrows and the secret relation as solid arrows, we assume reflexivity of relations, and because the secret relation is contained in the number relation we further assume that a solid arrow implies a dashed arrow. Figure \ref{fig.intro}, top row, depicts in this way the execution of the protocol $\LNS$ for three agents $a,b,c$ in a complete graph (for the number relation).

After the call $ab$, $a$ and $b$ know each other's secrets. After the call $bc$, $b$ and $c$ know all secrets. Although $a$ knows $c$'s number, it does not know $c$'s secret; whereas $c$ knows $a$'s secret: therefore, the arrow for pair $(a,c)$ is dashed and the arrow for pair $(c,a)$ is solid. After the final call $ac$ all agents know all secrets. We have achieved success. On this graph, and in fact on all complete graphs, every execution of $\LNS$ will terminate successfully. On the circle graph mentioned above every execution of $\LNS$ will also terminate successfully (note that the second call sequence of length 6 is an execution of $\LNS$). We say that on such graphs $\LNS$ is {\em strongly successful}.

The second row of Figure \ref{fig.intro} depicts the execution of protocol $\LNS$ on a graph with an incomplete number relation. The call $bc$ can be made because $b$ knows $c$'s number but doesn't know $c$'s secret. After that, $b$ and $c$ share their numbers and secrets. Subsequently the call $ab$ can be made, after which $a$ and $b$ know all numbers and all secrets. As they know all secrets, they will make no further calls. Unfortunately, agent $c$ does not know $a$'s secret but also does not know $a$'s number, so cannot make a call. We are stuck. This execution of protocol $\LNS$ does not terminate (i.e., the execution halts without exiting the until-loop of the algorithm, that would result in termination). The third row of Figure \ref{fig.intro} depicts a successful execution $ab;bc;ac$ of $\LNS$ on the same graph. On this graph some executions of $\LNS$ is successful and other executions of $\LNS$ are unsuccessful. We say that on this graph $\LNS$ is {\em weakly successful}.

Finally, the last row of Figure \ref{fig.intro} depicts the execution of protocol $\LNS$ on yet another graph with an incomplete number relation. The extension of $\LNS$ on this graph consists of two executions, $ab;cd$, as depicted, and $cd;ab$. Both get stuck. On this graph, no execution of $\LNS$ is successful.

Now consider the protocol $\SFS$ wherein any call can be made to an agent whose number you know. If we extend the call sequence in the last row of Figure \ref{fig.intro} with call $ab$, then all agents know all secrets. Sequence $ab;cb;ab$ is a successful execution of $\SFS$. Not all $\SFS$ sequences are successful. For example, the infinite sequence $ab;ab;\dots$ is unsuccessful. This is an unfairly scheduled $\SFS$ sequence. But all {\em fair} executions of $\SFS$ are successful on this graph. We say that on this graph $\SFS$ is {\em fairly successful}.

\paragraph*{Results}
We define $6$ dynamic distributed gossip protocols for which we characterize the gossip graphs on which their executions always, or sometimes, or never terminate.  By {\em characterization} we mean that given a gossip protocol and a gossip graph with a certain topology, on any such graph the protocol satisfies the termination condition, and any graph satisfying the termination condition must have that topology. For most of these protocols, including $\SFS$, and possibly assuming fair scheduling, their executions always terminate on weakly connected gossip graphs, and any graph on which their executions always terminate must be weakly connected. If a gossip graph is not weakly connected the protocol will never terminate. The $\LNS$ protocol, wherein an agent may only call another agent if it does not know that agent's secret, is strongly successful on sun graphs, and is weakly successful on any weakly connected gossip graph that is not a bush or a double bush. Both are characterizations. The definitions of sun, bush, double bush will be given in the paper. For now, we note that: every strongly connected graph is a sun graph, that every sun graph is weakly connected and is not a bush or a double bush, and that these inclusions are strict.

\paragraph*{Outline} 
In Section \ref{section.gossip} we define gossip graphs, calls, gossip protocols, strong, weak and fair success, and also the six particular gossip protocols that we characterize in our contribution. The subsequent two sections present those characterization results. Section \ref{sec.all} is devoted to all gossip protocols for which strong and weak success is characterized by weakly connected graphs. Section \ref{section.lns} is devoted to the $\LNS$ protocol, for which strong success and weak success are characterized by non-trivial classes of gossip graphs. Section \ref{section.other} summarizes our results and suggests further research.

\section{Gossip graphs and gossip protocols} \label{section.gossip}

\subsection{Gossip graphs}

Given a finite set of agents (or nodes) $\agents = \{a,b,\ldots\}$ (all lower case letters, possibly quoted or indexed, denote agents), we represent a gossip graph $G$ with numbers and secrets as a triple $(\agents,N,S)$ with $N, S\subseteq \agents^2$. That is, the agents $\agents$ are the vertices and $N, S$ are binary relations on $\agents$, with $Nxy$ (for $(x,y) \in N$) expressing that $x$ knows the (telephone) number of $y$, and $Sxy$ expressing that $x$ knows the secret of $y$.

Let us first introduce standard graph terminology, given a carrier set $A$ and binary relations like $N$ and $S$. We let $\identity_{\agents} = \{(x,x)\}_{x\in\agents}$ be the {\em identity} relation on $\agents$, and {\em converse} relation $N^{-1} = \{ (x,y) \mid Nyx \}$. We write $N_x$ for $\{ y \in \agents \mid Nxy \}$. (We may further write $\neg Nxy$ for $(x,y)\notin N$ and anyway write $xy$ for a pair $(x,y)$, such as for the calls defined below.) Relation $N \circ S = \{ (x,y) \mid \ \text{there is a } z \text{ such that } Nxz \text{ and } Szy \}$ is the {\em composition} of $N$ and $S$, and using that we define $N^1 = N$, $N^{i+1} = N^i \circ N$, and $N^* = \Union_{i \geq 1} N^i$. Relation $N$ is \emph{complete} iff $N = \agents^{2}$, it is \emph{weakly connected} if for all $x,y \in \agents$ there is an $(N \cup N^{-1})$-path from $x$ to $y$, and it is \emph{strongly connected} if for all $x,y \in \agents$ there is an $N$-path from $x$ to $y$. For any set $X$ (like $\agents$, or $N$) we write $X+x$ for $X \union \{x\}$ and $X-x$ for $X \setminus \{x\}$. A pair $xy \in N$ is a \emph{bridge} iff $N-xy$ is not weakly connected. 
\begin{definition}[Gossip graph]\label{gossip-graph}
A \emph{gossip graph} is a triple $G = (\agents, N, S)$ with $N \subseteq \agents^2$ and $S \subseteq \agents^2$. An \emph{initial gossip graph} is a gossip graph with $S= I_{\agents} \subseteq N $. Agent $x$ is an {\em expert} if $S_x = \agents$. An agent or node is {\em terminal} iff $N_x = \{x\}$. Gossip graph $G$ is {\em complete}, {\em weakly (strongly) connected} if $N$ is, respectively, complete, weakly (strongly) connected. For $y \notin A$, $G+y = (\agents+y,N+yy,S+yy)$. For $x,y \in \agents$, $G+xy = (\agents,N+xy,S)$. For $B \subseteq A$, the {\em restriction} of $G$ to $B$ is $G|B = (B, N \inter B^2, S \inter B^2)$.
\end{definition}
In an initial gossip graph each agent only knows its own secret and at least knows its own number. When we employ common graph terminology when referring to a gossip graph, this applies to the number relation, not to the secret relation.

\subsection{Gossip Protocols}

A {\em call} from $x$ to $y$ is a pair $(x,y)$ for which we write $xy$. The call $xy$ in $G$ merges the secrets and the numbers of $x$ and $y$. By $\overline{xy}$ we mean the call that can be either $xy$ or $yx$.

\begin{definition}[Call]\label{def-call}
Let $G = (\agents,N,S)$ and $x,y \in \agents$, and $Nxy$. The \emph{call} $xy$ maps $G$ to gossip graph $G^{xy} = (\agents, N^{xy}, S^{xy})$ defined by
\begin{center}
$N^{xy}_{z} = \begin{cases}N_x \cup N_y & \mbox{if } z \in \{x,y\} \\ 
N_z & \mbox{otherwise}\end{cases}$ 
\qquad and \qquad $S^{xy}_{z} = \begin{cases}S_x \cup S_y & \mbox{if } z \in \{x,y\}\\
S_z & \mbox{otherwise}\end{cases}$
\end{center}
\end{definition}
From this we immediately get that $N^{xy} = N \union (\{(x,y),(y,x)\} \circ N)$ and $S^{xy} = S \union (\{(x,y),(y,x)\} \circ S)$.
\begin{definition}[Call sequence] \label{def.callsequence}
A {\em call sequence} $\sigma$ is a finite or infinite sequence of calls. The empty sequence is $\epsilon$. We write $\sigma;\tau$ for the concatenation of (finite) sequence $\sigma$ and sequence $\tau$. We denote by $\sigma\sqsubseteq\tau$ that $\sigma$ is a prefix of $\tau$ (where $\sigma\sqsubset \tau$ means proper prefix). We use $|\sigma|$ for the length of a sequence. Given a finite sequence $\sigma$ of length $n$, and $i \leq n$, $\sigma[i]$ (for $i\geq 1$) is the $i$th call in $\sigma$, and $\sigma|i$ is the prefix of $\sigma$ consisting of the first $i$ calls. For $x \in \agents$, $\sigma_x$ is the subsequence of calls containing $x$, defined as: $\epsilon_x = \epsilon$, $(\sigma;uv)_x = \sigma_x;uv$ if $x =u$ or $x =v$, and $(\sigma;uv)_x =\sigma_x$ otherwise.

A call $xy$ is {\em possible} given a gossip graph $G = (\agents,N,S)$ if $Nxy$. Call sequence $\epsilon$ is possible for any gossip graph. Call sequence $xy;\sigma$ is possible on $G$ if call $xy$ is possible on $G$ and sequence $\sigma$ is possible on $G^{xy}$. If call sequence $\sigma$ is possible for gossip graph $G$, the ({\em induced}) gossip graph $G^\sigma$ is defined as: $G^\epsilon = G$, $G^{xy;\sigma} = (G^{xy})^{\sigma}$. A {\em call sequence $\sigma$ for $B$} only involves calls between agents in $B \subseteq \agents$.
\end{definition}
Strictly, the gossip graph $G^\sigma$ is different from the pair $(G,\sigma)$ consisting of a gossip graph and a possible call sequence: the history $\sigma$ cannot be retrieved from $G^\sigma$. But informally, we will identify the one with the other. In some gossip protocols properties of prior calls play a role.


We now come to the definition of {\em gossip protocol}. A gossip protocol is a program or procedure for selecting calls for execution that satisfy a {\em protocol condition}. 
\begin{definition}[Protocol condition]\label{def-protocolcondition}
Let an initial gossip graph $G = (\agents,N,S)$ and a possible call sequence $\sigma$ be given. A {\em protocol condition} is a property $\pi(x,y)$ that is a boolean combination of constituents $S^\sigma xy$, $xy \in \sigma_x$, $yx \in \sigma_x$, $\sigma_x = \epsilon$, $\sigma_x = \tau;xz$, and $\sigma_x = \tau;zx$.
\end{definition}
This definition of protocol condition is sufficient to define the protocols treated in this paper. Protocol conditions can be far more general. Informally, $\pi(x,y)$ may be any first-order definable property with (possibly) free variables $x$ and $y$ that satisfies {\em locality}, i.e., such that agent $x$ can select agent $y$ based on the local state of $x$, in other words: based on what $x$ knows. The protocol conditions used in this work satisfy locality in an obvious way. Gossip protocol specification languages are investigated in \cite{apt-tark,DitmarschEPRS17}. 

\begin{definition}[Gossip protocol]\label{def-protocol}
A \emph{gossip protocol} $\protocol$ with protocol condition $\pi(x,y)$ is a non-deterministic algorithm of the following shape, operating on any given initial gossip graph $G = (\agents,N,S)$: \begin{quote} Until all agents are experts, select $x,y\in\agents$ such that $x \neq y$, $Nxy$, and $\pi(x,y)$, and execute call $xy$. \end{quote}
\end{definition}
In the introduction we explained how this can be seen as defining a distributed protocol. An alternative formulation that avoids abnormal termination (getting stuck) is:
\begin{quote} {\em Until all agents are experts and there are $x,y\in\agents$ such that $x \neq y$, $Nxy$, and $\pi(x,y)$, select $x,y\in\agents$ such that $x\neq y$, $Nxy$, and $\pi(x,y)$, and execute call $xy$.} \end{quote}
\begin{definition}[Permitted call sequence]\label{def-permitted}
Let a \emph{gossip protocol} $\protocol$ with protocol condition $\pi(x,y)$ be given. Let $G = (\agents,N,S)$ be a gossip graph.
\begin{itemize}
\item call $xy$ is $\protocol$-permitted on $G^\sigma$ iff $\sigma$ is possible on $G$, $x\neq y$, $N^\sigma{xy}$, and $\pi(x,y)$ holds in $G^\sigma$. 
\item call sequence $\epsilon$ is $\protocol$-permitted on $G$.
\item call sequence $\sigma;xy$ is $\protocol$-permitted on $G$ iff $\sigma$ is $\protocol$-permitted on $G$ and $xy$ is $\protocol$-permitted on $G^\sigma$. 
\item an infinite call sequence $\sigma$ is $\protocol$-permitted on $G$ iff for all $n \in \Nat$, $\sigma[n+1]$ is $\protocol$-permitted on $G^{\sigma|n}$. \end{itemize}
\end{definition}
A $\protocol$-permitted call sequence is also called a {\em $\protocol$-sequence}.

\begin{definition}[Protocol extension]\label{def-protocolextension}
The {\em extension} ${\extension \protocol G}$ of protocol $\protocol$ on $G$ is the set of $\protocol$-permitted sequences on $G$. If all call sequences in $\extension \protocol G$ are finite (i.e., terminating), protocol $\protocol$ is {\em terminating} on $G$. Given protocols $\protocol$ and $\protocol'$ and a collection ${\mathcal G}$ of gossip graphs, we write $\protocol_{\mathcal G} \subseteq \protocol'_{\mathcal G}$ iff $\extension \protocol G \subseteq \extension {\protocol'} G$ for all $G \in {\mathcal G}$, and $\protocol \subseteq \protocol'$ if $\protocol_{\mathcal G} \subseteq \protocol'_{\mathcal G}$ holds for the collection ${\mathcal G}$ of initial gossip graphs.
\end{definition}

\begin{definition}[Maximal, stuck, fair]\label{def-maximalstuckfair} Let $G = (A,N,S)$ and $\protocol$ be given.
\begin{itemize}
\item A {\em $\protocol$-maximal} sequence $\sigma$ on $G$ is a sequence $\sigma$ that is $\protocol$-permitted on $G$ and that is infinite or such that no call is $\protocol$-permitted on $G^\sigma$. 
\item A finite call sequence $\sigma$ is {\em $\protocol$-stuck} on $G$ iff $S^{\sigma}$ is not complete and $\sigma$ is $\protocol$-maximal on $G$.
\item Call sequence $\sigma\in\protocol$ is {\em fair} on $G$ iff $\sigma$ is finite or, whenever $\sigma$ is infinite, then for all $xy$: if for all $i \in \Nat$ there is a $j \geq i$ such that $xy$ is $\protocol$-permitted on $G^{\sigma|j}$, then for all $i \in \Nat$ there is a $j \geq i$ such that $xy = \sigma[j]$. 
\end{itemize}
\end{definition}
Given $B \subseteq \agents$, similarly to $\protocol$-maximal we can define that a sequence $\sigma$ is {\em $\protocol$-maximal for $B$}, namely if all calls in $\sigma$ are between members of $B$. If a call sequence is stuck then no further calls can be made but some agents do not know all secrets. If the context makes clear what the protocol $\protocol$ is, instead of $\protocol$-maximal, $\protocol$-stuck, $\protocol$-fair, and $\protocol$-permitted we write maximal, stuck, fair, and permitted. 

\begin{definition}[Success] \label{def-successfulterminating}
Let a gossip graph $G$ and a protocol $\protocol$ be given. A call sequence $\sigma\in\extension\protocol G$ is {\em successful} (or {\em $\protocol$-successful}) if it is finite and if in $G^\sigma$ all agents are experts. 
\begin{itemize}
\item Protocol $\protocol$ is {\em strongly successful} on $G$ if all maximal $\sigma \in \extension \protocol G$ are successful. 
\item Protocol $\protocol$ is {\em fairly successful} on $G$ if all maximal fair $\sigma \in  \extension \protocol G$ are successful.
\item Protocol $\protocol$ is {\em weakly successful} on $G$ if there is a $\sigma\in\extension \protocol G$ that is successful.
\item Protocol $\protocol$ is {\em unsuccessful} on $G$ if there is no $\sigma\in\extension \protocol G$ that is successful.
\end{itemize}
Given a collection ${\mathcal G}$ of gossip graphs, $\protocol$ is (strongly, fairly, weakly, un-) successful on ${\mathcal G}$ iff $\protocol$ is (strongly, fairly, weakly, un-) successful on every $G \in {\mathcal G}$.
\end{definition}

A finite call sequence is fair by definition. If a protocol is fairly successful, then all fair sequences in the extension are finite. Strongly successful implies weakly successful, as the set of maximal call sequences is non-empty (if nothing is permitted, the empty sequence is maximal). If a sequence is successful, it is also maximal. Strongly successful also implies fairly successful, as a successful call sequence is finite and thus fair. Fairly successful implies weakly successful, as $\epsilon$ is fair and each fair call sequence can be extended into a maximal fair call sequence, and therefore the set of maximal fair call sequences is non-empty. So strong implies fair and fair implies weak. Unsuccessful is the same as not weakly successful.

\begin{definition}[Gossip problem] \label{def.gossipproblem}
Given a collection ${\mathcal G}$ of gossip graphs and a protocol $\protocol$, the {\em gossip problem} is: is $\protocol$ (strongly, fairly, weakly, un-) successful on ${\mathcal G}$?
\end{definition}

\subsection{Elementary combinatorial results} \label{sec.elem}

We close this section with some elementary combinatorial results on gossip graphs.

\begin{lemma}\label{Coro:AccSecr}\label{Thm:Incl}
Let $G = (\agents,N,S)$ be an initial gossip graph, and let $\sigma$ be a possible call sequence for $G$. Then: \begin{enumerate} \item $S^{\sigma}\subseteq N^{\sigma}$ \item  $S^\sigma \circ N \subseteq N^\sigma$ \end{enumerate} \end{lemma}
\begin{proof} Both are proved by induction on $\sigma$. 
\begin{enumerate} \item Initially, $S_x \subseteq N_x$. Then, it follows from $S^\sigma_x \subseteq N^\sigma_x$ and $S^\sigma_y \subseteq N^\sigma_y$ that $S^\sigma_x \cup S^\sigma_y \subseteq N^\sigma_x \cup N^\sigma_y$, and therefore $S^{\sigma;xy}_x = S^{\sigma;xy}_y \subseteq N^{\sigma;xy}_x = N^{\sigma;xy}_y$.
\item 
For the base case we have that $S \circ N = I_\agents \circ N = N$. 

For the induction step, assume $S^\sigma \circ N \subseteq N^\sigma$, let $xy$ be a possible call in $G^\sigma$, and let $(S^{\sigma;xy} \circ N)ab$. If $(S^\sigma \circ N){ab}$, then by the induction hypothesis, $N^\sigma{ab}$, and hence by $N^\sigma \subseteq N^{\sigma;xy}$ we get that $N^{\sigma;xy}ab$. 

If $\neg(S^\sigma \circ N)ab$, then we may assume (w.l.o.g.) that $a = x$ and that there is some $z$ with $S^{\sigma;xy} xz$ and $Nzb$. From $S^{\sigma;xy}xz$ it follows that either $S^\sigma xz$ or $S^\sigma yz$. In the former case, we have $(S^\sigma \circ N)xb$, and therefore by the induction hypothesis, $N^\sigma{xb}$. In the latter case, we have $(S^\sigma \circ N)yb$, and therefore by the induction hypothesis, $N^\sigma{yb}$. From $N^\sigma{xb}$ or $N^\sigma{yb}$ it follows by the definition of $N^{\sigma;xy}$ that $N^{\sigma;xy}{xb}$. 
\end{enumerate} \vspace{-.7cm}
\end{proof}
Therefore, if we begin in a situation where we know more numbers than secrets, we cannot learn all secrets without learning all numbers.

It will be obvious that, if $\sigma$ is possible for a gossip graph $G = (\agents,N,S)$, then $G$ is weakly connected iff $G^\sigma$ is weakly connected: after any possible call, the number of connected components is invariant. A fortiori, both the number relation and the secret relation are then incomplete on $G^\sigma$. Thus, the goal of all agents being expert can never be reached in gossip graphs that are not weakly connected. This begs the question what the minimum structural requirements are for protocols to be successful. We will answer this question for the protocols defined in the following section.

\subsection{Gossip protocols in this contribution}

The following protocols are investigated in this contribution ($\top$ is the trivial proposition).

\begin{definition}[Gossip Protocols] We define the following protocols, where we give their  name followed by the protocol condition and on the next line an informal description.
\begin{itemize}
\item $\SFS$ \quad\quad $\top$ \\
Until every agent knows all secrets, choose different agents $x$ and $y$ such that $x$ knows the number of $y$, and let $x$ call $y$.
\item $\TOK$ \quad\quad  $\sigma_x = \epsilon \lor \sigma_x = \tau;zx$ \\ 
Until every agent knows all secrets, choose different agents $x$ and $y$ with $x \neq y$ such that $x$ knows $y$'s number and either $x$ has not been in prior calls or the last call involving $x$ was {\bf to} $x$, and let $x$ call $y$.
\item $\SPI$ \quad\quad  $\sigma_x = \epsilon \lor \sigma_x = \tau;xz$ \\ 
Until every agent knows all secrets, choose different agents $x$ and $y$ such that $x$ knows $y$'s number and either $x$ has not been in prior calls or the last call involving $x$ was {\bf from} $x$, and let $x$ call $y$.
\item $\CO$ \quad\quad  $xy \notin\sigma_{x} \land yx \notin\sigma_{x}$ \\
Until every agent knows all secrets, choose different agents $x$ and $y$ such that $x$ knows the number of $y$ and there was no prior call between $x$ and $y$, and let $x$ call $y$.
\item $\wCO$ \quad\quad  $xy \notin\sigma_{x}$ \\
Until every agent knows all secrets, choose different agents $x$ and $y$ such that $x$ knows the number of $y$ and $x$ did not call $y$ before, and let $x$ call $y$.
\item $\LNS$ \quad\quad  $\neg S^\sigma xy$ \\
Until every agent knows all secrets, choose different agents $x$ and $y$ such that $x$ knows the number of $y$ but not the secret of $y$, and let $x$ call $y$. 
\end{itemize}
\end{definition}
We recall that in all our protocols both secrets and numbers are exchanged. We now succinctly  describe each protocol and refer to prior descriptions in the non-dynamic gossip literature.

The $\SFS$ protocol ($\SFS$ stands for {\em Any} call) is like the random selection of a call that is usual for gossip protocols in the networks community \cite{eugster}, except that we have the dynamic interpretation as in \cite{rajaraman,eugster}.

In the $\TOK$ protocol ($\TOK$ stands for {\em Token}), an agent must have a token in order to make a call. Initially, each agent has a token. However, if $x$ calls $y$, $x$ hands over her token to $y$, so she will have to wait until she is called by another agent before she can call again. If $y$ already had a token, he will keep the token, in other words, the two tokens then merge into one. In the $\SPI$ protocol ($\SPI$ for {\em Spider}), in contrast, when $x$ calls $y$, $y$ hands over his token to $x$. Agent $y$ can never again make a call. Either way, the number of agents holding a token is weakly decreasing. We do not know of occurrences of the $\TOK$ and $\SPI$ protocol conditions in the gossip literature, but it seems very likely that such works exist. We were motivated in their formulation by the expectation that such protocols would have faster expected termination when compared to making random calls ($\SFS$). For example, a consequence of $\TOK$ is that this rules out immediate (`useless') subsequent calls to the same agent, as in $ab;ab;\dots$, but in subsequent work this has been disproved \cite{DitmarschKS17}.

In the $\CO$ protocol ($\CO$ for {\em Call Once}), agents $x$ and $y$ can call each other only once. It does not matter who initiated the previous call. All $\CO$-permitted call sequences are finite, so fairness is not an issue. Protocol $\wCO$ (for {\em Weak Call Once}) is a variant of $\CO$ that only requires the same call not to have been made. Protocol $\wCO$ is reminiscent of the {\em quasi-random} protocols in \cite{doerr}: instead of choosing to call any neighbour (connected node) in any round, in the next round the neighbours of a node are selected in a fixed, cyclical order: ``{\em Contacts are chosen uniformly at random among all neighbors except the one that was chosen just in the round before.} \cite[p.\ 9:2]{doerr}.''

In the logic community the $\LNS$ protocol ($\LNS$ for {\em Learn New Secrets}) has been investigated for complete graphs in \cite{maduka-ecai,Attamah2017,apt-tark}, and in the networks community in, for example, \cite[Algorithm 3]{Haeupler15}, however, for rounds of calls and not for call sequences, which results in different secrets distributions and protocol extensions. It is interesting to compare the $\LNS$ protocol condition with the condition NOHO, `No One Hears Own', pioneered in \cite{BAKER1972191}. NOHO corresponds to $\neg S^\sigma yx$ for call $xy$. Interestingly, in \cite{hedetniemietal:1988} NOHO is described as ``{\em no one can call a person if the caller already knows the unique piece of information originally only known by the person called},'' which, we think, amounts to the $\LNS$ protocol condition $\neg S^\sigma xy$ for call $xy$; the other referenced works in \cite{hedetniemietal:1988} all use the \cite{BAKER1972191} interpretation. The NOHO condition is hard to enforce in distributed gossip: after sequence $ab;bc$, this would rule out subsequent call $ac$ as $c$ already knows the secret of $a$; it learnt that in call $bc$. But agent $a$ cannot know that. The NOHO condition is not local.

To give the reader an idea of the differences between these protocols we compare their extensions. We recall that $\protocol \subseteq \protocol'$ stands for $\protocol_\mathcal{G} \subseteq \protocol'_\mathcal{G}$ for the class $\mathcal{G}$ of initial gossip graphs. These protocol extensions determine a partial order on $\mathcal{G}$. We have that ${\LNS} \subset {\CO} \subset {\wCO} \subset {\SFS}$, and that $\SPI$ and $\TOK$ are incomparable and both of them are also incomparable to any of $\LNS$, $\CO$, and $\wCO$. These results are depicted in Figure \ref{fig.hierarchy}. 

\begin{figure}
\begin{center}
\fbox{
\begin{tikzpicture}
\draw (0.5, 1.5) ellipse (1.8 and 1.4);
\node (a2) at (-4,0) {\footnotesize $a$};
\node (b2) at (-3,0) {\footnotesize $b$};
\node (c2) at (-2,0) {\footnotesize $c$};
\path[->,dashed] (a2) edge (b2);
\path[->,dashed] (b2) edge (c2);
\node (seq2) at (0.43, 1) {\footnotesize \begin{turn}{27} $bc;ab;ac$\end{turn}};	
\node (seq2bisbis) at (0, 2) {\footnotesize \begin{turn}{27} $ab;ac;ab$\end{turn}};	
\node (spi) at (-0.2, 2.5) {\footnotesize \begin{turn}{24}$\SPI$\end{turn}};	

\draw (3, 1.5) ellipse (1.8 and 1.4);
\node (seq4) at (5.7, 0) {\footnotesize $ab;ab;ba$};

\node (seq1) at (2.7, .55) {\footnotesize \begin{turn}{-29}$ab;bc$\end{turn}};	
\node (tok) at (3.85, 2.45) {\footnotesize \begin{turn}{-29}$\TOK$\end{turn}};	

\node (seq2bis) at (3.1, 1) {\footnotesize \begin{turn}{-29} $ab;bc;ca$\end{turn}};	
\node (seq3bis) at (3.4, 1.5) {\footnotesize \begin{turn}{-29} $ab;ba$\end{turn}};	
\node (seq4bis) at (3.7, 1.95) {\footnotesize \begin{turn}{-29} $ab;ba;ab$\end{turn}};

\draw (1.75, -.2) ellipse (1.8 and 1.4);

\node (seq6) at (.8, 0.55) {\footnotesize \begin{turn}{27} $ab;ac$\end{turn}};	

\node (seq7) at (1.75, .9) {\footnotesize $ab$};	

\node (emptyseq0) at (1.75, 2.3) {\footnotesize --};
\node (emptyseq1) at (1.75, 1.9) {\footnotesize ---};
\node (emptyseq2) at (1.75, 1.4) {\footnotesize ---};
\node (emptyseq3) at (.3, 1.6) {\footnotesize \begin{turn}{27} ---\end{turn}};

\node (seq8) at (1.75, -.4) {\footnotesize $ab;bc;ac$};	
\node (lns) at (2.7, -1.05) {\footnotesize \begin{turn}{27}$\LNS$\end{turn}};	
\node (ss) at (7, 2.7) {\footnotesize $\SFS$};	

\draw (1.75, -.2) ellipse (2.4 and 1.9);
\node (co) at (3, -1.52) {\begin{turn}{27} \footnotesize $\CO$\end{turn}};
\node (seq9) at (1.75, -1.82) {\footnotesize $ab;ac;cb$};

\draw (1.75, -.2) ellipse (3 and 2.4);
\node (wco) at (3.25, -2) {\begin{turn}{27}\footnotesize $\wCO$\end{turn}};
\node (seq10) at (1.75, -2.35) {\footnotesize $ab;bc;ba$};

\end{tikzpicture}
}
\caption{The protocol extension hierarchy, illustrated for the gossip graph on the left}
\label{fig.hierarchy}
\end{center}
\end{figure}
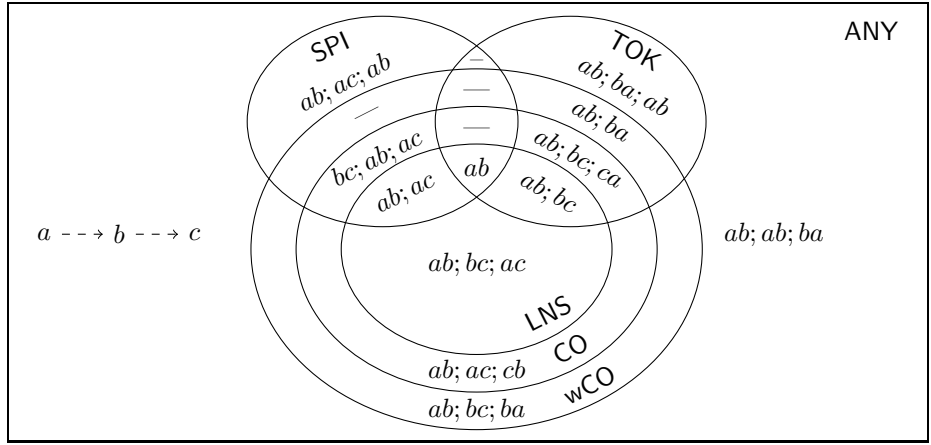

Each area in Figure \ref{fig.hierarchy} represents the set of call sequences for all gossip graphs, that are determined in the obvious way by algebraic manipulation of protocol extensions. As representatives of these sets we have chosen call sequences for the initial gossip graph consisting of three agents $a,b,c$ such that $Nab$ and $Nbc$. For example, the area containing $ab;ba$ defines $\TOK \inter \wCO \inter \overline{\CO} \inter \overline{\SPI}$. This means that $ab;ba$ is also an $\SFS$ call sequence, but not a $\CO$ (and therefore also not a $\LNS$) or $\SPI$ call sequence (call $ba$ is then not permitted after $ab$). The sequence $ab;ab;ba$ is an $\SFS$ call sequence that is not permitted in any other of the protocols defined in the paper. If an area contains `---', then the corresponding extension is empty, and not only for the example gossip graph, but for all gossip graphs. For example, $\overline{\LNS} \inter \SPI \inter \TOK = \emptyset$. Indeed, a call sequence $\sigma$ that is not in $\LNS$ cannot be both in $\SPI$ and $\TOK$: let $xz$ be the first not $\LNS$-permitted call in $\sigma$, there must then be a previous call involving $x$ in $\sigma$; if that call was from $x$, then $xz$ is not $\TOK$-permitted, but if the call was to $x$, then $xz$ is not $\SPI$-permitted. 

\section{Protocols successful on weakly connected graphs} \label{sec.all}

For more than two agents, Protocol $\SFS$ is not strongly successful on initial gossip graphs. Suppose there are at least three agents $a,b,c$ and let call $ab$ be possible, i.e., $\SFS$-permitted. Then the infinite sequence $ab; ab; ab; \ldots$ is also $\SFS$-permitted and agent $c$ will never learn the secrets of $a$ and $b$. As $ab; ab; ab; \ldots$ is also $\SPI$-permitted, and as $ab; ba; ab; ba; \ldots$ is $\TOK$-permitted, in all three cases we can only wish for fair success.

\begin{theorem} \label{theo.all} \label{Thm-SFS} \label{theo.tok} \label{SPI1} 
Let $G$ be an initial gossip graph.
\begin{enumerate}
\item Protocol $\SFS$ is fairly successful on $G$ iff $G$ is weakly connected.
\item Protocol $\TOK$ is fairly successful on $G$ iff $G$ is weakly connected.
\item Protocol $\SPI$ is fairly successful on $G$ iff $G$ is weakly connected.
\end{enumerate}
\end{theorem}

\begin{proof} In all cases we only show the direction from right to left of the equivalence, as the direction from left to right is trivial. In the proof we use the following properties. \begin{quote} {\em If $G$ is weakly connected and $\sigma$ is a call sequence such that for all $x,y \in A$ with $Nxy$ we have $S^\sigma_x = S^\sigma_y$, then $\sigma$ is successful.} \hfill $(i)$ \end{quote} This follows from the weak connectivity of $G$, and the fact that initially each agent knows its own secret. \begin{quote} {\em A fair infinite call sequence $\sigma$ has a finite prefix $\tau$ after which the graph is stable, i.e., the numbers and secrets (and, in case of $\SPI$ and $\TOK$, the number of tokens) do not change in the continuation.} \hfill $(ii)$ \end{quote} That follows from the simple observation that the numbers and secrets are weakly increasing after each call and bounded (in the case of tokens, it follows because they are initially bounded, weakly decreasing in each call, and because there is at least one).
\begin{quote} {\em An invariant under $\TOK$ and $\SPI$ execution is that for any agent, its number is known by an agent holding a token} \hfill $(iii)$ \end{quote} Initially this is trivially true as all agents hold a token, and it is clear that whenever two agents call this property is preserved. 

We continue with the actual proof. Let $G = (\agents,N,S)$ be weakly connected. Assume towards a contradiction that there is an infinite fair call sequence $\sigma$ on $G$. Let $\tau$ be a stable prefix of $\sigma$. Because of $(i)$, there must be $x,y \in A$ such that $Nxy$ and $S^\tau_x \neq S^\tau_y$. 

\begin{enumerate}

\item Let $\sigma$ be $\SFS$-permitted. Since $xy$ is $\SFS$-permitted and $\sigma$ is fair, call $xy$ will eventually happen in $\sigma$ after $\tau$. Contradiction with $(ii)$.

\item

Let $\sigma$ be $\TOK$-permitted. Let $z$ be an agent holding a token who knows the number of $x$ $(iii)$. If $z=x$, then $xy$ is $\TOK$-permitted on $G^{\tau}$. Otherwise, $zx$ is $\TOK$-permitted on $G^\tau$ and $xy$ is $\TOK$-permitted in $G^{\tau;zx}$. Call $xy$ will therefore eventually happen in $\sigma$ after $\tau$. Contradiction with $(ii)$.

\item 

Let $\sigma$ be $\SPI$-permitted. Using $(iii)$, let token holder $z$ know the number of $x$ and let token holder $w$ know the number of $y$. As calls $zx$, $wy$, $zx$ are successively $\SPI$-permitted in $\sigma$ after $\tau$, eventually $x$ and $y$ will have the same set of secrets. Contradiction with $(ii)$.
\end{enumerate} \vspace{-.7cm}
\end{proof}

\begin{corollary}
Any fair $\SFS$/$\TOK$/$\SPI$-permitted sequence $\sigma$ is finite.
\end{corollary}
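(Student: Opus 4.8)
The plan is to read this off Theorem~\ref{Thm-SFS}, and in fact off the argument used for its $\Leftarrow$ direction. The statement is to be understood for \emph{weakly connected} initial gossip graphs, which is the case of interest here: on a graph with more than one connected component a fair $\SFS$-sequence can be infinite (round-robin through the calls available in each component), so weak connectedness is genuinely needed.

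The quickest route is definitional. By Definition~\ref{def-successfulterminating}, ``$\SFS$ is fairly successful on $G$'' means every maximal fair $\SFS$-sequence on $G$ is successful, hence finite. Theorem~\ref{Thm-SFS} gives that $\SFS$ is fairly successful on every weakly connected initial gossip graph $G$. To conclude that \emph{every} fair $\SFS$-sequence, not just the maximal ones, is finite, note that an infinite $\protocol$-permitted sequence is automatically maximal (Definition~\ref{def-maximalstuckfair}); so an infinite fair $\SFS$-sequence would be a maximal fair $\SFS$-sequence that is unsuccessful, contradicting fair success.

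The route I would actually write out, since it is self-contained, reuses the proof of Theorem~\ref{Thm-SFS}. Suppose $\sigma$ is an infinite fair $\SFS$-permitted sequence on a weakly connected initial gossip graph $G=(\agents,N,S)$. That proof shows, for each pair $x,y\in\agents$, that some finite prefix $\tau\sqsubset\sigma$ satisfies $S^\tau xy$ (weak connectedness supplies a path from $y$ to $x$, and fairness forces the calls along it to be scheduled, in order); iterating over all pairs gives a finite prefix $\tau'\sqsubset\sigma$ with $S^{\tau'}=\agents^2$. Writing $m=|\tau'|$, no call is $\SFS$-permitted on $G^{\sigma|m}=G^{\tau'}$, because $\SFS$-permittedness requires $S\neq\agents^2$; hence $\sigma[m+1]$ is not $\SFS$-permitted on $G^{\sigma|m}$, contradicting that $\sigma$ is an infinite $\SFS$-permitted sequence.

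The one point needing care is this last inference: it is the \emph{protocol} stop condition, encoded in $\protocol$-permittedness (Definition~\ref{def-permitted}) as the clause $S\neq\agents^2$, that fails once all agents are experts --- not any property of the graph operation itself, since calls on $G^{\tau'}$ remain \emph{possible}. Everything else is bookkeeping already carried out in the proof of Theorem~\ref{Thm-SFS}.
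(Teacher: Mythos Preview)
Your proof is correct and matches the paper's (implicit) approach: the paper states the corollary without proof, treating it as an immediate by-product of the argument for the $\Leftarrow$ direction of Theorem~\ref{Thm-SFS}, and both of your routes unpack exactly that. Your observation that weak connectedness is genuinely required---since on a disconnected graph one can cycle fairly through all possible calls within the components forever---is a valid caveat the paper leaves unstated.
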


\begin{theorem} \label{calloncetheorem}
Let $G$ be an initial gossip graph.
\begin{enumerate}
\item Protocol $\CO$ is strongly successful on $G$ iff $G$ is weakly connected.
\item Protocol $\wCO$ is strongly successful on $G$ iff $G$ is weakly connected.
\end{enumerate}
\end{theorem}

\begin{proof} Again, we only show the non-trivial direction of the equivalence, from right to left. Let $G = (\agents,N,S)$ be weakly connected.
\begin{enumerate}
\item Let $\sigma$ be a $\CO$-maximal call sequence. As $G$ is weakly connected, there is an undirected path $\pi$ in $G$ between any two agents $a \neq c$. Assume towards a contradiction that $\neg S^\sigma{ca}$. Let $b$ be the first agent on path $\pi$ who does not know the secret of $a$ ($b$ may be $c$). Let $d$ be the predecessor of $b$ on the part of $\pi$ from $a$ to $b$ ($d$ may be $a$). So we have this situation: $(a,\dots,d,b,\dots,c)$. By definition of the path, $Nbd$ or $Ndb$. As $\sigma$ is maximal and not all agents are experts in $G^\sigma$, $bd \in \sigma$ or $db \in \sigma$. Agent $b$ is the first agent for which $\neg S^\sigma ba$, so $S^\sigma da$, and therefore $N^\sigma da$. Again, as $\sigma$ is maximal and not all agents are experts in $G^\sigma$, $da \in \sigma$ or $ad \in \sigma$. If $\overline{ad}$ is before $\overline{db}$ in $\sigma$ (we recall $\overline{xy}$ means `$xy$ or $yx$') then $b$ learns the secret of $a$ from $d$, which contradicts $\neg S^\sigma ba$. But if $\overline{db}$ is before $\overline{ad}$ in $\sigma$, then $a$ knows the number of $b$ after call $\overline{ad}$. We also know that $ab$ and $ba$ are not in $\sigma$, as $b$ does not know $a$'s secret. Therefore, call $\overline{ab}$ is $\CO$-permitted after $\sigma$, which contradicts the maximality of $\sigma$. 
\item This follows from the previous item.
\end{enumerate} \vspace{-.7cm}
\end{proof}

\begin{corollary} \label{cor.errare}
Let $G$ be an initial gossip graph and $\protocol$ one of $\SFS,\TOK,\SPI,\CO,\wCO$. Then $\protocol$ is weakly successful on $G$ iff $G$ is weakly connected.
\end{corollary}
\begin{proof}
If $G$ is weakly connected, then it follows from Theorem~\ref{theo.all} that $\protocol$ is either fairly successful or strongly successful. In both cases it is therefore weakly successful. If $G$ is not weakly connected, then $\protocol$ is trivially unsuccessful, i.e., not weakly successful (see Section \ref{sec.elem}).
\end{proof}

\section{Graph characterization of success for LNS} \label{section.lns}

All $\LNS$-sequences are finite, so fairness is not an issue. In Subsection \ref{subs.1} we show that $\LNS$ is strongly successful on a gossip graph iff that graph is a sun. This characterizes strong success for $\LNS$. In Subsection \ref{subs.2} we show that $\LNS$ is {\bf not} weakly successful on a gossip graph that is a bush or a double bush. In Subsection \ref{subs.3} we show that $\LNS$ is weakly successful on a gossip graph that is {\bf not} a bush or a double bush. Together, this characterizes weak success for $\LNS$.

\subsection{Where $\LNS$ is strongly successful and not strongly successful} \label{subs.1}

Employing some lemmas we characterize the initial gossip graphs for which $\LNS$ is strongly successful. As each agent will make at most $|\agents|-1$ calls, all executions of $\LNS$ are finite. 

\begin{lemma} \label{thm:maxSigmaEq}
If $G = (\agents, N, S)$ is an initial gossip graph and $\sigma$ is $\LNS$-maximal on $G$, then $S^\sigma = N^\sigma$.
\end{lemma}
\begin{proof}
Let $x,y \in A$ with $N^\sigma xy$ and not $S^\sigma xy$. Then the call $xy$ is permitted in $G^\sigma$, which contradicts the maximality of $\sigma$. This shows $N^\sigma \subseteq S^\sigma$. The property $S^\sigma \subseteq N^\sigma$ follows from Proposition \ref{Coro:AccSecr}.
\end{proof}

\begin{lemma} \label{thm:maxStarEq}
If $\sigma$ is $\LNS$-maximal on an initial gossip graph $G$, then $S^\sigma \circ N^* = S^\sigma$. 
\end{lemma}
\begin{proof}
We have that $S^\sigma \subseteq S^\sigma \circ N^*$ by definition of $N^*$. We now prove that $S^\sigma \circ N^* \subseteq S^\sigma$: let $(x,y) \in S^\sigma \circ N^*$. Then for some $k \in \Nat$, $(x,y) \in S^\sigma \circ N^k$. We get from Lemma \ref{Thm:Incl} plus Proposition \ref{thm:maxSigmaEq} that $S^\sigma \circ N \subseteq S^\sigma$. Applying this fact $k$ times yields $(x,y) \in S^\sigma$.
\end{proof}

\begin{definition}[Sun]
An initial gossip graph $G = (\agents,N,S)$ is a {\em sun} iff $N$ is strongly connected on the restriction of $G$ to the set of non-terminal nodes.
\end{definition}
Figure \ref{fig.sungraph} depicts a typical sun graph. 

\begin{figure}
 \begin{center}
\scalebox{.7}{
\begin{tikzpicture}
\node (a) at (1,3.4) {\footnotesize{$\bullet$}};
\node (b) at (2.4,3.4) {\footnotesize{$\bullet$}};
\node (c) at (3.4,2.4) {\footnotesize{$\bullet$}};
\node (d) at (3.4,1) {\footnotesize{$\bullet$}};
\node (e) at (2.4,0) {\footnotesize{$\bullet$}};
\node (f) at (1,0) {\footnotesize{$\bullet$}};
\node (g) at (0,1) {\footnotesize{$\bullet$}};
\node (h) at (0,2.4) {\footnotesize{$\bullet$}};
\node (a1) at (.5,4.4) {\footnotesize{$\bullet$}};
\node (b1) at (2.9,4.4) {\footnotesize{$\bullet$}};
\node (c1) at (4.4,2.9) {\footnotesize{$\bullet$}};
\node (d1) at (4.4,.5) {\footnotesize{$\bullet$}};
\node (e1) at (2.9,-1) {\footnotesize{$\bullet$}};
\node (f1) at (.5,-1) {\footnotesize{$\bullet$}};
\node (g1) at (-1,.5) {\footnotesize{$\bullet$}};
\node (h1) at (-1,2.9) {\footnotesize{$\bullet$}};

\path[dashed,<->] 
(a) edge (b)
(b) edge (c)
(c) edge (d)
(d) edge (e)
(e) edge (f)
(f) edge (g)
(g) edge (h)
(h) edge (a)
;
\path[dashed,->] (a) edge (a1);
\path[dashed,->] (b) edge (b1);
\path[dashed,->] (c) edge (c1);
\path[dashed,->] (d) edge (d1);
\path[dashed,->] (e) edge (e1);
\path[dashed,->] (f) edge (f1);
\path[dashed,->] (g) edge (g1);
\path[dashed,->] (h) edge (h1);
\end{tikzpicture}
}
\end{center}
\caption{A initial gossip graph that is a sun}
\label{fig.sungraph}
\end{figure}
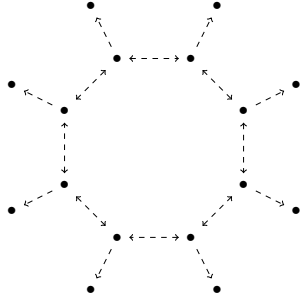

\begin{theorem} \label{theorem.lnsstrong} \label{if_sSC} \label{if_not_sSC}
Let $G$ be an initial gossip graph. Protocol $\LNS$ is strongly successful on $G$ iff $G$ is a sun.
\end{theorem}

\begin{proof} The direction from left to right is proved by contraposition. Let $G = (A,N,S)$ be weakly connected but not a sun. Let a subgraph $H$ of $G$, with carrier set $B$, be a strongly connected component of $G$. Let now $\sigma'$ and $\sigma''$ be $\LNS$-maximal sequences for $\agents\setminus B$ and for $B$, respectively, and let $s(B)$ be the set of agents in $\agents\setminus B$ that are successors of agents in $B$. We distinguish two cases for which $G$ is not a sun:

\begin{center}
\begin{tabular}{c@{\qquad}c@{\qquad}c}
\begin{tikzpicture}
\node (B) at (-1, 0) {};
\node (x) at (1,.6) {\footnotesize $x$};
\node (y) at (1,-.6) {\footnotesize $y$};
\path[->,dashed] (B) edge (x);
\path[->,dashed] (B) edge (y);
\path[->,dashed,bend left] (x) edge (y);
\draw (-1, 0) [fill=white] ellipse (.7 and .7);
\node (B2) at (-1, 0) {\footnotesize $B$};
\end{tikzpicture}
&
\begin{tikzpicture}
\node (B) at (-1, 0) {};
\node (x) at (.7,0) {\footnotesize $x$};
\node (y) at (1.7,0) {\footnotesize $y$};
\path[->,dashed] (B) edge (x);
\path[->,dashed] (x) edge (y);
\draw (-1, 0) [fill=white] ellipse (.7 and .7);
\node (B2) at (-1, 0) {\footnotesize $B$};
\end{tikzpicture}
&
\begin{tikzpicture}
\node (B) at (-1, 0) {};
\node (x) at (.7,0) {\footnotesize $x$};
\node (y) at (1.7,0) {\footnotesize $y$};
\path[->,dashed] (B) edge (x);
\path[<-,dashed] (x) edge (y);
\draw (-1, 0) [fill=white] ellipse (.7 and .7);
\node (B2) at (-1, 0) {\footnotesize $B$};
\end{tikzpicture}\\

(Case 1) & (Case 2) & (Case 2)

\end{tabular}
\end{center}

(Case 1) If $s(B) = A\setminus B$, there must be agents $x,y\in s(B)$ such that $Nxy$ (otherwise $G$ is a sun). After $\sigma'$ we have $S^{\sigma'}xy$. Let $Bx$ be a call sequence where everyone in $B$ calls $x$, and let $\sigma'''$ be a maximal call sequence for $B\cup(s(B)\setminus\{y\})$ in graph $G^{\sigma';\sigma''; Bx}$. Then $\sigma';\sigma''; Bx; \sigma'''$ is $\LNS$-maximal on $G$. Clearly agent $y$ is not an expert after this sequence.

(Case 2) There is an agent $y$ in $\agents\setminus B$ who is not the successor of any node in $B$ (see the 2 cases depicted above). Let $\sigma'''$ be a maximal $\LNS$-sequence for $B\cup s(B)$ in $G^{\sigma';\sigma''}$, so that $\sigma';\sigma''; \sigma'''$ is $\LNS$-maximal in $G$. Again, agent $y$ is not an expert.

\medskip

For the other direction, let now $G$ be a sun. Let $\sigma$ be a $\LNS$-maximal call sequence on $G$. Let $x,y \in \agents$. We show that $S^\sigma x y$.

If $x$ is not a terminal, then $N^* xy$. From that and $S^\sigma xx$ follows $(S^\sigma \circ N^*)xy$. From $S^\sigma \circ N^* = S^\sigma$ (Proposition \ref{thm:maxStarEq}) it follows that $S^\sigma xy$. 

If $x$ is a terminal, then by maximality of $\sigma$, there is some $u$ with $ux \in \sigma$. Therefore $N^\sigma xz$ for some $z$ with $Nzx$, and from the maximality of $\sigma$ then follows $S^\sigma xz$. Since $z$ is not terminal, $N^*zy$. From $S^\sigma xz$ and $N^*zy$ we get $(S^\sigma \circ N^*)xy$. By Proposition \ref{thm:maxStarEq} we then get $S^\sigma xy$. 
\end{proof}

\subsection{Where $\LNS$ is not weakly successful} \label{subs.2}

Next on our list is weak success. In the introduction we saw a gossip graph that is not a sun on which some sequence is $\LNS$-successful and another maximal sequence is unsuccessful. Given a graph on which $\LNS$ is unsuccessful, add an edge, or a node and a edge, and a successful sequence may exist yet again. For example, on the gossip graph on the left $\LNS$ is unsuccessful. But adding any edge makes it weakly successful, as evidenced by the successful sequence below it (see also Figure \ref{fig.doublebushex2}, later.)

\begin{center}
\begin{tabular}{ccc}
\begin{tikzpicture}
\node (a) at (0,0) {$a$};
\node (b) at (1,1) {$b$};
\node (c) at (2,0) {$c$};
\draw[dashed, ->] (a) -- (b); 
\draw[dashed, ->] (c) -- (b); 
\end{tikzpicture} &
\hspace{1.5cm}
\begin{tikzpicture}
\node (a) at (0,0) {$a$};
\node (b) at (1,1) {$b$};
\node (c) at (2,0) {$c$};
\draw[dashed, ->] (a) -- (b); 
\draw[dashed, ->] (c) -- (b); 
\draw[dashed, ->] (a) -- (c); 
\end{tikzpicture} 
\hspace{1.5cm}
&
\begin{tikzpicture}
\node (a) at (0,0) {$a$};
\node (b) at (1,1) {$b$};
\node (c) at (2,0) {$c$};
\draw[dashed, <->] (a) -- (b); 
\draw[dashed, ->] (c) -- (b); 

\end{tikzpicture}\\
& $ab; ac; bc$ & $cb; ab; ca$\\
\end{tabular}
\end{center}

\noindent We define two weakly connected graphs that are $\LNS$-unsuccessful, the {\em bush} and the {\em double bush}. The unsuccessful graph above will be a bush. In this section we prove that $\LNS$ is not weakly successful on bushes and double bushes. This will be is easy. in the next section we prove that $\LNS$ is weakly successful on any other weakly connected graph. This will be hard. First, we introduce additional tree terminology that we will use in these sections.

An initial gossip graph $G = (A,N,S)$ is a {\em tree} iff $(A, (N\setminus S)^{-1})$ is a directed rooted tree, i.e., from the perspective of the relation $N\setminus S$, there is unique node $r$ called {\em root} such that for any node $x$ there is a unique directed path from $x$ to $r$. Any node not accessible from other nodes is a {\em leaf}, and a path from a leaf to the root is a {\em branch}.

The {\em in-degree} of a node $a$ in a gossip graph is the number of incoming $N$-edges (excluding loops), i.e.~$\indeg(a) = | \{ b \neq a \mid Nba \} |$, and the {\em out-degree} of $a$ is the number of outgoing $N$-edges, i.e.~$\outdeg(a) = | \{ b \neq a \mid Nab \} |$. The {\em inout-degree} of a node $x$ is the sum of the in-degree and the out-degree: $\iodegr(x) = \indeg(x)+\outdeg(x)$.

A tree is a {\em bush} if the in-degree of the root is at least 2. The idea is, that a tree that is not a bush has a trunk. It has in-degree 1.

\begin{definition}[Bush, double bush] \label{reverse--tree} \label{double-reverse-tree}
A gossip graph $G = (\agents, N, S)$ is a \emph{bush} if{f} the graph $(\agents, N)$ is a bush. A {\em double bush} consists of two bushes $G_b + G_d$ that intersect in a leaf $c$ connected to both roots $b$ and $d$. Given that, $G$ is a \emph{double bush} if $\agents = \agents_{b} \cup \agents_{d}$ with $\agents_{b} \cap \agents_{d} = \{c\}$ such that the restrictions $G_b = G|\agents_{b}$ and $G_d = G|\agents_{d}$ are bushes and $\{(c,b),(c,d),(c,c)\} \subseteq N$. (See Figure \ref{fig.doublebushex1}.)
\end{definition}

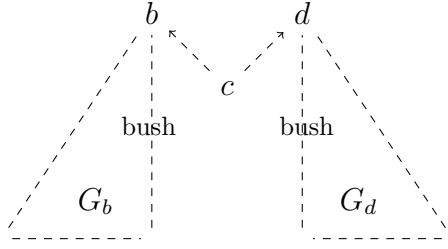
\begin{figure}
\begin{center}
\begin{tikzpicture}
\node (a1) at (-1,-2) {};
\node (a2) at (1,-2) {};
\node[rotate=0] (zz1) at (0.9,-.5) { \footnotesize{ bush }};
\node (a4) at (0.25,-1.5) {$G_{b}$};
\node (b) at (1,1) {$b$};
\node (c) at (2,0) {$c$};
\node (d) at (3,1) {$d$};
\node (e1) at (5,-2) {}; 
\node (e2) at (3,-2) {}; 
\node[rotate=0] (zz3) at (3,-.5) { \footnotesize{ bush }};
\node (e4) at (3.75,-1.5) {$G_{d}$};

\draw[dashed, -] (a1) -- (b); 
\draw[dashed, -] (a2) -- (b);
\draw[dashed, -] (a1) -- (a2); 
\draw[dashed, bend right, ->] (c) -- (b); 
\draw[dashed, bend left, ->] (c) -- (d); 
\draw[dashed, -] (e1) -- (d); 
\draw[dashed, -] (e2) -- (d); 
\draw[dashed, -] (e1) -- (e2); 
\end{tikzpicture}
\end{center}
\caption{A double bush consists of two bushes linked by a node $c$ connected to both roots.}
\label{fig.doublebushex1}
\end{figure}

\begin{lemma}\label{lemma-claims1-2}
Let $G = (\agents, N,S)$ be an initial gossip graph. If $G$ is a bush with root $r$, and $\sigma$ is an $\LNS$-sequence in $G$, then for any $x \in \agents$:
\begin{enumerate}
\item \label{itemp1} $G|{N^\sigma_x}$ is a tree.
\item \label{itemp2} $N^{\sigma}_{x}\setminus S^{\sigma}_{x} = \begin{cases}
\mbox{ root of } G|{N^\sigma_x} & \mbox{if } \neg S^{\sigma}xr\\
\emptyset & \mbox{otherwise }
\end{cases}$
\end{enumerate}
\end{lemma}

\begin{proof}
We prove both claims by induction on the length of $\sigma$. If $\sigma = \epsilon$ then the  claims hold by Definition \ref{reverse--tree}. For the inductive case of the proof, let us consider $\sigma;ab$ such that the inductive hypothesis holds for $\sigma$.
\begin{enumerate}
\item This is obviously true for $\sigma;ab$ for agents different from $a$ and $b$. Then, for agent $a$, we have $G|{N^{\sigma;ab}_a} = G|({N^\sigma_a \union N^\sigma_b})$ which is clearly cycle-free and connected. For agent $b$ this is as for agent $a$.

\item 
Again, it is obvious for agents other than $a,b$. Otherwise, since $\sigma;ab$ is an $\LNS$-sequence, $b$ is the root of $G|{N^{\sigma}_a}$. By induction hypothesis $\neg S^\sigma ar$. Then:
$$\begin{array}{cll}
& N^{\sigma;ab}_{a} \setminus S^{\sigma;ab}_{a} & \\
& & \\
= & (N^{\sigma}_{a} \cup N^{\sigma}_{b}) \setminus (S^{\sigma}_{a} \cup S^{\sigma}_{b}) & \mbox{(by Def.~\ref{def-call})}\\
& & \\
= & \begin{cases} (S^{\sigma}_{a} \cup \{b\} \cup S^{\sigma}_{b} \cup \{z\}) \setminus (S^{\sigma}_{a} \cup S^{\sigma}_{b}) & \mbox{if } N^{\sigma}_{b} \setminus S^{\sigma}_{b} = \{z\}\\
(S^{\sigma}_{a}
\cup \{b\} \cup S^{\sigma}_{b}) \setminus (S^{\sigma}_{a} \cup S^{\sigma}_{b}) & \mbox{if } N^{\sigma}_{b} \setminus S^{\sigma}_{b} = \varnothing\end{cases} & \mbox{(by Claim \ref{itemp2} for }\sigma)\\ & & \\
= & \begin{cases} \{z\} & \mbox{if } N^{\sigma}_{b} \setminus S^{\sigma}_{b} = \{z\}\\
\emptyset & \mbox{if } N^{\sigma}_{b} \setminus S^{\sigma}_{b} = \varnothing\end{cases} & \mbox{(by connectivity)}
\end{array}$$
\end{enumerate} \vspace{-.7cm}
\end{proof}

\begin{proposition} \label{Thm:reversebush}
Let an initial gossip graph $G = (\agents, N,S)$ be a bush. Then $\LNS$ is not weakly successful on $G$.
\end{proposition}

\begin{proof}
	Let $\sigma$ be $\LNS$-maximal on $G$, and let $yr$ and (subsequently) $ur$ be the first two calls in $\sigma$ to the root $r$ such that $y$ and $u$ are in subtrees generated by {\em different} branches to the root. We show that $\neg S^\sigma{yu}$. 

By Lemma~\ref{lemma-claims1-2}.\ref{itemp2}, after the call $yr$, $y$ will not make further calls. Obviously, $y$ cannot learn the secret of $u$ before call $ur$. Let $ay$ be any call to $y$ following $ur$. Let the sequence up to $ay$ be $\tau;ay$. Applying Lemma~\ref{lemma-claims1-2}.\ref{itemp2} we get $N^\tau_{a} \setminus S^\tau_{a} = \{y\}$. Therefore, $y$ is the root of $G|{N^\tau_a}$, and by connectivity of the tree, $\neg S^\tau{au}$. Therefore, also $\neg S^{\tau;ay}yu$.
\end{proof}

\begin{proposition}\label{Lem:2Jan-GeneralCase}
Let an initial gossip graph $G = (\agents, N,S)$ be a double bush. Then $\LNS$ is not weakly successful on $G$.
\end{proposition}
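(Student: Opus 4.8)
The plan is to prove the apparently-stronger statement that \emph{every} $\LNS$‑maximal call sequence $\sigma$ on a double bush $G=G_b+G_d$ leaves some agent a non‑expert; write $b,d$ for the roots of $G_b,G_d$ and $c$ for the shared leaf, with $\{(c,b),(c,d),(c,c)\}\subseteq N$. I would first record two easy facts. (a) If a node $v$ has in‑degree $0$ in $G$ then in every $G^\tau$ reachable by possible calls we have $N^\tau xv\Rightarrow S^\tau xv$ for all $x$: this holds at the start since $\{x:Nxv\}=\{v\}=\{x:Sxv\}$, and it is preserved by a call (the two participants union their number‑ and secret‑sets). Hence under $\LNS$ such a $v$ is never the \emph{target} of a call, so it only ever initiates calls; in particular every leaf of $G_b$ or of $G_d$ other than $c$ is never called. (b) $c$ is a leaf of both bushes, so initially only $c$ knows $c$'s number; thus $c$ cannot be called before its first call, and since $N_c=\{b,c,d\}$, $S_c=\{c\}$, that first call is $cb$ or $cd$. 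If $c$ makes no call at all then $c$'s number, hence $c$'s secret, never leaves $c$ and $\sigma$ is not successful; otherwise, by the symmetry of the double bush I may assume $c$'s first call is $cb$, and applying (a)'s invariant from just after $cb$ shows $c$ is never called at all. Note also that before $cb$ no information has crossed between the bushes, so the $\agents_b$‑restriction of the prefix up to $cb$ is an $\LNS$‑execution on the bush $G_b$ and Lemma~\ref{lemma-claims1-2} applies there (and symmetrically for $G_d$).

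Next I would set up the structural core, a joint induction on the calls of $\sigma$ after $cb$ establishing: every agent always has \emph{at most one} callable number (a number it knows but whose secret it does not know); a $G_b$‑agent that has not yet called $d$ is either ``clean'' (holds no $G_d$‑information, so is governed by Lemma~\ref{lemma-claims1-2}) or has $d$ as its unique callable number; and once an agent has called $d$ it is \emph{finished} ($N=S$) and stays finished. The key ingredients: a clean $G_b$‑agent has $\le 1$ callable number by Lemma~\ref{lemma-claims1-2}.\ref{itemp1}; $b$ and $c$ have callable set exactly $\{d\}$ right after $cb$ (using that $b$ was the root of the still‑clean $G_b$, so $N_b=S_b$ just before $cb$); a clean $G_b$‑agent that calls, or is called by, such an agent inherits precisely $\{d\}$ as its callable number; and ``finished'' is preserved because a re‑caller's single callable number is the callee, which the callee already knows. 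The same reasoning gives $\le 1$ callable number for every $G_d$‑agent. Two corollaries drop out. \emph{One‑way flow}: whenever a $G_b$‑agent crosses into $G_d$ it does so by calling $d$ with its whole number‑set being its secret‑set together with $\{d\}$, so it hands $d$ all its $G_b$‑numbers \emph{with their secrets}; hence no $G_d$‑agent ever holds a $G_b$‑number without its secret, so no $G_d$‑agent ever calls into $G_b$, and no $G_b$‑agent learns a $G_d$‑secret before it has itself called $d$. And $d$ is finished throughout (every caller hands it numbers‑with‑secrets plus $d$), so $d$ never initiates a call.

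Now the conclusion. If nobody ever calls $d$ then $d$'s secret never spreads and $\sigma$ is not successful. Otherwise let $x_1$ be the \emph{first} agent to call $d$. At that moment $d$ is still in its initial state $N_d=S_d=\{d\}$, and (by the structural claim, resp.\ by Lemma~\ref{lemma-claims1-2} if $x_1\in\agents_d$ and $G_d$ is still clean) $x_1$'s unique callable number is $d$; hence right after the call $x_1$ is finished with $S_{x_1}=S_{x_1}^{\mathrm{old}}\cup\{d\}$, contained in $\agents_b\cup\{d\}$ if $x_1\in\agents_b$ and in $\agents_d$ if $x_1\in\agents_d\setminus\{c\}$ — in either case it omits the secret of some agent (a leaf $\ne c$ of the opposite bush; or the leaf $a\ne c$ of $G_b$ when $x_1\in\agents_d$). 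Since $x_1$ is finished it stays finished, and $S_{x_1}$ only changes when $x_1$ is re‑called; but any re‑caller $z$ has its unique callable number equal to $x_1$, so $z$ lies on the same side as $x_1$ (a $G_d$‑agent cannot call into $G_b$, and symmetrically), and $z$ has not itself called $d$ (else $z$ would be finished with no callable number), whence $S_z$ contains only secrets from that same side — so $S_{x_1}$ never reaches $\agents$. Thus $x_1$ is never an expert in $G^\sigma$, so $\sigma$ is not successful, and therefore $\LNS$ is not weakly successful on $G$.

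The main obstacle is the final step of the structural claim together with this ``side‑confinement'' of re‑callers' secrets. That every agent always has at most one callable number must survive the moment the bridge $c$ fires and the $d$‑number begins to diffuse through $G_b$; and ruling out that a re‑caller of $x_1$ carries a foreign‑bush secret requires tracking how the (bounded) flow of information through $d$ percolates \emph{down} $G_d$ and checking that it can never reach $x_1$'s subtree without first passing through $x_1$, who is already finished and never re‑acquires anything from $d$. This bookkeeping — essentially redoing, in the presence of the bridge, the branch‑confinement argument underlying Lemma~\ref{lemma-claims1-2} and Proposition~\ref{Thm:reversebush} — is where the real work lies; the remaining case splits (whether $c$ makes one or two calls, and which bush $x_1$ belongs to) are routine once it is in place.
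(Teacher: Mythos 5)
Your overall strategy (follow the first caller $x_1$ of $d$ and show it never becomes an expert) can probably be made to work, but as written the decisive step rests on a claim that is false. Your ``one-way flow'' corollary asserts that no $G_b$-agent learns a $G_d$-secret before it has itself called $d$, and your final paragraph uses exactly this: ``$z$ has not itself called $d$ \ldots whence $S_z$ contains only secrets from that same side.'' This is refuted already on the minimal double bush $a\to b$, $c\to b$, $c\to d$, $e\to d$: after the $\LNS$-prefix $cb;ed;bd;ab$, agent $a$ knows the secrets of $d$ and $e$ although $a$ never calls $d$ (it got them second-hand from $b$, who did call $d$). The same prefix also refutes the strict form of your structural dichotomy (``clean, or has $d$ as its unique callable number''): after $ab$, agent $a$ is not clean and has no callable number at all. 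The conclusion you want is still true and is recoverable from a corrected invariant --- a non-clean agent that has not yet called $d$ has callable set contained in $\{d\}$, so any actual re-caller $z$ of $x_1$, whose callable set is $\{x_1\}\neq\{d\}$, must be clean and hence carries only same-side secrets --- but that is not the inference you make, and the invariant it needs is precisely part of the ``structural core'' that you explicitly leave unproved (``where the real work lies''). So as it stands the proposal is a sketch whose hardest bookkeeping is deferred and whose spelled-out final step is unsound.

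For comparison, the paper disposes of the proposition with far less machinery: w.l.o.g.\ the first call to $b$ precedes the first call to $d$; if the first caller $x$ of $b$ is in $\agents_b\setminus\{c\}$, then Lemma~\ref{lemma-claims1-2}.\ref{itemp1}, applied to the bush $G_b$, shows $x$ makes no further calls, and a short argument shows no later caller can bring $x$ the secret of any agent outside $\agents_b$; and if $x=c$, one reverses the single edge $cb$ to $bc$, observes that $G-cb+bc$ is a bush, replaces the call $cb$ by $bc$ in $\sigma$, and invokes Proposition~\ref{Thm:reversebush}. That edge-reversal reduction is exactly what lets the paper avoid re-proving bush-style invariants in the presence of the bridge, which is where your proposal both concentrates its unproved work and slips.
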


\begin{proof}
Let $b$ and $d$ be the two roots of $G$ and let $\sigma$ be a maximal $\LNS$-sequence. Without loss of generality, assume that the first call to $b$ takes place before the first call to $d$. Let $x$ be the first agent calling $b$. Clearly, $x$ has to be in $A_b$.

First consider $x \neq c$. After the call $xb$, agent $x$ will not make any further call, as we can apply Lemma \ref{lemma-claims1-2}.\ref{itemp1} to $G_b$. Agent $x$ will not learn the secret of $c$, because if $u$ calls $x$ after $xb$, then $u$ does not know the secret of $b$, so by connectivity $u$ must be in $G_{b}$, so that $N^*ux$. Thus, agent $u$ cannot inform $x$ of the secret of $c$.

If $x=c$, consider $G-cb+bc$. This is a bush. Let $\sigma'$ be the sequence obtained by replacing in $\sigma$ call $cb$ with $bc$. Then $(G-cb+bc)^{\sigma'}=G^{\sigma}$ (and $\sigma'$ is also maximal). By Proposition \ref{Thm:reversebush}, $\LNS$ is unsuccessful on $G-cb+bc$, and so is also unsuccessful on $G$.
\end{proof}

\subsection{Where $\LNS$ is weakly successful} \label{subs.3}

In this section we prove that a gossip graph that is neither a bush nor a double bush is weakly successful for $\LNS$. The setup of this lengthy proof is as follows.

In gossip graphs that are trees, (almost) every $\LNS$-permitted call that is made, makes a new call $\LNS$-permitted. We use this to define a particular $\LNS$-sequence generating procedure, called {\em bottom-up call sequence} (Definition \ref{lexicographic-calls}), that will then be used profusely in the subsequent technical results of this section. We then show that bushes and double bushes are maximal for the property of being $\LNS$-unsuccessful: adding a new edge or a new node destroys this property (Lemmas~\ref{Lem:1Jan-AddEdge}--\ref{Lem:2JanPlusNode}). A further illustration of that is in Figure \ref{fig.doublebushex2}. With these intermediary results we can then finally prove that a gossip graph that is neither a bush nor a double bush is weakly successful for $\LNS$ (Proposition~\ref{theo.takki}). The proof is by induction on the number of nodes and edges of the graph. This proof consists of many cases, of which a crucial case is supported by an additional Lemma \ref{lem:Case-2.2.3.2}.

\begin{figure}
\begin{center}
\begin{tabular}{ccc}
\begin{tikzpicture}
\node (a) at (0,0) {$a$};
\node (b) at (1,1) {$b$};
\node (c) at (2,0) {$c$};
\node (d) at (3,1) {$d$};
\node (e) at (4,0) {$e$};

\draw[dashed, ->] (a) -- (b); 
\draw[dashed, ->] (c) -- (b); 
\draw[dashed, ->] (c) -- (d); 
\draw[dashed, ->] (e) -- (d); 
\end{tikzpicture}
&
\begin{tikzpicture}
\node (a) at (0,0) {$a$};
\node (b) at (1,1) {$b$};
\node (c) at (2,0) {$c$};
\node (d) at (3,1) {$d$};
\node (e) at (4,0) {$e$};
\draw[dashed, <->] (a) -- (b); 
\draw[dashed, ->] (c) -- (b); 
\draw[dashed, ->] (c) -- (d); 
\draw[dashed, ->] (e) -- (d); 
\end{tikzpicture}&
\begin{tikzpicture}
\node (a) at (0,0) {$a$};
\node (b) at (1,1) {$b$};
\node (c) at (2,0) {$c$};
\node (d) at (3,1) {$d$};
\node (e) at (4,0) {$e$};
\draw[dashed, ->] (a) -- (b); 
\draw[dashed, <->] (c) -- (b); 
\draw[dashed, ->] (c) -- (d); 
\draw[dashed, ->] (e) -- (d); 
\end{tikzpicture}
\\
unsuccessful & $cb;ab;cd;ed;ad;bd;ca;ea$ & $ab;cd;ed;db;cb;ac;eb$ \\ \ \\
\begin{tikzpicture}
\node (a) at (0,0) {$a$};
\node (b) at (1,1) {$b$};
\node (c) at (2,0) {$c$};
\node (d) at (3,1) {$d$};
\node (e) at (4,0) {$e$};
\draw[dashed, ->] (a) -- (b); 
\draw[dashed, ->] (a) -- (c); 
\draw[dashed, ->] (c) -- (b); 
\draw[dashed, ->] (c) -- (d); 
\draw[dashed, ->] (e) -- (d); 
\end{tikzpicture}
&
\begin{tikzpicture}
\node (a) at (0,0) {$a$};
\node (b) at (1,1) {$b$};
\node (c) at (2,0) {$c$};
\node (d) at (3,1) {$d$};
\node (e) at (4,0) {$e$};
\draw[dashed, ->] (a) -- (b); 
\draw[dashed, ->] (c) -- (a); 
\draw[dashed, ->] (c) -- (b); 
\draw[dashed, ->] (c) -- (d); 
\draw[dashed, ->] (e) -- (d); 
\end{tikzpicture}&
\begin{tikzpicture}
\node (a) at (0,0) {$a$};
\node (b) at (1,1) {$b$};
\node (c) at (2,0) {$c$};
\node (d) at (3,1) {$d$};
\node (e) at (4,0) {$e$};
\draw[dashed, ->] (a) -- (b); 
\draw[dashed, ->] (a) -- (d); 
\draw[dashed, ->] (c) -- (b); 
\draw[dashed, ->] (c) -- (d); 
\draw[dashed, ->] (e) -- (d); 
\end{tikzpicture}\\
$ab;cd;ed;db;cb;ac;eb$ & $ab;cd;ed;da;ca;eb$ & $ab;cd;ed;ad;bd;cb;eb$ \\ \ \\
\begin{tikzpicture}
\node (a) at (0,0) {$a$};
\node (b) at (1,1) {$b$};
\node (c) at (2,0) {$c$};
\node (d) at (3,1) {$d$};
\node (e) at (4,0) {$e$};
\draw[dashed, ->] (a) -- (b); 
\draw[dashed, ->] (d) -- (a); 
\draw[dashed, ->] (c) -- (b); 
\draw[dashed, ->] (c) -- (d); 
\draw[dashed, ->] (e) -- (d); 
\end{tikzpicture} &
\begin{tikzpicture}
\node (a) at (0,0) {$a$};
\node (b) at (1,1) {$b$};
\node (c) at (2,0) {$c$};
\node (d) at (3,1) {$d$};
\node (e) at (4,0) {$e$};
\draw[dashed, ->] (a) -- (b); 
\draw[dashed, ->] (b) -- (d); 
\draw[dashed, ->] (c) -- (b); 
\draw[dashed, ->] (c) -- (d); 
\draw[dashed, ->] (e) -- (d); 
\end{tikzpicture}& 
\begin{tikzpicture}
\node (a) at (0,0) {$a$};
\node (b) at (1,1) {$b$};
\node (c) at (2,0) {$c$};
\node (d) at (3,1) {$d$};
\node (e) at (4,0) {$e$};
\draw[dashed, ->] (a) -- (b); 
\draw[dashed, ->] (c) -- (b); 
\draw[dashed, ->] (c) -- (d); 
\draw[dashed, ->] (e) -- (d); 
\path (a) edge [dashed, bend right = 10 ,->] (e);
\end{tikzpicture}\\
$ab;cd;ed;da;ca;eb$ & $ab;cd;ed;ad;bd;cb;eb$ & $ab;ae;be;cb;bd;ad;cd;ed$\\
\end{tabular}
\end{center}
\caption{On the top-left a double bush of minimum size. The other figures demonstrate that adding an edge to this graph makes it weakly successful. A similar exercise demonstrates that adding a node $f$ and an edge also makes the graph weakly successful, except when the edge is $(f,a)$ or $(f,e)$, because it then still is a double bush.}
\label{fig.doublebushex2}
\end{figure}
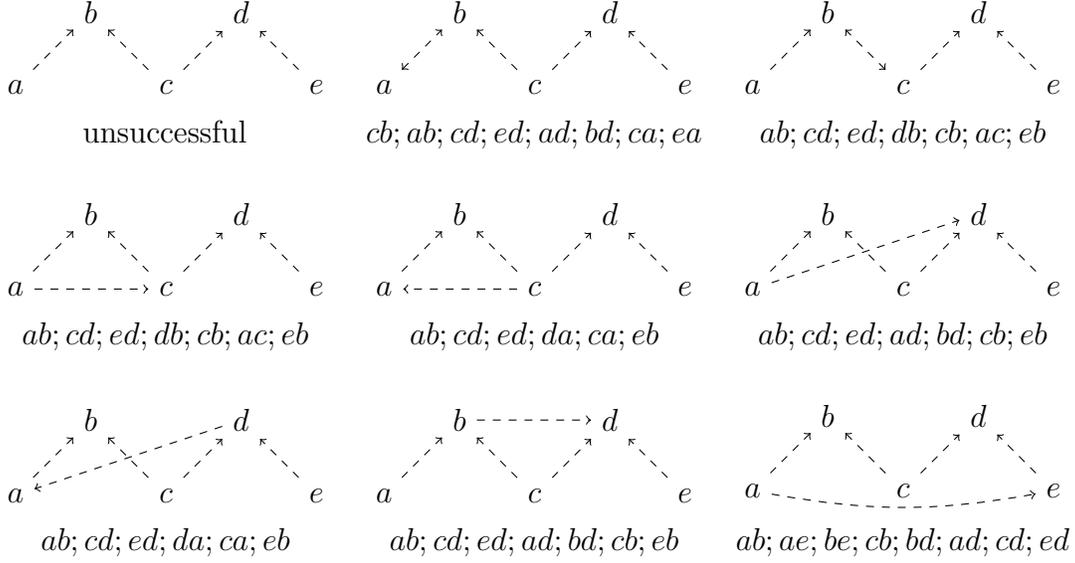

\begin{definition}[Bottom-up call sequence]\label{lexicographic-calls}
Let $G=(\agents,N,S)$ be a (possibly non-initial) gossip graph such that $N\setminus S$ is a tree, with: root $r$ \weg{depth $n$} and set of leaves $B$. We define a \emph{bottom-up call sequence} $\sigma(k)$ and the {\em frontline} $B(k)$ by simultaneous induction on $k$. We should see the set $B'(k)$ below as the set of {\em successor} nodes of $B(k)$.
\[ \begin{array}{lll}
\sigma(0) &=& \epsilon \\
\sigma(k+1) &=& \sigma(k);\tau(k) \\
B(0) &=& B \\
B(k+1) &=& B(k) \union B'(k) \\
\ \vspace{-.2cm} \\ \text{where} \\ \ \vspace{-.2cm} \\
B'(k) & = & \Union_{b \in B(k)} N^{\sigma(k)}_b \setminus S^{\sigma(k)}_b\\
\tau(k) &=& \text{a maximal $\LNS$-sequence between members of $B(k)$ and members of $B'(k)$}
\end{array}\]
\end{definition}
If $n$ is the depth of the tree, then $\sigma(k) = \sigma(n)$ for $k>n$. It will also be obvious that:
\begin{proposition}
A bottom-up call sequence is $\LNS$-permitted. If $n$ is the depth of the tree, then $\sigma(n)$ is $\LNS$-maximal.
\end{proposition}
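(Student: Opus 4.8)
The first assertion follows by a direct induction on $k$ showing that $\sigma(k)$ is $\LNS$-permitted on $G$. The base case $\sigma(0)=\epsilon$ is trivial, and for the step $\tau(k)$ is, by construction, a maximal $\LNS$-sequence between members of $B(k)$ and members of $B'(k)$ on the graph $G^{\sigma(k)}$, hence in particular it is $\LNS$-permitted on $G^{\sigma(k)}$, so $\sigma(k+1)=\sigma(k);\tau(k)$ is $\LNS$-permitted on $G$ by Definition \ref{def-permitted}. Since $\sigma(k)=\sigma(n)$ for $k\ge n$, every bottom-up call sequence is $\LNS$-permitted.

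For the second assertion we must show that no call is $\LNS$-permitted on $G^{\sigma(n)}$. Because the protocol condition of $\LNS$ is $\neg S^{\sigma}xy$ and $S^{\sigma(n)}\subseteq N^{\sigma(n)}$ (Proposition \ref{Coro:AccSecr}), this is equivalent to showing that the ``unresolved relation'' $U_n := (N^{\sigma(n)}\setminus S^{\sigma(n)})\setminus I_{\agents}$ is empty. The plan is to follow this relation round by round. Write $U_k := (N^{\sigma(k)}\setminus S^{\sigma(k)})\setminus I_{\agents}$, so that $U_0=N\setminus S$ is the given tree of depth $n$. I would prove by induction on $k$ the invariant: $U_k$ is (the edge-reversal of) a rooted tree --- or, possibly, a forest --- of depth at most $\max(0,n-k)$, and $B(k)$ contains all its leaves while $B'(k)$ consists exactly of the $U_k$-parents of the nodes of $B(k)$. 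Instantiating the invariant at $k=n$ gives $U_n=\emptyset$, i.e.\ $\sigma(n)$ is $\LNS$-maximal (and in passing re-derives $\sigma(k)=\sigma(n)$ for $k>n$, since then $\tau(k)=\epsilon$).

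The heart of the argument is the inductive step, and I expect it to be the main obstacle. The key observation is that $\tau(k)$, being a \emph{maximal} $\LNS$-sequence between $B(k)$ and $B'(k)$, must resolve every bottom-level edge of $U_k$: if an edge $(x,z)$ with $x\in B(k)$ and $z\in B'(k)$ survived $\tau(k)$, then the call $xz$ would still be $\LNS$-permitted between $B(k)$ and $B'(k)$, contradicting maximality. Resolving such an edge replaces $x$'s unresolved edges by a subset of those of the node $x$ merged with, which point strictly closer to the root; hence the bottom level of $U$ is peeled off and no deeper edge is created, so the depth drops by at least one across the round and (at worst) the tree breaks into shallower pieces. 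Making this precise is delicate because $\tau(k)$ is itself a long sequence of calls whose effects cascade within the round: one must check, by an auxiliary induction internal to $\tau(k)$, that no intermediate call produces an unresolved edge pointing away from the root or ``leaping over'' the current frontline, and that every node of $B(k)$ still carrying an unresolved edge actually gets to resolve it in that round. This is exactly the kind of bookkeeping performed in Lemma \ref{lemma-claims1-2} for a bush under an arbitrary $\LNS$-sequence, here adapted to a possibly non-initial, possibly non-bush tree and to the breadth-first schedule fixed in Definition \ref{lexicographic-calls}; once it is in place, taking $k=n$ finishes the proof.
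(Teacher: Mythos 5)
The paper offers no proof of this proposition at all: it is stated immediately after Definition \ref{lexicographic-calls} with the words ``It will also be obvious that'', so there is nothing of the authors' to match your argument against. Your first paragraph (each $\tau(k)$ is by construction an $\LNS$-sequence on $G^{\sigma(k)}$, so the concatenation is permitted) is complete and is exactly the triviality the authors have in mind.

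For maximality of $\sigma(n)$, however, your text is a plan rather than a proof, and the gap sits precisely where the content of the proposition lies: the inductive step of your invariant on $U_k=(N^{\sigma(k)}\setminus S^{\sigma(k)})\setminus I_{\agents}$ is only described, flagged as ``delicate'', and deferred to unspecified ``Lemma \ref{lemma-claims1-2}-style bookkeeping''. What actually has to be established there is nontrivial: (i) every node carries at most one unresolved out-edge and it points to an ancestor in the original tree, so $U_k$ really is a forest of ancestor pointers; (ii) every leaf of $U_k$ lies in $B(k)$ --- not automatic, since the leaves of the original tree sit at different depths, the frontline is not a level set, and $B(k)$ and $B'(k)$ may overlap; (iii) the depth decrement itself. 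On (iii) your phrase ``the bottom level of $U$ is peeled off'' is too loose: a bottom node's pointer can jump past $B'(k)$ in mid-round (its new target need not be in $B'(k)$), and the inherited edge can again sit at the bottom of a long path; moreover your sub-goal that ``every node of $B(k)$ still carrying an unresolved edge actually gets to resolve it in that round'' is false if read as ending the round with no unresolved edge --- maximality of $\tau(k)$ only yields resolve-\emph{or}-advance. The clean way to close (iii) is: any longest path of $U_{k+1}$ starts at a $U_k$-leaf, hence at a node of $B(k)$ whose start-of-round pointer lies in $B'(k)$; maximality of $\tau(k)$ forces that pointer to advance or vanish, and since during the round every pointer target is reachable in $U_k$ from the old target, the path lifts to a strictly longer path in $U_k$, giving depth at most $n-k-1$. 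A further small point: your reduction ``maximal iff $U_n=\emptyset$'' invokes Proposition \ref{Coro:AccSecr}, which is stated for initial gossip graphs, whereas Definition \ref{lexicographic-calls} allows non-initial $G$; this is harmless because maximality only needs $N^{\sigma(n)}\subseteq S^{\sigma(n)}$, but it should be said. So the strategy is sound and can be completed, but as submitted the decisive step is missing.
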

The graph below illustrates the execution of the bottom-up call sequence $\tau(1);\tau(2);\tau(3)$.
\begin{center}
\begin{tikzpicture}
\node (a) at (0,4) {$a$};
\node (b) at (-.5,3) {$b$};
\node (c) at (.5,3) {$c$};
\node (d) at (-1,2) {$d$};
\node (e) at (-1.5,1) {$e$};
\node (f) at (-.5,1) {$f$};
\path[dashed, ->]
(b) edge (a)
(c) edge (a)
(d) edge (b)
(e) edge (d)
(f) edge (d);
\end{tikzpicture}
~~$\stackrel {ed;fd;ca} \imp$~~
\begin{tikzpicture}
\node (a) at (0,4) {$a$};
\node (b) at (-.5,3) {$b$};
\node (c) at (.5,3) {$c$};
\node (d) at (-1,2) {$d$};
\node (e) at (-1.5,1) {$e$};
\node (f) at (-.5,1) {$f$};
\path[dashed, ->]
(b) edge (a)
(d) edge (b)
(e) edge [bend left] (b)
(f) edge [bend right] (b);
\path[->]
(c) edge (a)
(e) edge (d)
(f) edge (d);
\end{tikzpicture}
~~$\stackrel {eb;fb;db} \imp$~~
\begin{tikzpicture}
\node (a) at (0,4) {$a$};
\node (b) at (-.5,3) {$b$};
\node (c) at (.5,3) {$c$};
\node (d) at (-1,2) {$d$};
\node (e) at (-1.5,1) {$e$};
\node (f) at (-.5,1) {$f$};
\path[dashed, ->]
(b) edge (a)
(d) edge [bend left] (a)
(e) edge [bend left] (a)
(f) edge [bend right] (a);
\path[->]
(d) edge (b)
(c) edge (a)
(e) edge (d)
(f) edge (d)
(e) edge [bend left] (b)
(f) edge [bend right] (b);
\end{tikzpicture}
~~$\stackrel {ea;fa;da;ba} \imp$~~
\begin{tikzpicture}
\node (a) at (0,4) {$a$};
\node (b) at (-.5,3) {$b$};
\node (c) at (.5,3) {$c$};
\node (d) at (-1,2) {$d$};
\node (e) at (-1.5,1) {$e$};
\node (f) at (-.5,1) {$f$};
\path[->]
(b) edge (a)
(d) edge [bend left] (a)
(e) edge [bend left] (a)
(f) edge [bend right] (a)
(d) edge (b)
(c) edge (a)
(e) edge (d)
(f) edge (d)
(e) edge [bend left] (b)
(f) edge [bend right] (b);
\end{tikzpicture}
\end{center}

\begin{lemma}\label{fact-reverse-tree}
Let an initial gossip graph $G$ be a tree with root $r$.
\begin{enumerate}
\item \label{oneoneone} There is a $\LNS$-sequence $\sigma$ after which the root $r$ is an expert and every agent knows the secret of $r$.
\item \label{twotwotwo} If $r$ has exactly one predecessor, then there is a $\LNS$-sequence that is successful on $G$.
\end{enumerate}
\end{lemma}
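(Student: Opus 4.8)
The plan is to prove both parts using the bottom-up call sequence of Definition~\ref{lexicographic-calls}, which is $\LNS$-permitted and, when run to the depth $n$ of the tree, $\LNS$-maximal; write $p(x)$ for the parent of a non-root node $x$ (the unique $y\neq x$ with $Nxy$). For part~\ref{oneoneone}, take $\sigma:=\sigma(n)$, the full bottom-up call sequence on $G$. That every agent ends up knowing $r$'s secret is the easy half: since $G$ is a tree with root $r$, for every $x$ there is an $N$-path from $x$ up to $r$, i.e.\ $N^{\ast}xr$; since $\sigma$ is $\LNS$-maximal, Proposition~\ref{thm:maxStarEq} gives $S^{\sigma}\circ N^{\ast}=S^{\sigma}$, and together with $S^{\sigma}xx$ this yields $(x,r)\in S^{\sigma}\circ N^{\ast}=S^{\sigma}$, i.e.\ $S^{\sigma}xr$. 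That $r$ is an expert takes more work: by Proposition~\ref{thm:maxSigmaEq}, $S^{\sigma}=N^{\sigma}$, so it suffices to show $N^{\sigma}ry$ for all $y$, i.e.\ that every agent's number reaches $r$. I would get this from the sharper claim, proved by induction on the depth of the tree, that along the bottom-up sequence every non-root agent $x$ places, in order, a call to $p(x)$, then to $p(p(x))$, \dots, ending with a call to $r$: initially $p(x)\in N_x\setminus S_x$, this ``pointer'' persists until $x$ learns $p(x)$'s secret, so once $x\in B(k)$ we have $p(x)\in B'(k)$ and maximality of $\tau(k)$ forces the call $\overline{x\,p(x)}$; right after it $x$ knows $p(p(x))$'s number but not yet its secret (in bottom-up order $p(x)$ has not yet called its own parent), so the argument repeats one level higher. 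Here one uses that every call in a bottom-up sequence has the shape ``a descendant calls an ancestor'', so the secrets reaching $x$ along these calls are only secrets of agents below $x$'s current ancestor, never the secret of an ancestor $x$ has not yet called. Passing $x$'s number up the chain $p(x),p(p(x)),\dots,r$, and doing this for every $x$, gives $N^{\sigma}_r=\agents$.

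For part~\ref{twotwotwo}, let $c$ be the unique child of $r$ (using $\indeg(r)=1$), put $A':=\agents\setminus\{r\}$, and let $G':=G|A'$, an initial gossip graph that is a tree with root $c$. Apply part~\ref{oneoneone} to $G'$, taking as witness the bottom-up call sequence $\sigma'$ for $G'$: afterwards $c$ knows every secret of $A'$ (hence, by Proposition~\ref{Coro:AccSecr}, every $A'$-number as well), every agent of $A'$ knows $c$'s secret, and --- by the structural fact just used --- every agent of $A'$ has at some stage called $c$ and thereby learned $c$'s numbers. Run $\sigma'$ on $G$ rather than $G'$: all its calls lie inside $A'$, and the only difference is that $c$, and everyone who calls $c$, additionally carries $r$'s number, which $c$ knows from the start since $r=p(c)$; this extra number affects neither the permittedness of the calls of $\sigma'$ nor their effect on the secrets and $A'$-numbers of the agents. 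So after $\sigma'$ on $G$ every agent of $A'$ knows $r$'s number, while $r$'s secret has not entered $A'$. Append the call $cr$ (permitted: $c$ knows $r$'s number but not $r$'s secret); then $c$ knows all secrets and $r$ learns every secret of $A'$ plus its own, so $c$ and $r$ are both experts. Finally enumerate $A'\setminus\{c\}$ as $x_1,\dots,x_m$ and append $x_1r;\dots;x_mr$: when $x_ir$ is appended, $x_i$ knows $r$'s number and does not yet know $r$'s secret (only $r,c,x_1,\dots,x_{i-1}$ do), so the call is permitted and $x_i$ learns everything from the expert $r$. Every agent is now an expert, so the sequence is successful on $G$.

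The main obstacle is the inductive bookkeeping for the ``$r$ is an expert'' half of part~\ref{oneoneone}: one has to follow the frontlines $B(k),B'(k)$ of the bottom-up construction and confirm that a node with an as-yet-uncalled ancestor as its pointer really is forced to call that ancestor, ruling out that it meanwhile picks up that ancestor's secret by a detour through a sibling subtree --- which the bottom-up ordering prevents. With this property of bottom-up call sequences in hand, part~\ref{twotwotwo} is a short addition on top of part~\ref{oneoneone}, needing only that $r$'s number starts out known to $c$ and is therefore propagated to every agent of $A'$ by the bottom-up sequence on $G'$.
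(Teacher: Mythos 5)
Your overall route coincides with the paper's: item~\ref{oneoneone} via a maximal bottom-up call sequence, item~\ref{twotwotwo} by deleting the root, running a bottom-up sequence on the remaining tree with root $c$, appending the call $cr$, and then letting everyone else call $r$; your ``everyone knows $r$'s secret'' half via Proposition~\ref{thm:maxStarEq} is fine. The gap is in your justification of the structural claim that every non-root agent calls, in order, $p(x), p(p(x)),\dots,r$. The parenthetical ``in bottom-up order $p(x)$ has not yet called its own parent'' is not guaranteed: within a single stage $\tau(k)$ the order of calls is arbitrary, and $p(x)$ may itself lie in $B(k)$ with its own pointer in $B'(k)$, so it may make its upward call before $x$ calls it. Concretely, take the tree with edges $l_1\to v_1\to v_2\to r$ and $l_2\to v_2$. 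A legitimate $\tau(0)$ is $l_1v_1;\,l_2v_2;\,l_1v_2$ (the call $l_1v_2$ is between $B(0)$ and $B'(0)=\{v_1,v_2\}$ and is $\LNS$-permitted), after which $B'(1)=\{v_2,r\}$. If $\tau(1)$ begins with $v_2r$, then the later call $v_1v_2$ hands $r$'s secret to $v_1$, so $v_1$ never calls $r$: your claim fails for this bottom-up sequence, and with it your derivation of $N^{\sigma}_r=\agents$. (In this example $r$ still ends up an expert, because $l_1$ and $v_2$ carry $v_1$'s secret to $r$; but your argument does not show that, and it is exactly the kind of detour your last sentence claims the bottom-up ordering rules out.)

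Since the lemma is existential, the gap is repairable: fix the order inside each $\tau(k)$ so that each agent calls its current pointer before that pointer makes its own upward call (e.g.\ schedule callers by non-increasing depth), and your chain argument then goes through; alternatively one would need a genuinely different invariant to show that \emph{every} maximal bottom-up sequence leaves $r$ an expert. To be fair, the paper is no more detailed here---it simply ``observes'' that in a maximal bottom-up sequence all other agents have called the root, which as stated has the same exception---so structurally your proposal is the paper's proof; only your added justification of that observation, in the form you give it, does not hold. A small remark on item~\ref{twotwotwo}: the number bookkeeping can avoid the structural claim altogether, since once every agent knows $c$'s secret (item~\ref{oneoneone} applied to the subtree), Lemma~\ref{Thm:Incl} together with $Ncr$ already gives every agent the number of $r$.
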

\begin{proof}
Let $n$ be the maximum depth of $G$. 
\begin{enumerate}
\item In a maximal bottom-up call sequence all other agents have called the root $r$.
\item Let $r'$ be the predecessor of $r$, and let $G' = G - r - r'r$. Then $G'$ is a tree with root $r'$. Let $\sigma$ be a maximal bottom-up call sequence in $G'$, and let $\sigma'$ be the sequence consisting of calls from all agents except $r$ and $r'$, to $r$. Then $\sigma; r'r; \sigma'$ is successful on $G$. 
\end{enumerate} \vspace{-.7cm}
\end{proof}
 
\begin{lemma}\label{Lem:1Jan-AddEdge}
Let $G = (A,N,S)$ be a bush and let $x,y \in A$ such that $\neg Nxy$. Then there is an $\LNS$-sequence that is successful on $G+xy$.
\end{lemma}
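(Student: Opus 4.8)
The plan is to exhibit, for the augmented graph $G+xy$, a single successful $\LNS$-sequence, by a case analysis on where the new edge $xy$ sits relative to the root $r$ of the bush (so $\indeg(r)\geq 2$) and to the unique $N$-paths of the tree; write $p(u)$ for the $N$-parent of a node $u\neq r$. Two general facts drive the argument. First, whenever we can name a subrelation $N'$ with $I_A\subseteq N'\subseteq N+xy$ such that $(A,N',S)$ is again an \emph{initial gossip graph that is a tree}, then any $\LNS$-sequence successful on $(A,N',S)$ is still possible and $\LNS$-permitted on $G+xy$, and still successful there, since the secret relation evolves identically along a fixed call sequence regardless of the (only larger) number relation, and "expert" depends on $S$ alone. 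Second, Lemma~\ref{fact-reverse-tree} provides, for any tree, an $\LNS$-sequence after which the root is an expert and everyone knows the root's secret, and moreover a \emph{successful} one as soon as the root has exactly one predecessor; bottom-up call sequences (Definition~\ref{lexicographic-calls}) are the concrete device producing these.

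The benign cases are those in which the edge $xy$ lets us \emph{re-root}. If $y$ is not an $N$-descendant of $x$ (in particular if $y=r$, if $y$ is a proper ancestor of $x$, or if $x$ and $y$ are $N$-incomparable), then replacing $x$'s parent edge $x\,p(x)$ by $x\,y$ keeps the graph a tree rooted at $r$; and if $x$ happened to be a child of $r$ this lowers $\indeg(r)$ by one, so when $\indeg(r)=2$ we obtain a tree whose root has a single predecessor and conclude by Lemma~\ref{fact-reverse-tree}.\ref{twotwotwo}. Symmetrically, if $x$ is an $N$-ancestor of $y$ the new edge closes a directed cycle through $x,p(y),\dots$; deleting a suitable edge of that cycle from $N+xy$ turns the graph into a tree with a new root, whose in-degree can be read off, and again Lemma~\ref{fact-reverse-tree}.\ref{twotwotwo} finishes whenever that in-degree is $1$ (e.g.\ when $y$ is a leaf, which also covers the subcase $x=r$).

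The remaining configurations --- essentially $x$ and $y$ in two distinct branches at the root, or $x$ a child of $r$ with $\indeg(r)\geq 3$, or a "deep" forward edge --- are precisely those where no re-rooting brings the root's in-degree down to $1$, so the new edge must be used to \emph{redistribute} rather than to re-root. Here the plan is to run a bottom-up call sequence collecting the secrets so that $r$ becomes an expert and every agent knows $r$'s secret (Lemma~\ref{fact-reverse-tree}.\ref{oneoneone}), but to schedule the branch reached by $x\,y$ so that at the end controlled "number-without-secret" gaps remain: in particular $x$ still holds $y$'s number (a static edge of $G+xy$) but not yet $y$'s secret, and agents that called $r$ early have picked up, via $r$, the numbers of agents in other branches whose secrets they still lack. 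One then closes these gaps by $\LNS$-permitted calls --- the call $xy$ itself (permitted because $\neg S^{\sigma}xy$) together with clean-up calls to the agents that have just become experts. I expect the main obstacle to be exactly this last step: checking that every call in these tailored sequences satisfies the $\LNS$-condition, and carrying out the finite but fiddly bookkeeping over the degenerate placements of $x,y$ (one of them equal to $r$, one of them a leaf, both inside the root's branch, $x$ a child of $r$). Throughout, Proposition~\ref{Thm:reversebush} is the safeguard that none of the reductions above accidentally leaves the graph a bush.
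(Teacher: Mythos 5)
Your transfer principle is sound: a successful $\LNS$-sequence on an initial gossip graph $(A,N',S)$ with $I_A\subseteq N'\subseteq N+xy$ stays possible, $\LNS$-permitted and successful on $G+xy$, because $S$ evolves identically along a fixed call sequence and experthood depends on $S$ alone. But the configurations it disposes of are marginal. Since adding $xy$ gives $x$ out-degree $2$, the only way to re-root inside $N+xy$ is to drop $x$'s old parent edge (or, when $N^{\ast}yx$, an edge of the unique cycle), so your first move lowers $\indeg(r)$ only when $x$ is a child of $r$, and your cycle move produces a root of in-degree $1$ only when some cycle node (e.g.\ a leaf $y$) has in-degree $1$ in $N+xy$. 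In the generic situations --- $x$ and $y$ deep in different branches of the root (say $r$ with two children each having two children, and $xy$ joining two grandchildren: no spanning tree of $N+xy$ has a root of in-degree $1$), or $y$ an ancestor of $x$ with all cycle nodes of in-degree at least $2$ --- you offer only a plan ("redistribute", leave number-without-secret gaps, make clean-up calls), and you yourself flag that checking the $\LNS$-condition along these tailored sequences is the open obstacle. That check is the entire content of the lemma, so the proposal has a genuine gap precisely where the lemma is non-trivial.

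For comparison, the paper needs no re-rooting analysis at all; it splits only on whether $N^{\ast}xy$. If $\neg N^{\ast}xy$, it deletes $y$'s parent edge $yt$, obtaining $H_1$ (containing $x,r,t$) and $H_2$ (the subtree under $y$), and writes down one explicit sequence: a maximal bottom-up sequence in the subtree below $x$, then $x$ calls its way up the path to $r$ (which is what spreads $y$'s number into $H_1$), then maximal bottom-up sequences in $H_1$ (whose remaining $N\setminus S$ is again a tree, by Lemma~\ref{lemma-claims1-2}.\ref{itemp1}) and in $H_2$, then $ry$, then $(A_1-r)y$, then $(A_2-y)t$; permittedness reduces to tracking who holds $y$'s and $t$'s numbers, which the interleaving is designed to guarantee. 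If $N^{\ast}xy$, it avoids the cycle case entirely by rewriting $G+xy=(G-xt+xy)+xt$ with $t$ the successor of $x$: the graph $G-xt+xy$ is again a bush with no path from $x$ to $t$, so the first case applies with $xt$ as the added edge. To complete your route you would have to supply a construction of essentially this strength (including the handling of the subtree below $y$, which your sketch never isolates) for all your "remaining configurations"; the re-rooting observation, though correct, cannot substitute for it.
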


\begin{proof}
If $\neg N^{\ast}xy$, then $y$ is not the root $r$ of $G$. Let $t$ be the (unique) $N$-successor of $y$. Graph $G - yt$ consists of two disjoint gossip graphs $H_{1}$ and $H_{2}$ with domains $A_1$ and $A_2$, respectively, such that $x, r, t \in A_{1}$ and $y \in A_{2}$. The generated tree with root $x$ is a sub-graph of $H_1$; let $\sigma^x$ be a maximal bottom-up call sequence in that tree. Let $\sigma^{xr}$ be the sequence where $x$ calls all the agents in the path from $x$ to $r$ in $H_1$. Observe that in $H_{1}^{\sigma^{x};\sigma^{xr}}$, $N_{1}^{\sigma^{x};\sigma^{xr}}\setminus S_{1}^{\sigma^{x};\sigma^{xr}}$ is a tree (Lemma \ref{lemma-claims1-2}.\ref{itemp1}). Let $\sigma^{1}$ and $\sigma^{2}$ be maximal bottom-up call sequences for $H_{1}^{\sigma^x;\sigma^{xr}}$ and $H_{2}$, respectively. The call sequence \[ \sigma^{x};\sigma^{xr}; \sigma^{1}; \sigma^{2}; ry; (A_{1}-r)y; (A_{2}-y)t \] is $\LNS$-permitted and successful on $G + xy$; where, as usual, $Xz$ stands for a sequence of calls from each agent in $X$ to agent $z$. After $\sigma^{x};\sigma^{xr}; \sigma^{1}$, the root $r$ is an expert in $H_{1}$ and all the agents in $H_{1}$ have the number of $y$. After $\sigma^{2};ry$, both $r$ and $y$ are experts and all agents in $H_{2}$ have the number of $t$. After $(A_{1}-r)y$ all agents in $A_{1}$ are experts, in particular $t$. At the end, all agents are experts.

If $N^{\ast}xy$, then, since $x \neq y$, $x$ is not the root. Let $t$ be the $N$-successor of $x$. Write $G + xy$ as $(G - xt + xy) + xt$. Graph $(G - xt + xy)$ is a bush, and there is no path from $x$ to $t$ in that graph. We therefore can apply the part `if $\neg N^{\ast}xy$' of this proof, on $(G - xt + xy)$ and $xt$, instead of on $G$ and $xy$.
\begin{center}
\begin{tabular}{c@{\hspace{14mm}}c}
\begin{tikzpicture}
\draw[-,thick] (0.25, -2.5) -- (2.75,-2.5) -- (0.90,0.2) -- (-.34,-1.5) -- (0.25, -2.5);
\draw[-,thick] (-.5,-1.68) -- (-1,-2.5) -- (0, -2.5) -- (-.5,-1.68);
\node (b) at (1,.5) {$r$};
\node (x) at (0.41,0) {$x$};
\node (y) at (-.76,-1.63) {$y$};
\node (t) at (.32,-1) {$t$};
\draw[fill=black] (-.5,-1.67) circle (.07cm and .07cm); 
\draw[fill=black] (0.05,-0.95) circle (.07cm and .07cm); 
\draw[fill=black] (.57,-0.25) circle (.07cm and .07cm); 
\draw[fill=black] (.9,0.2) circle (.07cm and .07cm); 
\draw[dashed,thick, ->] (-.5,-1.47) -- (-0.15,-1.0); 
\draw[dashed,thick, ->] (.47,-0.25) to [bend right] (-.5,-1.33); 
\node at (-.50,-2.2) {\footnotesize $H_2$};
\node at (.9,-.6) {\footnotesize $H_1$};
\end{tikzpicture} & 
\begin{tikzpicture}
\draw[-,thick] (-1,-2.5) -- (2.75,-2.5) -- (0.90,0.2) -- (-1,-2.5); 
\node (b) at (1,.5) {$r$};
\node (y) at (.1,-0.2) {$y$};
\node (x) at (-1.1,-2.2) {$x$};
\node (z) at (.1,-1.2) {$t$};
\draw[fill=black] (.6,-0.2) circle (.07cm and .07cm); 
\draw[fill=black] (-.1,-1.2) circle (.07cm and .07cm); 
\draw[fill=black] (-.8,-2.2) circle (.07cm and .07cm); 
\draw[fill=black] (.9,0.2) circle (.07cm and .07cm); 
\draw[dashed,thick, ->] (-.8,-2.01) -- (-.25,-1.2); 
\draw[dashed,thick, ->] (-.8,-2.01) to [bend left] (.4,-0.3); 
\end{tikzpicture}\\[1mm]
Case $\neg N^{\ast}xy$ & Case $N^{\ast}xy$ 
\end{tabular}
\end{center} \vspace{-.7cm}
\end{proof}

\begin{lemma}\label{lemma-reverse-tree-1edge}
Let $G= (\agents, N, S)$ be a bush, $y \notin \agents$, $x \in \agents$, and let $e$ be an edge between $y$ and $x$. If $G+y+e$ is not a bush, there is an $\LNS$-sequence that is successful on $G$. 
\end{lemma}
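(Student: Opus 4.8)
The statement to prove is: if $G=(\agents,N,S)$ is a bush, $y\notin\agents$ is a new node, $x\in\agents$, and $e$ is an edge between $y$ and $x$, and $G+y+e$ is not a bush, then there is an $\LNS$-sequence that is successful on $G+y+e$. Since $e$ connects $y$ and $x$, it is either $xy$ or $yx$. The plan is to split into these two cases, and in each case reduce to machinery already established, chiefly Lemma~\ref{Lem:1Jan-AddEdge}, Lemma~\ref{fact-reverse-tree}, and the bottom-up call sequence of Definition~\ref{lexicographic-calls}.

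First I would handle the edge $e = yx$ (the new node $y$ points into the old graph). Then $(\agents+y, N+yx)$ is still a tree: $y$ is a new leaf whose unique $N$-successor is $x$, and the root is unchanged, namely the old root $r$ of $G$. Since $G$ was a bush, $r$ has in-degree $\geq 2$, and this is preserved, so $G+y+e$ is still a bush --- contradicting the hypothesis that $G+y+e$ is not a bush. Hence this case is vacuous, and in fact the non-bush hypothesis forces $e = xy$.

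So the real case is $e = xy$, i.e.\ $x$ (an old node) gains an edge to the new leaf $y$. Here $G+y = (\agents+y, N+yy, S+yy)$ is still a bush (root $r$, $y$ is an isolated-except-for-its-loop node, actually a separate component --- wait, no: $G+y$ has $y$ as a node with only its self-loop, so $(\agents+y,N+yy)$ is not even weakly connected, hence not a tree, hence not a bush). The cleaner route is to observe that $G+y+e$ with $e=xy$ is precisely $(G+y) + xy$ in the notation of Definition~\ref{gossip-graph}, where $G+y$ is a bush with a detached extra leaf. I would instead argue directly: let $H = G+y+e$ with $e=xy$. Its underlying graph $(\agents+y, N+yy+xy)$ is a tree with root $r$ (the new leaf $y$ has unique successor $x$, everything else is as in $G$), and it is a bush iff $r$ still has in-degree $\geq 2$ --- which it does, since adding a leaf far from $r$ cannot lower $\indeg(r)$, and $G$ being a bush gave $\indeg(r)\geq 2$. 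So again $H$ would be a bush, contradiction.

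This suggests the hypothesis ``$G+y+e$ is not a bush'' is actually never satisfiable when $G$ is a bush and $e$ touches the new node $y$ --- unless I am misreading the intended form of $e$. The likely resolution, and the real content of the lemma, is that $e$ is allowed to be an edge from $y$ to $x$ where $x$ need not make $r$ stay the root, OR (more plausibly) the intended reading is that $G+y$ is formed and then $e$ is an edge between $y$ and some $x$, and the graph fails to be a bush because $y$ becomes the \emph{new root} (this happens exactly when $e = xy$ with... no). Let me reconsider: $(A+y, N+yy+e)$ is a directed rooted tree under $(\,\cdot\,\setminus I)^{-1}$ only if every node has a unique path to a single sink. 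If $e=xy$, then $y$ is a sink-candidate; but $r$ is still a sink with no outgoing edge, so there are two sinks and it is \emph{not a tree at all}, hence not a bush. If $e=yx$, it is a tree with root $r$, and a bush. So ``not a bush'' holds precisely in the case $e=xy$, and the correct statement is: when $e=xy$, $G+y+xy$ is a \emph{forest} (two trees if $x\neq$ root path... actually one tree $G$ plus the detached-via-wrong-direction node); the point is $G+y+xy$ is $G$ with a new node $y$ that has an edge \emph{from} the old graph, so it is genuinely a new graph and we must produce a successful sequence.

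Given that, here is the plan for $e=xy$. The graph $H=G+y+xy$ has: $y$ a terminal node ($N^H_y=\{y\}$), and $x$ now has $y$ as an extra $N$-successor. I would run a bottom-up call sequence $\sigma$ on the bush $G$ up to the point where all agents of $G$ have called the root $r$, so that after $\sigma$ every agent of $G$ knows $r$'s secret and $r$ is an expert \emph{within} $G$ (Lemma~\ref{fact-reverse-tree}.\ref{oneoneone}); crucially, by Lemma~\ref{lemma-claims1-2}.\ref{itemp1}, this $\sigma$ is also $\LNS$-permitted in $H$ (the extra edge $xy$ only adds $y$ to $N^\sigma_x$ and $y\notin S^\sigma_x$ throughout, which does not block any call in $\sigma$). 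Then I append the call $xy$: it is $\LNS$-permitted since $x$ knows $y$'s number but not $y$'s secret. After $xy$, agent $y$ is an expert (it learns $x$'s secrets $= $ all of $G$'s secrets, since after $\sigma$ the node $x$ is an expert in $G$ --- this needs $\sigma$ chosen so that $x$ in particular becomes an expert, which a maximal bottom-up sequence guarantees). Finally $y$ must disseminate: since $y$'s secret is new, after $xy$ the agent $x$ does not yet know $y$'s secret --- wait, $x$ does learn $y$'s secret in the call $xy$. But does everyone else? After $xy$, $x$ knows all secrets including $y$'s; I then need a final round of calls from $x$ to every other agent of $G$ --- but those are $\LNS$-permitted only if the callee does not already know $x$'s secret, which after a maximal bottom-up sequence they all do. So I instead want $y$'s secret to propagate the other way: have $y$ learn numbers too. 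In the call $xy$, agent $y$ receives $N^\sigma_x \supseteq$ (after a full maximal bottom-up sequence) all of $\agents$; so $y$ then knows everyone's number. Now $y$ can call everyone: $\overline{yz}$ for each $z\in\agents$ is $\LNS$-permitted because $z$ does not know $y$'s (new) secret. Running $y z_1; y z_2;\dots$ over all $z\in\agents$ makes every agent an expert. Assembling: $\sigma;\, xy;\, (yz)_{z\in\agents}$ is $\LNS$-permitted and successful on $H$.

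The step I expect to be the main obstacle is verifying that a maximal bottom-up sequence $\sigma$ on $G$ remains $\LNS$-permitted in $H$ and that after it $N^\sigma_x = \agents$ (so $y$ truly learns all numbers in the call $xy$) --- this requires Lemma~\ref{fact-reverse-tree}.\ref{oneoneone} together with the observation that in a maximal bottom-up sequence every agent calls the root and thereby ends up knowing every number, and that none of these calls is blocked in $H$ by the presence of the pendant edge $xy$ (which follows from Lemma~\ref{lemma-claims1-2}.\ref{itemp1} applied to $G$, since $y$ never enters any $S^\tau_z$ before the call $xy$). Once $\sigma$ is in place, the two closing phases ($xy$, then $y$ broadcasting to all of $\agents$) are routine $\LNS$-permittedness checks.
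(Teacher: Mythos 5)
Your handling of the case $e=yx$ (vacuous, since $G+y+yx$ is again a bush) matches the paper, and your observation that adding an $N$-edge can never block an $\LNS$-permitted call is correct. The main case $e=xy$, however, does not go through. Your construction hinges on the claim that a maximal bottom-up sequence $\sigma$ on $G$ leaves $x$ an expert with $N^\sigma_x=\agents$; this holds only when $x$ is the root $r$. Lemma \ref{fact-reverse-tree}.\ref{oneoneone} guarantees expertise for the root only, and in general a non-root agent finishes $\sigma$ with neither all secrets nor all numbers (in the paper's own illustration of Definition \ref{lexicographic-calls}, agent $c$ ends with $S^\sigma_c=N^\sigma_c=\{a,c\}$). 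Worse, by Lemma \ref{lemma-claims1-2}.\ref{itemp1} every agent has $N^\sigma=S^\sigma$ within $\agents$ after a maximal $\sigma$, so after the call $xy$ the new node satisfies $N_y=S_y$: since the $\LNS$ condition constrains the \emph{caller}, $y$ can make no call at all, and your closing broadcast $(yz)_{z\in\agents}$ is never $\LNS$-permitted ($y$ already knows the secret of every agent whose number it holds, and has no number for the rest). The prefix $\sigma;xy$ is in fact already fatal, not just in need of a different mop-up: take the bush $b\to a\leftarrow c$ with new node $y$ and edge $by$ (so $x=b$). After $ba;ca;by$ agent $b$ has $N_b=S_b=\{a,b,y\}$, every holder of $b$'s number already knows $b$'s secret, and no agent ever again acquires a number without the matching secret, so $b$ can never learn $c$'s secret --- although the graph is weakly successful, e.g.\ via $ba;ca;cy;by;ay$. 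You also overlook the subcase $x=r$: there $G+y+ry$ is a tree rooted at $y$ (your ``two sinks'' argument fails), and even there the broadcast must go the other way --- after $ry$ the expert $y$ may call nobody; rather, the other agents, who picked up $y$'s number when calling $r$ during $\sigma$, each call $y$, exactly as in the proof of Lemma \ref{fact-reverse-tree}.\ref{twotwotwo}.

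The paper's proof takes a different route, which you would need to adopt or replace by something with the same interleaving idea: for $x=r$ it invokes Lemma \ref{fact-reverse-tree}.\ref{twotwotwo} directly, and for $x\neq r$ it forms the auxiliary bush $H=G+y+yx$, applies Lemma \ref{Lem:1Jan-AddEdge} to get a successful $\LNS$-sequence on $H+xy$, and transfers it to $G+y+xy$ by replacing the call $yx$, if it occurs, with $xy$. The essential point your plan misses is that one must \emph{not} exhaust the bush with a maximal $\LNS$-sequence before involving $y$; the successful constructions (as inside Lemma \ref{Lem:1Jan-AddEdge}) interleave the calls so that, when $y$'s secret has to be spread, some agents still hold a number without the corresponding secret.
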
 

\begin{proof}
If $e= yx$ then $G+y+yx$ is again a bush, in which case the lemma is trivially true. So let $e = xy$. If $x = r$ we can use Lemma \ref{fact-reverse-tree}.\ref{twotwotwo} and we are done. Let $x \neq r$. Consider $H=G+y+yx$. Then, $H$ is a bush. Hence, we can apply Lemma~\ref{Lem:1Jan-AddEdge} to obtain a successful $\LNS$-sequence on $H+xy$, say $\sigma$. If $\sigma$ does not contain the call $yx$, then $\sigma$ is also $\LNS$-permitted in $G+y+xy$, and hence also successful. Otherwise, replace $yx$ in $\sigma$ with $xy$. Clearly, the resulting sequence $\sigma'$ is again $\LNS$-permitted in $G+y+xy$ and also successful.
\end{proof}

\begin{lemma}\label{Lem:2JanPlusEdge}
Let $G$ be a double bush, and let $\neg Nxy$. Then $G+xy$ has a successful $\LNS$-sequence.
\end{lemma}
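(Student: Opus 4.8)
Let $G$ be a double bush with roots $b,d$ and common leaf $c$, so $A = A_b \cup A_d$, $A_b \cap A_d = \{c\}$, each $G_i = G|A_i$ is a bush, and $\{(c,b),(c,d),(c,c)\} \subseteq N$. We are given $\neg Nxy$ and must exhibit a successful $\LNS$-execution on $G+xy$. The plan is to do a case analysis on where $x$ and $y$ sit relative to the two constituent bushes, and in each case either reduce to Lemma~\ref{Lem:1Jan-AddEdge} (adding an edge to a single bush yields a successful execution) or build the successful sequence directly out of bottom-up call sequences plus ``everyone calls the root'' blocks, exactly in the style of the proof of Lemma~\ref{Lem:1Jan-AddEdge} and Lemma~\ref{fact-reverse-tree}.\ref{twotwotwo}. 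The guiding intuition: a double bush fails for $\LNS$ only because, once the first agent calls $b$ and the first agent calls $d$, those two agents are ``spent'' (Lemma~\ref{lemma-claims1-2}.\ref{itemp1}) and the secrets of $A_b \setminus$ that agent never reach $A_d$ side. The new edge $xy$ must be used to give some agent a fresh number with which to bridge the two halves \emph{after} having collected secrets.

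**The main cases.** First, the easy reductions. If both $x,y \in A_b$ (the $A_d$ case is symmetric), then $G_b + xy$ is a bush with an added edge, so by Lemma~\ref{Lem:1Jan-AddEdge} there is a successful $\LNS$-sequence $\rho$ on $G_b+xy$; but $\rho$ need not touch $A_d$, so we must prepend/interleave it with a successful treatment of $G_d$. The clean way: run a maximal bottom-up call sequence on $G_d$ modified so $d$ becomes expert and everyone in $A_d$ knows $d$'s secret (Lemma~\ref{fact-reverse-tree}.\ref{oneoneone} applied to $G_d$ as a tree with root $d$), noting $c$ participates and thereby we can route $d$'s accumulated secrets; then run $\rho$ on the $A_b$ side (which, since $G_b+xy$ is a genuine bush-plus-edge with a successful execution making $b$ expert, makes everyone on the $b$ side expert); finally have $c$ — who by construction knows $b$'s number and is expert-on-$A_b$, and who knows $d$'s number — shuttle between $b$ and $d$: $c$ calls $d$ (bringing all $A_b$ secrets across, since $c$ knows them), making $d$ expert, then $(A_d - d)d$ finishes $A_d$, and symmetrically a pass $(A_b - \ast)$ if needed. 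One must check the order so that no forbidden $\LNS$-repeat occurs; this is where care is needed but the pattern is routine.

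The substantive cases are when $x$ and $y$ lie on \emph{opposite} sides, say $x \in A_b \setminus\{c\}$ and $y \in A_d \setminus\{c\}$ (again up to the $b \leftrightarrow d$ symmetry; the sub-case where one of them equals $c$, or where $\neg N^*$ versus $N^*$ holds, splits further as in Lemma~\ref{Lem:1Jan-AddEdge}). Here I would mimic Lemma~\ref{Lem:1Jan-AddEdge} directly: if $\neg N^* xy$ in $G+xy$ then delete $y$'s unique $N$-successor edge to split off a component, run bottom-up sequences on each piece, have $x$ walk up its branch to root $b$ collecting secrets, then use the new edge $x \to y$ to inject $b$'s (hence all $A_b$) secrets into $y$, then let $y$ walk up to $d$, then a final ``everyone calls $b$'' and ``everyone calls $d$'' pair of blocks. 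The key structural facts I will lean on are: Proposition~\ref{thm:maxSigmaEq}/\ref{thm:maxStarEq} and Lemma~\ref{lemma-claims1-2} to control which agents are ``spent'' and to guarantee that after the partial bottom-up sequences the residual $N\setminus S$ relation is still a tree so that bottom-up sequences remain applicable; and Proposition~\ref{Coro:AccSecr} to know secrets never outrun numbers. If instead $N^* xy$ holds, rewrite $G+xy = (G - x t + xy) + xt$ where $t$ is $x$'s $N$-successor, observe $G - xt + xy$ is still a double bush (or a bush, if removing $xt$ disconnects — handle that degenerate sub-case separately, possibly via Theorem~\ref{Thm:reversebush}'s context reversed), and recurse into the $\neg N^*$ part.

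**The main obstacle.** The delicate point is the final ``shuttling'' phase and its ordering: after the new edge has been used once to bridge $A_b$ into the $A_d$-side, I must make \emph{all} agents experts using only $\LNS$-permitted calls, i.e.\ each call $uv$ must have $\neg S^\sigma uv$ at the moment it is scheduled. Concretely, once $d$ is made expert I need $(A_d - d)\,d$ and $(A_b-b)\,b$-type blocks and possibly a cross-pass via $c$, and I must verify that at the point each such call is made the caller genuinely does not yet know the callee's secret — which forces a careful choice of \emph{who calls whom, and in which direction}, since $\LNS$ is asymmetric. The proof of Lemma~\ref{Lem:1Jan-AddEdge} already solves the one-bush version of this bookkeeping; the work here is to show the two-bush version glues together, using the fact that $c$ is the unique shared node and (in $G$) has outgoing edges to both roots, so $c$ is exactly the agent who can legitimately carry the bridging call. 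I expect that, modulo this ordering care and the sub-case bookkeeping (which of $x,y$ is $c$; $\neg N^*$ vs.\ $N^*$; whether edge deletion disconnects), every case resolves either by direct appeal to Lemma~\ref{Lem:1Jan-AddEdge} or by the bottom-up-plus-root-calls construction, giving the claimed successful $\LNS$-execution on $G+xy$.
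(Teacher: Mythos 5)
There is a genuine gap, and it sits exactly where you wave it away as ``routine'' bookkeeping. In the one case you work out concretely (both $x,y\in\agents_b$), your proposed order is: first make $d$ an expert on $G_d$ with every agent in $\agents_d$ learning $d$'s secret (Lemma~\ref{fact-reverse-tree}.\ref{oneoneone}), then run $\rho$ on $G_b+xy$, then have $c$ call $d$ and finish with $(\agents_d-d)d$. But after the first phase $c$ already knows $d$'s secret, so the bridging call $cd$ is not $\LNS$-permitted; likewise every agent in $\agents_d-d$ already knows $d$'s secret, so none of the calls in the block $(\agents_d-d)d$ is permitted either. This is not a cosmetic reordering issue: the whole difficulty of the lemma (and the reason a double bush is $\LNS$-unsuccessful in the first place, Proposition~\ref{Lem:2Jan-GeneralCase}) is that agents must be prevented from learning too early the very secrets that later serve as the licence for the calls that move information across the two halves. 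Your cross-side case is only a plan (``I expect that \ldots every case resolves''), so nothing there compensates for this.

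The missing idea is the reduction the paper uses: instead of treating the two constituent bushes separately and shuttling via $c$ afterwards, delete $c$ together with its edges and add the root-to-root edge, $H = G - c - cb - cd + bd$ (or $+db$), which is a single bush; obtain a successful sequence $\sigma$ on $H$ plus the relevant extra edge from Lemma~\ref{Lem:1Jan-AddEdge} or Lemma~\ref{lemma-reverse-tree-1edge}; and then sandwich it as $cb;\sigma;cd$. The opening call $cb$ hands $b$ the number of $d$ (since $Ncd$), so $\sigma$ is executable in $G+xy$ even though the edge $bd$ does not exist there; $c$'s secret piggybacks on $b$'s and so reaches everyone that $\sigma$ makes an expert; and crucially $c$ learns nobody's secret on the $d$-side until the final call $cd$, which is therefore still permitted and makes $c$ an expert. (The paper's Case~2, $y=c$, needs the longer explicit sequence $\sigma^x; xc; cb; xd; \sigma; cd; xb; (A^x-x)c$, again engineered so that each caller still lacks the callee's secret at the moment of the call.) Without some device of this kind that controls \emph{when} $c$ and the root-adjacent agents acquire the cross-side secrets, your construction does not go through.
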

\begin{proof}
Let $G = (A,N,S)$. Recall the different components of a double bush with roots $b,d$ and a common node $c$: $\agents = \agents_{b} \cup \agents_{d}$ and $\agents_{b} \cap \agents_{d} = \{c\}$. As these components are alike, it suffices to consider 3 cases (see Figure \ref{CasesJanPlusEdge}).

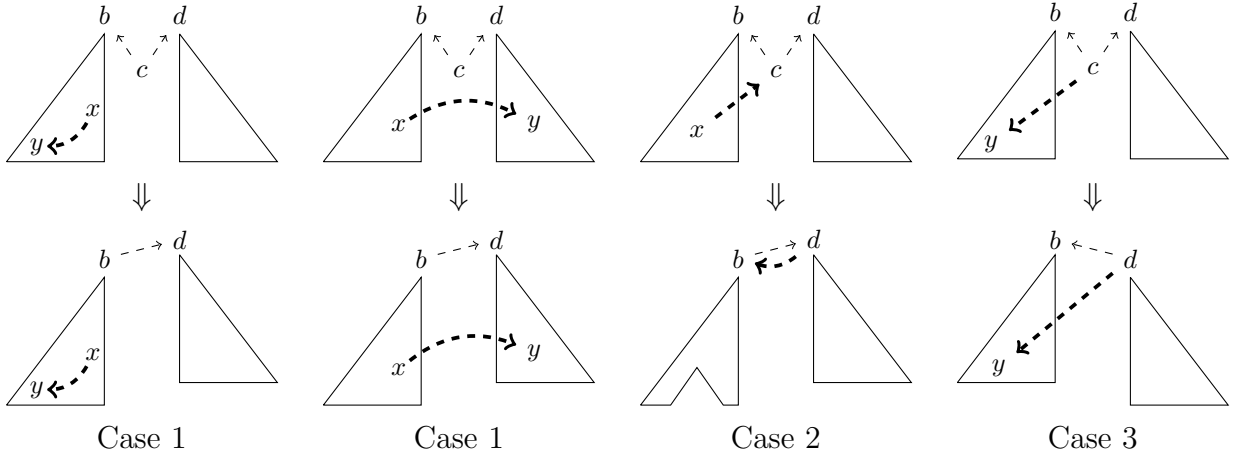
\begin{figure}
\begin{tabular}{c@{\hspace{6mm}}c@{\hspace{6mm}}c@{\hspace{6mm}}c}
\begin{tikzpicture}
\node (b) at (-.5,.75) {\footnotesize $b$};
\node (c) at (0,0) {\footnotesize $c$};
\node (d) at (.5,.75) {\footnotesize $d$};
\node[inner sep=.4mm] (y2) at (-1.4,-1) {\footnotesize $y$};
\node[inner sep=.8mm] (x) at (-.65,-.52) {\footnotesize $x$}; 
\draw[-] (-1.8,-1.2) -- (-.5,-1.2) -- (-.5,.5) -- (-1.8,-1.2);
\draw[-] (.5,.5) -- (.5,-1.2) -- (1.8,-1.2) -- (.5,.5);
\draw[dashed, ->] (c) -- (b); 
\draw[dashed, ->] (c) -- (d);
\path [line width=.05cm, dashed, bend left]
 (x) edge [->] node { } (y2);
\end{tikzpicture} &
\begin{tikzpicture}
\node (b) at (-.5,.75) {\footnotesize $b$};
\node (c) at (0,0) {\footnotesize $c$};
\node (d) at (.5,.75) {\footnotesize $d$};
\node (y) at (1,-.7) {\footnotesize $y$};
\node[inner sep=.4mm] (x) at (-.8,-.72) {\footnotesize $x$};
\draw[-] (-1.8,-1.2) -- (-.5,-1.2) -- (-.5,.5) -- (-1.8,-1.2);
\draw[-] (.5,.5) -- (.5,-1.2) -- (1.8,-1.2) -- (.5,.5);
\draw[dashed, ->] (c) -- (b); 
\draw[dashed, ->] (c) -- (d);
\path [line width=.05cm, dashed, bend left]
 (x) edge [->] node { } (y);
\end{tikzpicture} &
\begin{tikzpicture}
\node (b) at (-.5,.75) {\footnotesize $b$};
\node (c) at (0,0) {\footnotesize $c$};
\node (d) at (.5,.75) {\footnotesize $d$};
\node (x) at (-1.05,-.8) {\footnotesize $x$};
\draw[-] (-1.8,-1.2) -- (-.5,-1.2) -- (-.5,.5) -- (-1.8,-1.2);
\draw[-] (.5,.5) -- (.5,-1.2) -- (1.8,-1.2) -- (.5,.5);
\draw[dashed, ->] (c) -- (b); 
\draw[dashed, ->] (c) -- (d); 
\path [line width=.05cm, dashed]
 (x) edge [->] node { } (c);
\end{tikzpicture} &
\begin{tikzpicture}
\node (b) at (-.5,.75) {\footnotesize $b$};
\node (c) at (0,0) {\footnotesize $c$};
\node (d) at (.5,.75) {\footnotesize $d$};
\node (y) at (-1.35,-1) {\footnotesize $y$};
\draw[-] (-1.8,-1.2) -- (-.5,-1.2) -- (-.5,.5) -- (-1.8,-1.2);
\draw[-] (.5,.5) -- (.5,-1.2) -- (1.8,-1.2) -- (.5,.5);
\draw[dashed, ->] (c) -- (b); 
\draw[dashed, ->] (c) -- (d); 
\path [line width=.05cm, dashed]
 (c) edge [->] node { } (y);
\end{tikzpicture}\\[.5mm] 
$\Downarrow$ & $\Downarrow$ & $\Downarrow$ & $\Downarrow$\\[.5mm]

\begin{tikzpicture}
\node (b) at (-.5,.75) {\footnotesize $b$};
\node (d) at (.5,1) {\footnotesize $d$};
\node[inner sep=.4mm] (y2) at (-1.4,-1) {\footnotesize $y$};
\node[inner sep=.8mm] (x) at (-.65,-.52) {\footnotesize $x$}; 
\draw[-] (-1.8,-1.2) -- (-.5,-1.2) -- (-.5,.5) -- (-1.8,-1.2);
\draw[-] (.5,.8) -- (.5,-.9) -- (1.8,-.9) -- (.5,.8);
\draw[dashed, ->] (b) -- (d);
\path [line width=.05cm, dashed, bend left]
 (x) edge [->] node { } (y2);
\end{tikzpicture} &
\begin{tikzpicture}
\node (b) at (-.5,.75) {\footnotesize $b$};
\node (d) at (.5,1) {\footnotesize $d$};
\node (y) at (1,-.5) {\footnotesize $y$};
\node[inner sep=.4mm] (x) at (-.8,-.72) {\footnotesize $x$};
\draw[-] (-1.8,-1.2) -- (-.5,-1.2) -- (-.5,.5) -- (-1.8,-1.2);
\draw[-] (.5,.8) -- (.5,-.9) -- (1.8,-.9) -- (.5,.8);
\draw[dashed, ->] (b) -- (d);
\path [line width=.05cm, dashed, bend left]
 (x) edge [->] node { } (y);
\end{tikzpicture} &
\begin{tikzpicture}
\node (b) at (-.5,.75) {\footnotesize $b$};
\node (d) at (.5,1.01) {\footnotesize $d$};
\draw[-,white, fill=white] (-.1, -.6) -- (.1,-.6) -- (.1,-1) -- (-.1,-1) -- (-.1, -.6);
\draw[-] (.5,.8) -- (.5,-.9) -- (1.8,-.9) -- (.5,.8);
\draw[-] (-1.8,-1.2) -- (-1.4, -1.2) -- (-1.05, -.7) -- (-.7, -1.2) -- (-.5,-1.2) -- (-.5,.5) -- (-1.8,-1.2);
\draw[dashed, ->] (b) -- (d);
\path [line width=.05cm, dashed, bend left]
(d) edge [->] node { } (b);
\end{tikzpicture}
&
\begin{tikzpicture}
\node (b) at (-.5,1) {\footnotesize $b$};
\node (d) at (.5,.75) {\footnotesize $d$};
\node (y) at (-1.25,-.7) {\footnotesize $y$};
\draw[-] (-1.8,-.9) -- (-.5,-.9) -- (-.5,.8) -- (-1.8,-.9);
\draw[-] (.5,.5) -- (.5,-1.2) -- (1.8,-1.2) -- (.5,.5);
\draw[dashed, ->] (d) -- (b); 
\path [line width=.05cm, dashed]
 (d) edge [->] node { } (y);
\end{tikzpicture}\\[.5mm]
Case 1 & Case 1 & Case 2 & Case 3
\end{tabular}
\caption{The three cases used in the proof of Lemma \ref{Lem:2JanPlusEdge}, and their modifications.}
\label{CasesJanPlusEdge}
\end{figure}

(Case 1: $x \in A^{b} -c$ and $y \neq c$.) In this case $y$ may be in part $G_b$ or in part $G_d$ of the graph, but that does not matter for the proof. 

If $xy = bd$, then $G - cb + bd$ is a bush, so by Lemma~\ref{Lem:1Jan-AddEdge}, we obtain a successful $\LNS$-sequence $\sigma$ on $(G - cb +bd) +cb$, i.e., on $G+bd$. 

If $xy \neq bd$, consider $H = G - c - cb - cd + bd$. As $H$ is a bush, we can apply Lemma~\ref{lemma-reverse-tree-1edge} to obtain an $\LNS$-sequence $\sigma$ successful on $H + xy$. The sequence $cb; \sigma; cd$ is $\LNS$-permitted and successful on $G + xy = (H + xy) + c + cb + cd - bd$.

(Case 2: $x \in A^{b}-c$ and $y = c$.) If $x = b$, $G - cb + bc$ is a bush, so we can apply Lemma~\ref{Lem:1Jan-AddEdge} to obtain an successful $\LNS$-sequence $\sigma$ on $(G - cb + bc) + cb = G + bc$. If $x \neq b$, let $G^{x}$, for agents $A^x$, be the subtree of $G$ with root $x$, and let $\sigma^{x}$ be a maximal bottom-up call sequence on $G^{x}$. The graph $H = G - G^{x} - c - cb - cd + bd$ is a bush, so we can apply Lemma~\ref{Lem:1Jan-AddEdge} to obtain a successful $\LNS$-sequence $\sigma$ on $H+db$. On $G + xc$, sequence $\sigma^{x}; xc; cb; xd; \sigma; cd; xb; (A^x-x)c$ is $\LNS$-successful, where, as before, $(A^x-x)c$ denotes a sequence from all agents in $(A^x-x)$ to agent $c$. From Lemma \ref{fact-reverse-tree} it follows that after $\sigma^{x}$ everybody knows the secret of $x$ and thus also the number of $c$, so the calls $(A^x-x)c$ can indeed take place.

(Case 3: $x = c$ and $y \in A^{b}$.) Let $H = G - c - cb - cd + db$. This is a bush. We apply Lemma~\ref{Lem:1Jan-AddEdge} to obtain a successful $\LNS$-sequence $\sigma$ on $H + dy$. On $G + cy$ the sequence $cd; \sigma; cb$ is $\LNS$-successful.
\end{proof}

 \begin{lemma}\label{Lem:2JanPlusNode}
Let $G = (A,N,S)$ be a double bush, $y \notin A$, $x \in A$, and $e$ an edge between $x$ and $y$. There is a successful $\LNS$ call sequence on $G+y+e$ unless it is a double bush.
\end{lemma}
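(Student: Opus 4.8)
The plan is to imitate the proof of Lemma~\ref{lemma-reverse-tree-1edge} (the same statement for a single bush), with Lemma~\ref{Lem:2JanPlusEdge} taking over the role that Lemma~\ref{Lem:1Jan-AddEdge} plays there. Write $G=(A,N,S)$ with component bushes $G_b=G|A_b$ (root $b$) and $G_d=G|A_d$ (root $d$), where $A=A_b\cup A_d$, $A_b\cap A_d=\{c\}$, and $\{(c,b),(c,d),(c,c)\}\subseteq N$. I would split on the orientation of the new edge $e$ between $x$ and $y$. First suppose $e=yx$, so that $y$ is a fresh leaf hanging off $x$ by the single arc $y\to x$. If $x\in A_b\setminus\{c\}$, then $G_b+y+yx$ is still a rooted tree with root $b$ and unchanged root in-degree (only increased, to at least $3$, when $x=b$), hence a bush; $G_d$ and the gluing at $c$ are untouched; so $G+y+yx$ is again a double bush, and symmetrically when $x\in A_d\setminus\{c\}$. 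In all of these sub-cases $G+y+e$ is a double bush, so there is nothing to prove; the sub-case $x=c$ is treated as a degenerate case below.

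Now suppose $e=xy$ with $x\neq c$. Put $H:=G+y+yx$; by the previous paragraph $H$ is a double bush, with common leaf $c$ and the fresh node $y$ sitting as a pendant leaf in one of its component bushes, so that $(x,y)$ falls under one of the cases of Lemma~\ref{Lem:2JanPlusEdge}. Since $y$ is fresh, $x$ does not know $y$'s number in $H$, so Lemma~\ref{Lem:2JanPlusEdge} applies to $H$ with the new edge $xy$ and yields a successful $\LNS$-execution $\sigma$ on $H+xy$. Now $H+xy$ is exactly $G+y+xy$ plus the one extra arc $y\to x$, and the only call that this extra arc can possibly enable, over and above the calls already available in $G+y+xy$, is the call $yx$ itself. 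Moreover a successful $\LNS$-execution contains at most one call between $x$ and $y$, since after any such call both know the other's secret and $\LNS$ forbids a repeat. Hence: if $\sigma$ performs no call $yx$, a routine induction along $\sigma$ shows that the graph states reached on $G+y+xy$ coincide with those on $H+xy$ (the only possible discrepancy, that $x$ is absent from $N_y$ until the first call touching $y$, enables no $\LNS$-permitted call), so $\sigma$ is $\LNS$-permitted and successful on $G+y+xy$ as well; and if $\sigma$ does perform $yx$, that call must be the first one in $\sigma$ involving $y$ (in $H+xy$ the node $y$ is initially adjacent only to $x$), so at that moment no agent other than $y$ knows $y$'s secret, and replacing $yx$ by the call $xy$ produces an $\LNS$-permitted successful execution on $G+y+xy$.

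It remains to deal with the degenerate configurations $x=c$ (in either orientation), where the auxiliary graph $H$ is no longer a double bush — attaching $y$ at $c$ either makes $c$ acquire an incoming arc (so $c$ stops being a leaf of its component bush) or gives $c$ a third outgoing arc — so the reduction above does not apply verbatim. These few cases I would dispose of by a short explicit $\LNS$-execution in the spirit of Lemma~\ref{fact-reverse-tree}: the new node $y$ equips $c$ with an extra outgoing call, which is precisely the slack that makes the double-bush obstruction of Proposition~\ref{Lem:2Jan-GeneralCase} disappear (e.g.\ one first uses bottom-up calls on one side, then routes information through $y$ into the other side).

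The step I expect to be the main obstacle is the transfer in the case $e=xy$, $x\neq c$: turning a successful execution on $H+xy$ into one on the strictly poorer graph $G+y+xy$. The delicate point is the possibility that $\sigma$ uses the back-arc through the call $yx$; one must be sure it can occur at most once and as the first $y$-touching call, so that it can be exchanged for $xy$ without violating the $\LNS$ condition at $x$. In fact, tracing the executions constructed in the proof of Lemma~\ref{Lem:2JanPlusEdge} — all assembled from bottom-up call sequences (Definition~\ref{lexicographic-calls}) and star-shaped sequences of calls directed \emph{toward} a designated node — one sees that the back-call $yx$ never occurs, so the first sub-case always applies; this is the same reason the analogous step of Lemma~\ref{lemma-reverse-tree-1edge} goes through without incident. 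A secondary point requiring care is the bookkeeping of the degenerate $x=c$ configurations.
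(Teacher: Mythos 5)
Your treatment of the generic case is correct. For $e=yx$ with $x\neq c$ the extended graph is again a double bush, so nothing is claimed; for $e=xy$ with $x\neq c$ your reduction --- embed the fresh node via the auxiliary arc $yx$ to obtain the double bush $H=G+y+yx$, apply Lemma~\ref{Lem:2JanPlusEdge} to $H$ with the new edge $xy$, then either drop the unused arc or swap the unique call $yx$ for $xy$ --- is sound, and it is exactly the transfer trick the paper itself uses in the proof of Lemma~\ref{lemma-reverse-tree-1edge}. (The paper instead handles this case by reducing to Lemma~\ref{lemma-reverse-tree-1edge} on the bush $H=G-c-cb-cd+bd$ and wrapping the resulting sequence as $cb;\sigma;cd$; both routes work.) One point you state but do not justify is that $yx$, if it occurs, must be the first call involving $y$: the first such call could a priori be a call $uy$ by a third party $u\neq x$, and ruling out a later $yx$ then needs the invariant that, before $y$'s first call, every agent other than $y$ knowing $y$'s number already knows $x$'s secret; this is routine but should be said.

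The genuine gap is the pair of cases $x=c$, i.e.\ $e=yc$ and $e=cy$, which you dispose of with a promissory note (``a short explicit $\LNS$-execution in the spirit of Lemma~\ref{fact-reverse-tree}''). These are precisely where the lemma has content beyond the generic reduction, and the paper devotes two of its three cases to them, each with a concrete construction: for $e=yc$ it takes a successful sequence $\sigma$ on $H+db$ obtained from Lemma~\ref{Lem:1Jan-AddEdge} (with $H=G-c-cb-cd+bd$) and checks that $yc;cb;yd;\sigma;cd;yb$ is $\LNS$-permitted and successful on $G+y+yc$; for $e=cy$ it applies Lemma~\ref{lemma-reverse-tree-1edge} to $H$ with the new node $y$ and edge $by$ and wraps the result as $cb;\sigma;cd$. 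Your sketch also misdescribes the first of these configurations: with $e=yc$ it is $y$, not $c$, that gains an outgoing edge; the usable slack is that after the call $yc$ agent $y$ holds the numbers of both roots while knowing no secrets besides $c$'s and its own, so $y$ can act as a second bridge between the two component bushes --- and turning that observation into an $\LNS$-permitted successful sequence is exactly the bookkeeping your proposal omits. As it stands, two of the three essential cases of the lemma are asserted rather than proved.
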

\begin{proof}
Let $\agents=\agents_{b} \cup \agents_{d}$ and $\agents_{b} \cap \agents_{d} = c$, where $b,d$ are the two roots and $c$ is the common leaf. Without loss of generality, suppose $x \notin \agents_{d}$. Consider the bush $H = G - c - cb - cd + bd$. There are 3 cases to consider in this proof: $e = xy$, $e = yc$, and $e = cy$.
\begin{center}
\begin{tabular}{c@{\qquad}c@{\qquad}c}
\begin{tikzpicture}
\node (b) at (-.5,.75) {\footnotesize $b$};
\node (c) at (0,0) {\footnotesize $c$};
\node (d) at (.5,.75) {\footnotesize $d$};
\node (x) at (-1,-.5) {\footnotesize $x$};
\node (y) at (-1.7,.4) {\footnotesize $y$};
\draw[-] (-1.8,-1.2) -- (-.5,-1.2) -- (-.5,.5) -- (-1.8,-1.2);
\draw[-] (.5,.5) -- (.5,-1.2) -- (1.8,-1.2) -- (.5,.5);
\draw[dashed, ->] (c) -- (b); 
\draw[dashed, ->] (c) -- (d); 
\path [thick]
 (x) edge [->] node { } (y);
\end{tikzpicture} &
\begin{tikzpicture}
\node (b) at (-.5,.75) {\footnotesize $b$};
\node (c) at (0,0) {\footnotesize $c$};
\node (d) at (.5,.75) {\footnotesize $d$};
\node (y) at (0,-.9) {\footnotesize $y$};
\draw[-] (-1.8,-1.2) -- (-.5,-1.2) -- (-.5,.5) -- (-1.8,-1.2);
\draw[-] (.5,.5) -- (.5,-1.2) -- (1.8,-1.2) -- (.5,.5);
\draw[dashed, ->] (c) -- (b); 
\draw[dashed, ->] (c) -- (d); 
\path [thick]
 (y) edge [->] node { } (c);
\end{tikzpicture} &
\begin{tikzpicture}
\node (b) at (-.5,.75) {\footnotesize $b$};
\node (c) at (0,0) {\footnotesize $c$};
\node (d) at (.5,.75) {\footnotesize $d$};
\node (y) at (0,1.2) {\footnotesize $y$};
\draw[-] (-1.8,-1.2) -- (-.5,-1.2) -- (-.5,.5) -- (-1.8,-1.2);
\draw[-] (.5,.5) -- (.5,-1.2) -- (1.8,-1.2) -- (.5,.5);
\draw[dashed, ->] (c) -- (b); 
\draw[dashed, ->] (c) -- (d); 
\path [thick]
 (c) edge [->] node { } (y);
\end{tikzpicture}
\end{tabular}
\end{center} 

If $x \in \agents^{b}-c$ and $e=xy$, apply Lemma~\ref{lemma-reverse-tree-1edge} on $H$ to obtain a successful $\LNS$-sequence $\sigma$ on $H + y + xy$. The sequence $cb; \sigma; cd$ is $\LNS$-permitted and successful on $G + y + xy$.

If $x=c$ and $e=yx$, apply Lemma \ref{Lem:1Jan-AddEdge} to obtain a successful $\LNS$-sequence $\sigma$ on $H + db$. The sequence $yc; cb; yd; \sigma; cd; yb$ is $\LNS$-permitted and successful on $G + y + yc$. 

If $x=c$ and $e=xy$, apply Lemma~\ref{lemma-reverse-tree-1edge} to obtain a successful $\LNS$-sequence $\sigma$ on $H + y + by$. The sequence $cb; \sigma; cd$ is $\LNS$-permitted and successful on $G + y + cy$.
\end{proof}
 
\begin{proposition} \label{theo.takki}
Let $G$ be a weakly connected graph. If $G$ is neither a bush nor a double bush, then $G$ is weakly successful on $\LNS$.
\end{proposition}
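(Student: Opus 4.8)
The plan is to argue by induction on $|\agents|+|N|$, the number of nodes plus the number of edges of $G$. A structural observation I will use repeatedly: if $\sigma$ is a successful $\LNS$-execution on an initial gossip graph $H$ and $e$ is a non-loop edge, then $\sigma$ is still $\LNS$-permitted and successful on $H+e$. Indeed, the number relation along $\sigma$ in $H+e$ only \emph{grows} relative to $H$, so every call of $\sigma$ stays possible; and the secret relation evolves identically under a fixed call sequence regardless of the number relation (a call $xy$ sets $S_x=S_y=S_x\cup S_y$, independent of $N$), so the protocol condition $\neg S^{\sigma}xy$ of $\LNS$ and the final ``all agents are experts'' both transfer verbatim. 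This is what lets the edge-addition lemmas (Lemmas~\ref{Lem:1Jan-AddEdge} and~\ref{Lem:2JanPlusEdge}) and the node-addition lemmas (Lemmas~\ref{lemma-reverse-tree-1edge} and~\ref{Lem:2JanPlusNode}) take care of re-inserting whatever we peel off.

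For the inductive step let $G$ be weakly connected, neither a bush nor a double bush. \textbf{Case 1: some non-loop edge $e=xy$ is not a bridge}, i.e.\ $G-e$ is still weakly connected (and still an initial gossip graph). If $G-e$ is neither a bush nor a double bush, the induction hypothesis gives a successful execution on $G-e$, which by the observation above is successful on $G=(G-e)+e$. If $G-e$ is a bush, Lemma~\ref{Lem:1Jan-AddEdge} applied to $G-e$ and $xy$ gives a successful execution on $G$; if $G-e$ is a double bush, Lemma~\ref{Lem:2JanPlusEdge} does the same.

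\textbf{Case 2: every non-loop edge is a bridge.} Then there are no two-way edges (a pair $xy,yx$ would make $xy$ a non-bridge) and no undirected cycle (an edge on it would not be a bridge), so together with weak connectedness the underlying undirected graph of $N\setminus I_{\agents}$ is a tree. If $G$ is a \emph{tree} in the sense of the definition preceding Definition~\ref{reverse--tree}, then, since $G$ is not a bush, its root has at most one predecessor, so Lemma~\ref{fact-reverse-tree}.\ref{twotwotwo} supplies a successful execution (if the root has no predecessor, $G$ is a single node, trivially). Otherwise $G$ is tree-shaped but not a rooted tree, so some node $c$ has out-degree at least $2$ in $N\setminus I_{\agents}$ --- exactly the configuration that, in a double bush, makes $c$ the common leaf joining the two roots. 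Since out-degree $\ge 2$ forces $|\agents|\ge 3$, the underlying tree has a leaf $x$; let $t$ be its unique neighbour and put $G'=G-x$, a smaller weakly connected graph. If $G'$ is a bush, then $G=G'+x+e$ with $e$ the edge between $t$ and $x$, and Lemma~\ref{lemma-reverse-tree-1edge} (with the assumption that $G$ is not a bush) yields a successful execution on $G$; if $G'$ is a double bush, Lemma~\ref{Lem:2JanPlusNode} does likewise (using that $G$ is not a double bush).

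The remaining sub-case --- $G'=G-x$ is neither a bush nor a double bush --- is the crux, and I expect it to be the main obstacle. The induction hypothesis gives a successful execution $\sigma'$ on $G'$, but, unlike in Case~1, one cannot simply run $\sigma'$ on $G$: an agent who already knows every secret may make no further $\LNS$-call, so if the gateway $t$ of the leaf $x$ becomes a full expert before it exchanges with $x$, then either $x$ never becomes an expert or $x$'s secret never reaches the rest of the graph. The fix is to realize $\sigma'$ as a \emph{bottom-up call sequence} (Definition~\ref{lexicographic-calls}) on the relevant tree; this gives control over the order in which agents join the frontline and become experts, and lets us splice the call(s) involving $x$ into $\sigma'$ at the precise stage at which $t$ still has secrets left to learn, so that $x$'s secret is carried along by the subsequent calls. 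Managing the various shapes such a reduct $G'$ can take, and the tightest instance of this scheduling where essentially no slack is left, is what the additional Lemma~\ref{lem:Case-2.2.3.2} is for. Once all cases are discharged, $\LNS$ is weakly successful on every weakly connected $G$ that is neither a bush nor a double bush; the edge cases are routine once the lift-along-edge-addition observation is in hand, and all the genuine difficulty lives in this final splicing argument.
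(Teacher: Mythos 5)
Your Case 1 (non-bridge edge) and the rooted-tree subcase of Case 2 coincide with the paper's argument, and your lifting observation (a successful $\LNS$-sequence on $H$ remains permitted and successful on $H+e$, since $S$ evolves independently of $N$) is sound and is indeed what makes the edge/node re-insertion lemmas usable. The genuine gap is the subcase you yourself call ``the crux'': you never prove it. Removing an arbitrary leaf $x$ and promising to ``splice the call(s) involving $x$'' into a bottom-up realization of the inductively obtained $\sigma'$ is not an argument, and your appeal to Lemma~\ref{lem:Case-2.2.3.2} is mis-aimed: that lemma is not a scheduling device for inserting a pendant node into an IH sequence; it is a self-contained construction (its own induction on the number of source nodes, using graph abstractions and bottom-up sequences) applied \emph{directly to $G$}, and only under specific hypotheses --- all edges bridges, at least one source, every node of degree $1$ a ghost, and $G$ not a double bush --- which you neither isolate nor verify.

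Moreover, treating all leaves uniformly makes you miss that for non-ghost leaves no splicing is needed, and this is exactly how one reduces to the lemma's hypotheses. Let $y$ be a degree-$1$ node with unique neighbour $x$. If the pendant edge points into $y$ ($\indeg(y)=1$, $\outdeg(y)=0$), run the IH sequence $\sigma$ on $G-y-xy$ and append a call to $y$ from every other agent: each such caller knows $x$'s secret, hence $y$'s number by Lemma~\ref{Thm:Incl}, and does not know $y$'s secret, so these calls are permitted and make everyone expert. If the edge is $yx$ with $\outdeg(x)\geq 1$, the sequence $yx;\sigma;yt$ works, where $t$ is a successor of $x$ (after $yx$, agent $x$ carries $y$'s secret through $\sigma$, and $y$ finishes by calling the now-expert $t$). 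So your claim that ``one cannot simply run $\sigma'$'' is wrong for these orientations; only when \emph{every} degree-$1$ node is a ghost (its single edge points out, to a terminal) does the difficulty you describe arise, and there the correct move is not to apply the induction hypothesis to $G-x$ at all, but to invoke Lemma~\ref{lem:Case-2.2.3.2} on $G$ itself, whose hypotheses are then exactly met (weakly connected, all edges bridges, a source exists since some node has out-degree at least $2$, all degree-$1$ nodes are ghosts, and $G$ is not a double bush by assumption). As written, your proof leaves this decisive case unproved, even though the auxiliary lemmas you would need are available in the paper.
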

\begin{proof}
The proof is by strong induction on the sum of the number $n$ of nodes and the number $m$ of edges (not counting loops) of such gossip graphs $G$. If $n+m = 1$, $G$ is a single agent gossip graph, on which the empty sequence is successful. Let now $G$ be a gossip graph with $n+m = k+1$ nodes and edges, and assume the proposition to be proved for $l \leq k$ nodes and edges (call such an inductive case $\phi(l)$).

If $G$ has an edge $e$ which is not a bridge, then $G-e$ is weakly connected. In case this $G-e$ is neither a bush nor a double bush, then by inductive hypothesis (for $k$) there is a successful $\LNS$-sequence $\sigma$ on $G-e$. This $\sigma$ is also successful on $G$. In case $G-e$ is a bush we apply Lemma \ref{Lem:1Jan-AddEdge}, and in case $G-e$ is a double bush we apply Lemma \ref{Lem:2JanPlusEdge}, in order to obtain a successful $\LNS$-sequence $\sigma$ on $(G-e)+e=G$.

Let now every edge in $G$ be a bridge. If $G$ does not have a node with out-degree greater than 1, then $G$ is a tree. The input-degree of the root of this tree must be 1, because we assumed that $G$ is not a bush. By Lemma \ref{fact-reverse-tree}.\ref{twotwotwo} there must be a successful $\LNS$-sequence on $G$. However, if $G$ has a node $s$ with out-degree greater than 1, then, since the undirected graph underlying $G$ is a tree (because every edge is a bridge), there are at least two nodes with inout-degree 1. Let $y$ be such a node with inout-degree 1. We need to consider three cases (pictured from left to right, itemized from top to bottom):

\begin{center}
\begin{tabular}{c@{\quad}c@{\quad}c}
\begin{tikzpicture}
\draw (0.45,.5) circle (1.5cm and 1cm);
\node (c) at (0,0) {$s$};
\node (d) at (.7,.7) {\textbullet};
\node (b) at (-.7,.7) {\textbullet};
\node (x) at (1.40,0) {$x$};
\node (y) at (3.1,.5) {$y$};
\draw[dashed, ->] (x) -- (y); 
\draw[dashed, ->] (c) -- (d); 
\draw[dashed, ->] (c) -- (b); 
\end{tikzpicture} & 
\begin{tikzpicture}
\draw (0.45,.5) circle (1.5cm and 1cm);
\node (c) at (0,0) {$s$};
\node (d) at (.7,.7) {\textbullet};
\node (b) at (-.7,.7) {\textbullet};
\node (x) at (1.40,0) {$x$};
\node (t) at (1.1,1) {$t$};
\node (y) at (3.1,.5) {$y$};
\draw[dashed, <-] (x) -- (y); 
\draw[dashed, ->] (x) -- (t); 
\draw[dashed, ->] (c) -- (d); 
\draw[dashed, ->] (c) -- (b); 
\end{tikzpicture}
&
\begin{tikzpicture}
\draw (0.45,.5) circle (1.5cm and 1cm);
\node (c) at (0,0) {$s$};
\node (d) at (.7,.7) {\textbullet};
\node (b) at (-.7,.7) {\textbullet};
\node (x) at (1.40,0) {$x$};
\node (t) at (1.1,1) {$t$};
\node (y) at (3.1,.5) {$y$};
\draw[dashed, <-] (x) -- (y); 
\draw[dashed, <-] (x) -- (t); 
\draw[dashed, ->] (c) -- (d); 
\draw[dashed, ->] (c) -- (b); 
\end{tikzpicture}
\end{tabular}
\end{center}

\begin{itemize}
\item If there is an edge $xy$ with $\outdeg(y)=0$ and $\indeg(y)=1$, consider $G-y-xy$. If $G-y-xy$ is neither a bush nor a double bush, then by inductive hypothesis for $k-1$ there is a successful $\LNS$-sequence $\sigma$ on $G-y-xy$. Let now $\tau$ be a sequence where all agents in $G$ except $y$ call $y$. We then have that and $\sigma;\tau$ is a successful $\LNS$ sequence on $G$. If however $G-y-xy$ is a bush, then by Lemma \ref{lemma-reverse-tree-1edge} we obtain a successful $\LNS$ sequence on $G$; and if $G-x-xy$ is a double bush, we obtain this by Lemma \ref{Lem:2JanPlusNode}. 

\item Let there be an edge $yx$ with $\indeg(y)=0$ and $\outdeg(y) = 1$ and $\outdeg(x)\geq 1$. If $G-y-yx$ is neither a bush nor a double bush, then by inductive hypothesis for $k-1$ there is a successful $\LNS$-sequence $\sigma$ on $G-y-xy$, which can be extended into a sequence $yx;\sigma;yt$ that is $\LNS$-successful on $G$, where $t$ is the successor of $x$ (see the figure above). Otherwise we proceed as before by applying Lemma \ref{lemma-reverse-tree-1edge} or Lemma \ref{Lem:2JanPlusNode}.

\item The remaining and more complex case is where, for any node $y$, if $\iodegr(y) = 1$, then: $\outdeg(y)=1$ and the successor $x$ of $y$ is a terminal; as above on the right. Provided that $x$ is terminal, since $G$ is not a bush, neither will be $G-y-yx$; similarly we conclude that $G-y-yx$ is not a double bush. So we can apply the inductive hypothesis to obtain some sequence $\sigma$ successful on $G-y-yx$. But the problem is to modify $\sigma$ into a solution for $G$. This case is proved in Lemma~\ref{lem:Case-2.2.3.2} which is found in the Appendix.
\end{itemize} \vspace{-.7cm}
\end{proof}

This finally brings us to our main result:
\begin{theorem}\label{conj-LNS}
$\LNS$ is weakly successful on a weakly connected gossip graph $G$ if{f} $G$ is neither a bush nor a double bush.
\end{theorem}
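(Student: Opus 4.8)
The plan is to assemble the statement directly from the three characterization results already established for $\LNS$ in this section, splitting along the biconditional.

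For the direction ($\Leftarrow$), suppose $G$ is weakly connected and is neither a bush nor a double bush. Then Proposition~\ref{theo.takki} applies verbatim and yields a successful $\LNS$-execution on $G$, i.e.~$\LNS$ is weakly successful on $G$. No extra argument is needed here; all the combinatorial case analysis (the bottom-up call sequence, the edge/node-addition lemmas, and the ghost/source machinery of Lemma~\ref{lem:Case-2.2.3.2}) has been absorbed into that proposition.

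For the direction ($\Rightarrow$), I argue by contraposition: assume $G$ is a bush or a double bush, and show $\LNS$ is not weakly successful on $G$. If $G$ is a bush, then since a bush (being a directed rooted tree on $N \setminus I_\agents$ whose underlying undirected graph is connected) is weakly connected, Proposition~\ref{Thm:reversebush} gives that $\LNS$ is not weakly successful on $G$. If $G$ is a double bush, then Proposition~\ref{Lem:2Jan-GeneralCase} gives the same conclusion directly (it does not even require the weak-connectedness hypothesis to be invoked separately, since a double bush is weakly connected by construction). Combining the two directions yields the biconditional.

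I do not expect any genuine obstacle: the only points worth a sentence of care are (i) noting explicitly that bushes and double bushes are themselves weakly connected, so the ambient hypothesis ``$G$ weakly connected'' in the theorem is consistent with both sides and nothing is lost, and (ii) keeping straight that ``not weakly successful'' is the same as ``unsuccessful'' (as remarked after Definition~\ref{def-successfulterminating}), so that Propositions~\ref{Thm:reversebush} and \ref{Lem:2Jan-GeneralCase}, phrased as ``not weakly successful'', plug in cleanly. The proof is therefore a two-line corollary of Propositions~\ref{Thm:reversebush}, \ref{Lem:2Jan-GeneralCase}, and \ref{theo.takki}.
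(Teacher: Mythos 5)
Your proposal is correct and coincides with the paper's own proof, which is exactly the two-line assembly of Proposition~\ref{Thm:reversebush} (bushes), Proposition~\ref{Lem:2Jan-GeneralCase} (double bushes), and Proposition~\ref{theo.takki} (all other weakly connected graphs). Your added remarks that bushes and double bushes are weakly connected and that ``unsuccessful'' means ``not weakly successful'' are accurate but not needed beyond what the paper already records.
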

\begin{proof}
Directly, from the Propositions \ref{Thm:reversebush} (page \pageref{Thm:reversebush}), \ref{Lem:2Jan-GeneralCase} (page \pageref{Lem:2Jan-GeneralCase}), and \ref{theo.takki} (page \pageref{theo.takki}).
\end{proof}

\section{Conclusions and further research} \label{section.other}

\paragraph*{Conclusions}

We investigated distributed dynamic gossip protocols, where not only secrets are exchanged but also telephone numbers, such that the exchange of numbers in a call may expand the gossip graph. We considered the protocols $\SFS$ (any possible call), $\TOK$ (if you just called you are not permitted to call), $\SPI$ (if you were called you are not permitted to call), $\CO$ (if we called each other we may not call each other again), $\wCO$ (if you called me you may not call me again), and $\LNS$ (you are only permitted to call me if you do not know my secret). We characterized these protocols in terms of the class of gossip graphs where they terminate successfully. There are three such termination conditions: strong success (all protocol executions terminate with all agents knowing all secrets), fair success (strong success for fairly scheduled sequences), and weak success (at least one execution terminates). The results are as follows.

\begin{center}
\begin{tabular}{|c|c|c|c|}
\hline
 & $\SFS, \: \TOK, \: \SPI$ & $\CO, \: \wCO$ & $\LNS$ \\
\hline 
{\em strong/fair} &  weakly connected & weakly connected  & sun  \\
{\em success} & {\em Theorem \ref{Thm-SFS}, p.~\pageref{Thm-SFS}} & {\em Theorem \ref{calloncetheorem}, p.\ \pageref{calloncetheorem}} & {\em Theorem \ref{theorem.lnsstrong}, p.~\pageref{theorem.lnsstrong}} \\
\hline
{\em weak success} &  weakly connected &  weakly connected & no (bush or double bush) \\
& {\em Corollary \ref{cor.errare}, p.~\pageref{cor.errare}} & {\em Corollary \ref{cor.errare}, p.~\pageref{cor.errare}} & {\em Theorem \ref{conj-LNS}, p.~\pageref{conj-LNS}} \\
\hline
\end{tabular}
\end{center}

\paragraph*{Relevance} Our results may be considered relevant for different reasons. When defining a type of algorithm, such as the gossip protocol, it is satisfactory for a computer scientist (if not obligatory) to know what its termination conditions are. The interplay between very simple protocol conditions and rather complex emerging patterns on graphs, by their arbitrary iteration, may evoke a certain beauty for a combinatorial mathematician. Our results are clearly very remote from applications, but we hope networks researchers used to statistical methods may see scaling opportunities for such fairly novel, informed, gossip protocols. From a perspective of social network analysis, the difference between strong success, weak success, and lack of success can be interpreted as the assurance that the information dissemination process that you are about to enter may work out well no matter what, or has at least some chance to succeed if you watch your step carefully, or will utterly fail. In the last case, why even bother to start? But with a fair chance, you might as well give it a try.

\paragraph*{Further research} Various other distributed dynamic gossip protocols that we investigated, and that use the same protocol condition language, were all characterized by weakly connected graphs. We would be greatly interested in protocols characterized by truly different gossip graph topologies. One direction for further research are gossip protocols with protocol conditions that involve higher-order knowledge, for example: I will only call you if I know that you can inform me with a new secret in that call. Such a follow-up investigation, not focussed on characterization but on (a)synchronicity (absence or presence of a global clock), has appeared in \cite{DitmarschEPRS17}. Gossip protocols with unidirectional information change (push or pull, instead of pushpull), seem promising to find novel characterization graph topologies, as well as parallel gossip protocols with rounds of calls. 

\bibliographystyle{abbrv}
\bibliography{biblio,biblio2018}

\providecommand{\noopsort}[1]{}
\begin{thebibliography}{10}

\bibitem{apt-tark}
K.~Apt, D.~Grossi, and W.~van~der Hoek.
\newblock Epistemic protocols for distributed gossiping.
\newblock In {\em Proceedings of 15th TARK}, 2015.

\bibitem{maduka-ecai}
M.~Attamah, H.~van Ditmarsch, D.~Grossi, and W.~van~der Hoek.
\newblock Knowledge and gossip.
\newblock In {\em Proc.\ of 21st {ECAI}}, pages 21--26. {IOS} Press, 2014.

\bibitem{Attamah2017}
M.~Attamah, H.~van Ditmarsch, D.~Grossi, and W.~van~der Hoek.
\newblock The pleasure of gossip.
\newblock In C.~Ba{\c{s}}kent, L.~Moss, and R.~Ramanujam, editors, {\em Rohit
  Parikh on Logic, Language and Society}, pages 145--163. Springer, 2017.

\bibitem{BAKER1972191}
B.~Baker and R.~Shostak.
\newblock Gossips and telephones.
\newblock {\em Discrete Mathematics}, 2(3):191 -- 193, 1972.

\bibitem{boydSteele1979}
D.~W. Boyd and J.~M. Steele.
\newblock Random exchanges of information.
\newblock {\em Journal of Applied Probability}, pages 657--661, 1979.

\bibitem{bumby:1981}
R.~Bumby.
\newblock A problem with telephones.
\newblock {\em {SIAM} Journal on Algebraic and Discrete Methods}, 2(1):13--18,
  1981.

\bibitem{doerr}
B.~Doerr, T.~Friedrich, and T.~Sauerwald.
\newblock Quasirandom rumor spreading.
\newblock {\em ACM Trans. Algorithms}, 11(2):1--35, 2014.

\bibitem{eugster}
P.~T. Eugster, R.~Guerraoui, A.~Kermarrec, and L.~Massouli{\'{e}}.
\newblock Epidemic information dissemination in distributed systems.
\newblock {\em {IEEE} Computer}, 37(5):60--67, 2004.

\bibitem{golumbic}
M.~C. Golumbic.
\newblock The general gossip problem.
\newblock {\em Technical Report IBM Research Report RC4977, IBM}, 1974.

\bibitem{Haeupler15}
B.~Haeupler.
\newblock Simple, fast and deterministic gossip and rumor spreading.
\newblock {\em Journal of the {ACM}}, 62(6):47, 2015.

\bibitem{rajaraman}
B.~Haeupler, G.~Pandurangan, D.~Peleg, R.~Rajaraman, and Z.~Sun.
\newblock Discovery through gossip.
\newblock {\em Random Struct. Algorithms}, 48(3):565--587, 2016.

\bibitem{haigh81}
J.~Haigh.
\newblock Random exchanges of information.
\newblock {\em Journal of Applied Probability}, pages 743--746, 1981.

\bibitem{hajnal:1972}
A.~Hajnal, E.~C. Milner, and E.~Szemeredi.
\newblock A cure for the telephone problem.
\newblock {\em Canadian Mathematical Bulletin}, 15(3):447--450, 1972.

\bibitem{harary}
F.~Harary and A.~J. Schwenk.
\newblock The communication problem on graphs and digraphs.
\newblock {\em Journal of Franklin Institute}, 297:491--495, 1974.

\bibitem{lewin}
M.~Harchol{-}Balter, F.~T. Leighton, and D.~Lewin.
\newblock Resource discovery in distributed networks.
\newblock In {\em Proc.\ of 8th {PODC}}, pages 229--237, 1999.

\bibitem{hedetniemietal:1988}
S.~Hedetniemi, S.~Hedetniemi, and A.~Liestman.
\newblock A survey of gossiping and broadcasting in communication networks.
\newblock {\em Networks}, 18:319--349, 1988.

\bibitem{hromkovicetal:2005}
J.~Hromkovic, R.~Klasing, A.~Pelc, P.~Ruzicka, and W.~Unger.
\newblock {\em Dissemination of Information in Communication Networks:
  Broadcasting, Gossiping, Leader Election, and Fault-Tolerance}.
\newblock Springer, 2005.

\bibitem{hurkens:2000}
C.~Hurkens.
\newblock Spreading gossip efficiently.
\newblock {\em Nieuw Archief voor Wiskunde}, 5(1):208--210, 2000.

\bibitem{kermarrecetal:2007}
A.-M. Kermarrec and M.~van Steen.
\newblock Gossiping in distributed systems.
\newblock {\em SIGOPS Oper. Syst. Rev.}, 41(5):2--7, 2007.

\bibitem{knoedel:1975}
W.~Kn{\"o}del.
\newblock New gossips and telephones.
\newblock {\em Discrete Mathematics}, 13:95, 1975.

\bibitem{moon72}
J.~Moon.
\newblock Random exchanges of information.
\newblock {\em Nieuw Archief voor Wiskunde}, 3(20):246--249, 1972.

\bibitem{tijdeman:1971}
R.~Tijdeman.
\newblock On a telephone problem.
\newblock {\em Nieuw Archief voor Wiskunde}, 3(19):188--192, 1971.

\bibitem{DitmarschKS17}
H.~van Ditmarsch, I.~Kokkinis, and A.~Stockmarr.
\newblock Reachability and expectation in gossiping.
\newblock In B.~An, A.~Bazzan, J.~Leite, S.~Villata, and L.~van~der Torre,
  editors, {\em Proc.\ of the 20th {PRIMA}}, pages 93--109. Springer, 2017.
\newblock LNCS 10621.

\bibitem{DitmarschEPRS17}
H.~van Ditmarsch, J.~van Eijck, P.~Pardo, R.~Ramezanian, and
  F.~Schwarzentruber.
\newblock Epistemic protocols for dynamic gossip.
\newblock {\em J. Applied Logic}, 20:1--31, 2017.

\end{thebibliography}

\section*{Appendix}

We introduce some more terminology, to be used in Lemma \ref{lem:Case-2.2.3.2}. A {\em source} is a node with $\outdeg(s) \geq 2$. An {\em initial source} is a source that is a minimal node. A \emph{$t$-ghost} (or \emph{ghost}) is a node $x$ with $\indeg(x) = 0$, $\outdeg(x) = 1$, and $Nxt$ for some terminal node $t$. We introduce some notation for a path $x_1,\dots,x_n$ as $[x_1,x_n]$; and where $x \in [x_1,x_n]$ means $x \in \{x_1,\dots,x_n\}$. If we want to exclude the first node in the path we write $(x_1,x_n]$, and similarly for $[x_1,x_n)$ and for $(x_1,x_n)$ (unless the latter causes ambiguity with a pair in a relation). A {\em lonely path} is a path $[x_1,x_n]$ such that for all $y \in (x_1,x_n)$, the in-degree and out-degree of $y$ is 1. In other words, in a lonely path there is no branching in or branching out, except maybe at the first or at the last node.

Let $G = (\agents,N,S)$ be a gossip graph, $B \subseteq \agents$ and $x \notin \agents$. The $B$-\emph{abstraction} of $G$ is the graph $G' = (\agents', N', S')$ where: $\agents' = (\agents \setminus B)+x$; $N'yz$ iff ($Nyz$, or $x=y$ and $\exists w \in B$ such that $Nwz$, or $x=z$ and $\exists w \in B$ such that $Nyw$, or $x = y = z$); and similarly for $S'$. Informally, the $B$-abstraction replaces all nodes in $B$ by the single node $x$, and any edge from a node in $B$ to a node in $A \setminus B$ by an edge from $x$ to that node in $A \setminus B$, and vice versa.

\begin{lemma}\label{lem:Case-2.2.3.2}
Let $G$ be a weakly connected gossip graph where all edges are bridges, with at least one source node, wherein any node with inout-degree 1 is a ghost, and that is not a double bush. Then $G$ has a successful $\LNS$ sequence.
\end{lemma}
\begin{proof}
The proof is by induction on the number of source nodes. There are two base cases: for 1 source node, and for 2 source nodes in a particular configuration. The inductive case applies to 2 source nodes not in that configuration and to 3 or more source nodes. 

{\bf One source node} \ \ The treatment for a single source with two successors is different from the treatment for a single source with more than two successors. Figure \ref{fig.onesource} illustrates the two cases. Terminals are named $t_{1}, t_{2}, \ldots, t_{n}$ and the source is named $s$. 

\begin{figure}
\begin{tabular}{cc}
\fbox{
\begin{tikzpicture}
\node (empty) at (-2,4.645) {\quad};
\draw [fill=lightgray] (-3.5,0) circle (.25cm);
\draw[-,thick,fill=white,white] (-4,-.15) -- (-4,.1) -- (-3,.1) -- (-3, -.15) -- (-4,-.15);
\draw [fill=lightgray] (-4.5,3) circle (.25cm);
\draw[-,thick,fill=white,white] (-5,2.85) -- (-5,3.1) -- (-4,3.1) -- (-4, 2.85) -- (-5,2.85);
\draw [fill=lightgray] (-2.75,3) circle (.25cm);
\draw[-,thick,fill=white,white] (-3.25,2.85) -- (-3.25,3.1) -- (-2.25,3.1) -- (-2.25, 2.85) -- (-3.25,2.85);
\draw[-,thick,fill=white,white] (-4.35,.85) -- (-4.35,1.1) -- (-3.35,1.1) -- (-3.35, .85) -- (-4.35,.85);
\node (s1) at (-3.5,0) {\footnotesize source $s$};
\node (terminal1) at (-4.5, 3) {\footnotesize $t_1$};
\node (terminal1b) at (-2.75, 3) {\footnotesize $t_2$};
\node (ghost1) at (-6,2) {\footnotesize $t_1\mbox{-ghost}$};
\node (ghost1b) at (-5.5,4) {\footnotesize $t_1\mbox{-ghost}$};
\node (ghost1c) at (-2,4) {\footnotesize $t_2\mbox{-ghost}$};
\node (ghost1d) at (-1.5,2) {\footnotesize $t_2\mbox{-ghost}$};
\node at (-4.25,1.7) {\footnotesize path $1$};
\node at (-2.75,1.7) {\footnotesize path $2$};
\path [->, dashed]
(s1) edge node [left] {} (terminal1) 
(s1) edge node [left] {} (terminal1b) 
(ghost1) edge [bend left] node [left] {} (terminal1) 
(ghost1b) edge [bend left] node [left] {} (terminal1)
(ghost1c) edge [bend right] node [left] {} (terminal1b)
(ghost1d) edge [bend right] node [left] {} (terminal1b)
;
\end{tikzpicture} 
}
&
\fbox{
\begin{tikzpicture}
\draw [fill=lightgray] (-1,0) circle (.25cm);
\draw[-,thick,fill=white,white] (-1.5,-.15) -- (-1.5,.1) -- (-.5,.1) -- (-.5, -.15) -- (-1.5,-.15);
\draw [fill=lightgray] (-4.5,3) circle (.25cm);
\draw[-,thick,fill=white,white] (-5,2.85) -- (-5,3.1) -- (-4,3.1) -- (-4, 2.85) -- (-5,2.85);
\draw [fill=lightgray] (-2.75,3) circle (.25cm);
\draw[-,thick,fill=white,white] (-3.25,2.85) -- (-3.25,3.1) -- (-2.25,3.1) -- (-2.25, 2.85) -- (-3.25,2.85);
\draw [fill=lightgray] (-1,3) circle (.25cm);
\draw[-,thick,fill=white,white] (-1.5,2.85) -- (-1.5,3.1) -- (-.5,3.1) -- (-.5, 2.85) -- (-1.5,2.85);
\draw [fill=lightgray] (1,3) circle (.25cm);
\draw[-,thick,fill=white,white] (1.5,2.85) -- (1.5,3.1) -- (.5,3.1) -- (.5, 2.85) -- (1.5,2.85);

\node (s1) at (-1,0) {\footnotesize source $s$};
\node (terminal1) at (-4.5, 3) {\footnotesize $t_1$};
\node (terminal1b) at (-2.75, 3) {\footnotesize $t_2$};
\node (sharedterminal) at (-1, 3) {\footnotesize $\cdots$}; 
\node (sharedterminalb) at (1, 3) {\footnotesize $t_n$}; 

\node (xx) at (-5.5,1.6) {\footnotesize $x$};
\node (ghost1) at (-5.5,2) {\footnotesize $t_{1}\mbox{-ghost}$};
\node (ghost1b) at (-5.5,4) {\footnotesize $t_{1}\mbox{-ghost}$};
\node (ghost1c) at (-4.55,4.5) {\footnotesize $t_{2}\mbox{-ghost}$};
\node (ghost1d) at (-3,4.5) {\footnotesize $t_{2}\mbox{-ghost}$};

\node (sharedghost1b) at (-1.5,4.5) {\footnotesize $t_{i}\mbox{-ghost}$};
\node (sharedghost1c) at (0,4.5) {\footnotesize $t_{i}\mbox{-ghost}$};
\node (sharedghost1d) at (1.5,4.5) {\footnotesize $t_{n}\mbox{-ghost}$};

\node at (-3.9,1.5) {\footnotesize path 1};
\node at (-2.5,1.5) {\footnotesize path 2};
\node at (.9,1.5) {\footnotesize path $n$};

\path [->, dashed]
(s1) edge [bend left] node [left] {} (terminal1) 
(s1) edge [bend left] node [left] {} (terminal1b) 
(s1) edge [bend right] node [left] {} (sharedterminal)
(s1) edge [bend right] node [left] {} (sharedterminalb) 

(ghost1) edge [bend left] node [left] {} (terminal1) 
(ghost1b) edge [bend left] node [left] {} (terminal1)
(ghost1c) edge [bend left] node [left] {} (terminal1b)
(ghost1d) edge [bend left] node [left] {} (terminal1b)

(sharedghost1b) edge [bend left] node [left] {} (sharedterminal)
(sharedghost1c) edge [bend left] node [left] {} (sharedterminal)
(sharedghost1d) edge [bend left] node [left] {} (sharedterminalb);
\end{tikzpicture} 
} 
\end{tabular}
\caption{Gossip graphs with 1 source node and with 2 resp.\ more than 2 successors}
\label{fig.onesource}
\end{figure}
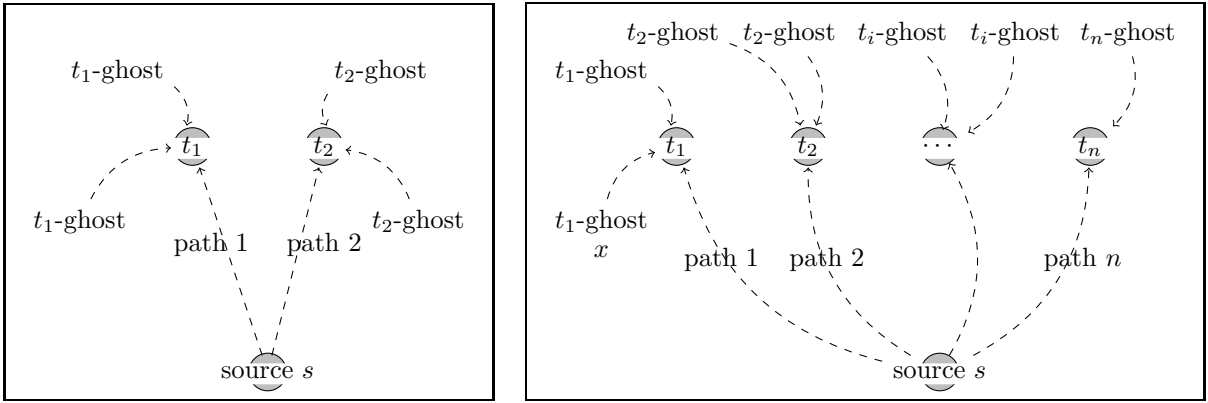

If $\outdeg(s)=2$ at least one path must have length 2, because otherwise the graph is a double bush; say this is on the path $[s,t_1]$. Note that $(s,t_1)$ is then non-empty; this plays a role below. The following call sequence is $\LNS$-permitted and successful on $G$:
\begin{itemize}
\item a maximal bottom-up call sequence for $[s, t_{2})$;
\item a maximal bottom-up call sequence for $[s, t_{1})$;
\item all agents in path $(s,t_{1})$ call $t_{1}$;
\item each $t_{1}$-ghost calls $t_{1}$;
\item call $st_{2}$; each $t_{2}$-ghost calls $t_{2}$;
\item call $t_{1} t_{2}$; all agents in path $(s, t_{1})$ and all $t_{1}$-ghosts call $t_{2}$;
\item all agents in path $(s,t_{2})$ call $t_{2}$;
\item all $t_2$-ghosts call $t_1$; call $s t_{1}$.
\end{itemize}
If $\outdeg(s)\geq 3$, fix $x$ to be some $t_1$-ghost. The following is $\LNS$-successful on $G$: 
\begin{itemize}
\item a maximal bottom-up call sequence for path $[s,t_2)$;
\item a maximal bottom-up call sequence for path $[s,t_1]$;
\item each $t_1$-ghost calls $t_1$;
\item call $st_{2}$; each $t_2$-ghost calls $t_2$;
\item for $3 \leq i \leq n$: $t_1$ calls any agent along the path $(s,t_{i}]$; each $t_i$-ghost calls $t_i$; $x$ calls any agent along the path $(s,t_{i}]$;
\item $t_n$ calls $t_2$; all agents except $s$, $t_n$, $t_2$, and $t_2$-ghosts call $t_2$;
\item the source $s$ and each $t_2$-ghost call the successor of $s$ in path $n$.
\end{itemize}
{\bf Two source nodes with special condition} \ \  Given tree $G_1$ with root $t_1$ and tree $G_2$ and root $t_2$, consider gossip graph $G$ consisting of $G_1$ and $G_2$ plus a source $s_1$ that is connected by a link to $t_1$ and by a path to a terminal $t$, and a source $s_2$ that is connected by a link to $t_2$ and also by a path to $t$.

\begin{center}
\begin{tikzpicture}
\node (t1) at (-.75,.75) {\footnotesize $t_1$};
\node (s1) at (0,-.5) {\footnotesize $s_1$};
\node (t) at (.75,.75) {\footnotesize $t$};
\node (s2) at (1.5,-.5) {\footnotesize $s_2$};
\node (t2) at (2.25,.75) {\footnotesize $t_2$};
\node (G1) at (-1.2,-.75) {\footnotesize $G_1$};
\node (G2) at (2.7,-.75) {\footnotesize $G_2$};

\node (cdots1) at (0.23,-0.08) {\scriptsize \begin{rotate}{58}$\cdots$\end{rotate}}; 
\node (cdots2) at (1.38,-0.02) {\scriptsize \begin{rotate}{120}$\cdots$\end{rotate}};

\draw[-] (-2.05,-1.2) -- (-.75,-1.2) -- (-.75,.5) -- (-2.05,-1.2);
\draw[-] (2.25,.5) -- (2.25,-1.2) -- (3.55,-1.2) -- (2.25,.5);
\draw[dashed, ->] (s1) -- (t1); 
\draw[dashed, ->] (s1) -- (t); 
\draw[dashed, ->] (s2) -- (t); 
\draw[dashed, ->] (s2) -- (t2); 
\end{tikzpicture}
\end{center}
Let $\sigma^1$ and $\sigma^2$ be maximal bottom-up call sequences in,
respectively, $G_1$ and $G_2$. The sequence
\[ [s_1t]; [s_2t]; tt_1; s_{2}t_{2}; \sigma^1; \sigma^2; t_2t_1; [s_{1},t) t_{1}; [s_{2}, t) t_{1}; tt_{2};\tau^1; \tau^2 \]
is $\LNS$-successful on $G$, where $[s_1t]$ is a maximal bottom-up sequence for path $[s_1,t]$, and similarly for $[s_2t]$, where $[s_1,t)t_{1}$ is a sequence wherein each agent in $[s_1,t)$ calls $t_1$ (and $[s_2,t)t_{1}$ is defined similarly); finally, $\tau^1$ is the sequence such that all the agents in $G_1$ except $t_1$ call $t_2$; $\tau^2$ is the sequence that all the agents in $G_2$ except $t_2$ call $t_1$.

{\bf (At least two source nodes)} 

We first assume that there are two initial source nodes $s_1$ and $s_2$. Let $u$ be the node where the paths from $s_1$ and $s_2$ meet (so $[s_1,u]$ and $[s_2,u]$ are lonely paths---no branching in, no branching out). See also Figure \ref{alweereenfigure}.

\begin{figure}
\begin{center}
\begin{tabular}{@{\hspace{4mm}}c@{\hspace{20mm}}c@{\hspace{7mm}}}
\scalebox{.8}{
\begin{tikzpicture}[scale=.8]				%
\node (g-name) at (0, 3.36) {$G$};
\node (a) at (3,1.15) {\textbullet};
\node (a2) at (3, 0.3) {\textbullet};
\node (b) at (1,1) {$v$};
\node (c) at (1.7,0) {$s_1$};
\node (u) at (3,2) {$u$};
\node (e) at (4.3,0) {$s_2$};
\node (f) at (5,1) {\textbullet};
\node (g) at (2.2,2.5) {\textbullet};
\node (zz1) at (.75, 1.3) {\begin{rotate}{30} $\vdots$ \end{rotate}};
\node (zz2) at (3, 3.75) {$\vdots$};
\node (zz3) at (5.1, 1.3) {\begin{rotate}{-30} $\vdots$ \end{rotate}};
\node (zz4) at (3, 0) {$\vdots$};
\node (g2) at (3, 3.2) {\textbullet};
\draw[dashed, ->] (a) -- (u); 
\draw[dashed, ->] (a2) -- (a); 
\draw[dashed, ->] (c) -- (b); 
\draw[thick, dashed, ->] (c) to [bend right] (2.75,1.7); 
\draw[thick, dashed, ->] (e) to [bend left] (3.25,1.7); 
\draw[dashed, ->] (e) -- (f); 
\draw[dashed, ->] (u) -- (g2); 
\draw[dashed, ->] (g) to [bend left] (2.8, 2.3); 
\end{tikzpicture} 
}
&
\scalebox{.8}{
\begin{tikzpicture}[scale=.8]					%
\node (g-name) at (1, 3) {$G$};
\node (v) at (1,.5) {$v$};
\node (c) at (2,-0.5) {$s_1$};
\node (d) at (5,1) {$s_2$};
\node (e) at (4,2) {\textbullet};
\node (f) at (6,2) {\textbullet};
\node (h) at (6,0) {\textbullet};
\node (zz1) at (.75, .7) {\begin{rotate}{30} $\vdots$ \end{rotate}};
\node (zz2) at (3.8, 2.2) {\begin{rotate}{30} $\vdots$ \end{rotate}};
\node (zz3) at (6.1, 2.25) {\begin{rotate}{-30} $\vdots$ \end{rotate}};
\node (zz4) at (6.2, -0.5) {\begin{rotate}{30} $\vdots$ \end{rotate}};
\draw[dashed, ->] (c) -- (v); 
\draw[thick, dashed, ->] (c) to [bend right] (d); 
\draw[dashed, ->] (d) -- (e); 
\draw[dashed, ->] (d) -- (f); 
\draw[dashed, ->] (h) -- (d); 
\end{tikzpicture}
} \\
%
\hspace{.7cm} $\Downarrow$ &\hspace{1.6cm} $\Downarrow$ \\
\scalebox{.8}{
\begin{tikzpicture}[scale=.8]					%
\node (g-name) at (0, 3.56) {$H$};
\draw [fill=white] (3,2) circle (.28cm);
\node (a) at (3,1.15) {\textbullet};
\node (a2) at (3, 0.3) {\textbullet};
\node (b) at (1,2.5) {$v$};
\node (u) at (3,2) {$x$};
\node (f) at (5,2.5) {\textbullet};
\node (zz1) at (.75, 2.8) {\begin{rotate}{30} $\vdots$ \end{rotate}};
\node (zz2) at (3, 3.75) {$\vdots$};
\node (zz3) at (5.1, 2.8) {\begin{rotate}{-30} $\vdots$ \end{rotate}};
\node (zz4) at (3, 0) {$\vdots$};
\node (g2) at (3, 3.2) {\textbullet};
\draw[dashed, ->] (a) -- (u); 
\draw[dashed, ->] (a2) -- (a); 
\draw[dashed, ->] (u) -- (b); 
\draw[dashed, ->] (u) -- (f); 
\draw[dashed, ->] (u) -- (g2); 
\end{tikzpicture} 
}
&
\scalebox{.8}{
\begin{tikzpicture}[scale=.8]					%
\node (g-name) at (1, 3.2) {$H$};
\draw [fill=white] (5,1) circle (.28cm);
\node (v) at (3.5,1.25) {$v$};
\node (d) at (5,1) {$x$};
\node (e) at (4,2) {\textbullet};
\node (f) at (6,2) {\textbullet};
\node (h) at (6,0) {\textbullet};
\node (zz1) at (3.3, 1.45) {\begin{rotate}{30} $\vdots$ \end{rotate}};
\node (zz2) at (3.8, 2.2) {\begin{rotate}{30} $\vdots$ \end{rotate}};
\node (zz3) at (6.1, 2.25) {\begin{rotate}{-30} $\vdots$ \end{rotate}};
\node (zz4) at (6.2, -0.5) {\begin{rotate}{30} $\vdots$ \end{rotate}};
\draw[dashed, ->] (d) -- (v); 
\draw[dashed, ->] (d) -- (e); 
\draw[dashed, ->] (d) -- (f); 
\draw[dashed, ->] (h) -- (d); 
\end{tikzpicture}
}
\end{tabular}
\end{center}
\caption{On the left: two initial sources. On the right: an initial and a non-initial source}
\label{alweereenfigure}
\end{figure}
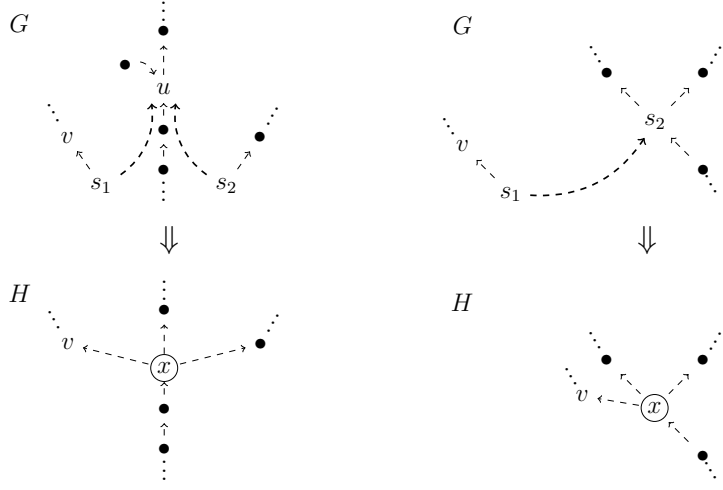

Let $\sigma^{1}$ and $\sigma^{2}$ be maximal bottom-up call sequences in, respectively, $[s_1, u]$ and $[s_2, u]$. If $u$ is a terminal, then let $\sigma^{3}$ be a call sequence from each $u$-ghost to $u$; otherwise $\sigma^{3} = \epsilon$. Consider the $([s_1,u]\union[s_2,u]\union\{ z \mid z \mbox{ is a $u$-ghost}\})$-abstraction of $G$ and let $x$ be the new node generated by this abstraction. Call this graph $G'$. Graph $G'$ has one less source. If $G'$ is a double bush, the base case {\bf (Two source nodes with special condition)} applies and there is a successful $\LNS$-sequence $\sigma$ on $G'$. Otherwise, there is such a $\sigma$ on $G'$ by inductive hypothesis.

Let now $v$ be a successor of $s_1$ that is not in the path $[s_1, u]$. Let $\tau$ be obtained from $\sigma$ by replacing each occurrence of $x$ with $u$, and let $\tau'$ be a call sequence where all agents in $[s_1,u) \union [s_2,u)$, and also all $u$-ghosts, call $v$ (that is, all the agents different from $u$ removed in the abstraction call $v$.)

Then a successful $\LNS$-sequence on $G$ is $\sigma^{1}; \sigma^{2}; \sigma^{3}; \tau; \tau'$.

We can further justify why it is successful. After $\sigma^{1}; \sigma^{2}; \sigma^{3}$, agent $u$ knows the secret of all agents in $[s_1,u] \union [s_2,u] \union \{z \mid z \mbox{ is a $u$-ghost}\}$. After $\tau$, all agents in $G$ are experts except those in the abstracted part that are not $u$. After $\tau'$, those other agents also become experts.

If there are no two initial sources, then, given that there are at least two sources, there must be two sources $s_1, s_2$ such that $s_1$ is an initial source and $s_1$ is connected to $s_2$ by a lonely path (see Figure \ref{alweereenfigure}). Let $\sigma^{1}$ be a maximal bottom-up call sequence in $[s_1, s_2]$. Consider the $[s_1,s_2]$-abstraction of $G$ and let $x$ be the new node. Let this graph be $H$. Since $H$ has one less source node (and since it is not a double bush), there is an $\LNS$-sequence $\sigma$ successful on $H$. Let $\tau$ be the call sequence obtained from $\sigma$ where we replace each occurrence of $x$ with $s_2$, and let $\tau'$ be the call sequence where every agent in $[s_1, s_2)$ calls a successor $v$ of $s_1$ not in $[s_1, s_2]$. Similarly to the previous case, it can be easily shown that $\sigma^{1}; \tau; \tau'$ is an $\LNS$-sequence successful on $G$. 
\end{proof}

\end{document}